\newcommand{\remove}[1]{}
\newtheorem{thm}{Theorem}
\newtheorem{cor}{Corollary}
\newtheorem{lem}{Lemma}
\newtheorem{prop}{Proposition}
\newtheorem{rem}{Remark}
\newtheorem{claim}{Claim}
\newtheorem{condition}{Condition}
\newcommand{\qed}{\hfill \ensuremath{\Box}}
\begin{document}

\title{Optimal Capacity Relay Node Placement in a Multi-hop Wireless Network on a Line\thanks{This paper is a substantial 
extension of the workshop paper \cite{chattopadhyay-etal12optimal-capacity-relay-placement-line}. 
The work was supported by the Department of Science and Technology (DST), India, through the J.C. Bose 
Fellowship and a project funded  by the Department of Electronics and Information Technology, India, and 
the National Science Foundation, 
USA, titled ``Wireless Sensor Networks for Protecting Wildlife and Humans in Forests."}\thanks{This work was done during the 
period when M. Coupechoux was a Visiting Scientist in the ECE Deparment, IISc, Bangalore.}}

\newcounter{one}
\setcounter{one}{1}
\newcounter{two}
\setcounter{two}{2}
\newcounter{three}
\setcounter{three}{3}

\author{
Arpan~Chattopadhyay$^\fnsymbol{one}$, Abhishek~Sinha$^\fnsymbol{three}$, Marceau~Coupechoux$^\fnsymbol{two}$, and Anurag~Kumar$^\fnsymbol{one}$\\
\parbox{0.49\textwidth}{\centering $^\fnsymbol{one}$Dept. of ECE, Indian Institute of Science\\
Bangalore 560012, India\\
arpanc.ju@gmail.com, anurag@ece.iisc.ernet.in}
\hfill
\parbox{0.49\textwidth}{\centering $^\fnsymbol{two}$Telecom ParisTech and CNRS LTCI \\
Dept. Informatique et R\'eseaux\\
23, avenue d'Italie, 75013 Paris, France\\
marceau.coupechoux@telecom-paristech.fr}
\hfill
\parbox{0.59\textwidth}{\centering $^\fnsymbol{three}$Laboratory for Information and Decision Systems (LIDS)\\
Massachusetts Institute of Technology, Cambridge, MA 02139\\
sinhaa@mit.edu}
}

\maketitle
\thispagestyle{empty}

\begin{abstract}
We use information theoretic achievable rate formulas for the multi-relay channel to study the 
problem of optimal placement of relay nodes along 
the straight line joining a source node and a sink node. The achievable rate formulas that 
we use are for full-duplex radios at 
the relays and decode-and-forward relaying. For the single relay case, and individual power constraints at 
the source node and the relay node, we provide explicit formulas for the optimal relay location and the 
optimal power allocation to the source-relay 
channel, for the exponential and the power-law path-loss channel models. 
For the multiple relay case, we consider exponential path-loss and a total power constraint over the source and the relays, 
and derive an optimization problem, the solution of which provides the optimal relay locations. Numerical results 
suggest that at low attenuation the relays are mostly clustered close to the source in order to be able 
to cooperate among themselves, whereas at high attenuation they are uniformly placed and work as repeaters. 

The structure of the optimal power allocation for a given placement of the nodes, then motivates us to formulate 
the problem of impromptu (``as-you-go") placement of relays along a line of exponentially distributed length, with exponential 
path-loss, so as to minimize a cost function that is additive over hops. The hop cost trades off a capacity limiting term, 
motivated from the optimal power allocation solution, against the cost of adding a relay node. We formulate the problem 
as a total cost Markov decision process, for which we prove results for the value function, and provide insights into 
the placement policy via numerical exploration.  
\end{abstract}

\vspace{-3.0mm}
\section{Introduction}
\label{Introduction}
\vspace{-1.0mm}

Wireless interconnection of mobile user devices (such as smart 
phones or mobile computers) or wireless sensors to the wireline 
communication infrastructure is an important requirement. 
These are battery operated, resource constrained devices. Hence, 
due to the physical placement of these devices, or due to the channel 
conditions, a direct one-hop link to the infrastructure ``base-station'' 
might not be feasible. In such situations, other nodes could 
serve as \emph{relays} in order to realize a multi-hop path 
between the source device and the infrastructure. In the cellular 
context, these relays might themselves be other users' devices. 
In the wireless sensor network context, the relays could be other 
wireless sensors or battery operated radio routers deployed 
specifically as relays. In either case, the relays are also 
resource constrained and a cost might be involved in engaging 
or placing them. Hence, there arises the problem of \emph{optimal relay placement}. 
Such an optimal relay placement problem involves the joint optimization 
of node placement and of the operation of the resulting network, 
where by ``operation" we mean activities such as transmission 
scheduling, power allocation, and channel coding.

\begin{figure}[t!]
\centering
\includegraphics[scale=0.32]{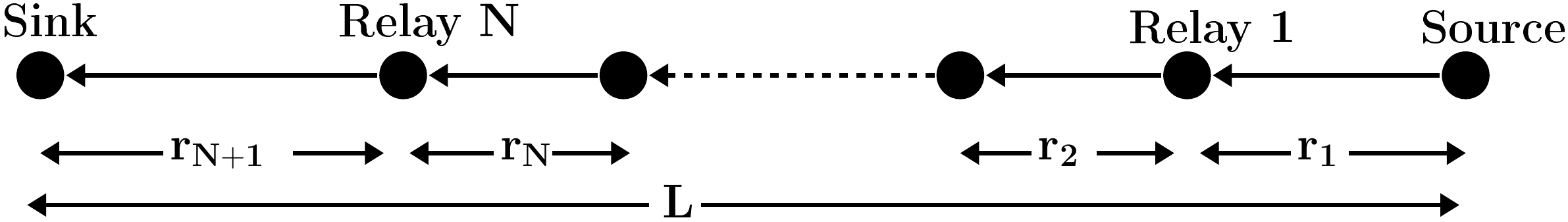}
\caption{A source and a sink connected by a multi-hop path comprising $N$ relay nodes along a line.}
\label{fig:general_line_network}
\end{figure}

 In this paper we consider the  
problem of maximizing the data rate between a source node and a sink 
by means of optimally placing relay nodes on the line segment joining the source and the sink; 
see Figure \ref{fig:general_line_network}. More specifically, we consider the two scenarios where the length $L$  of the line 
in Figure \ref{fig:general_line_network} is known, and where $L$ is an unknown random variable whose distribution is known, 
the latter being motivated by recent interest in problems of ``as-you-go" 
deployment of wireless networks in emergency situations (see Section \ref{subsection:related_work}, Related Work). 
In order to understand the fundamental trade-offs involved in such problems, we consider an 
information theoretic model. For a placement of the relay nodes along the line and allocation of transmission powers to these relays, 
we model the ``quality" of communication between the source and the 
sink by the information theoretic achievable rate of the relay 
channel\footnote{The term ``relay channel" will include the term ``multi-relay channel."}. The relays are equipped 
with full-duplex radios\footnote{See \cite{choi-etal10single-channel-full-duplex}, 
\cite{jain-etal11real-time-full-duplex} for recent efforts to realize practical full-duplex radios.}, and carry out 
decode-and-forward relaying. We consider scalar, memoryless, time-invariant, additive white Gaussian noise (AWGN) channels. 
Since we work in the information theoretic framework, we also assume synchronous operation across all transmitters and receivers. 
A path-loss model is 
important for our study, and we consider both power-law and exponential path-loss models.

\subsection{Related Work}
\label{subsection:related_work}

A formulation of the problem of relay placement requires a 
model of the wireless network at the physical (PHY) and medium access control (MAC) layers. Most researchers have adopted 
the link scheduling and interference model, i.e., 
a scheduling algorithm determines radio resource allocation (channel and power) and interference is treated 
as noise (see~\cite{georgiadis-etal06resource-allocation-cross-layer-control}). But node placement for throughput
maximization  with this model is intractable because the optimal throughput is 
obtained by first solving for the optimum schedule assuming fixed node locations, followed by an optimization over those locations. 
Hence, with 
such a model, there appears to be little work on the problem of jointly optimizing 
the relay node placement and the 
transmission schedule. In \cite{firouzabadi-martins08optimal-node-placement}, 
the authors considered placing a set of nodes in an existing network such that certain network utility (e.g., total transmit 
power) is optimized subject to a set of linear constraints on link rates. They posed the problem as one of geometric 
programming assuming exponential path-loss, and 
showed that it can be solved in a 
distributed fashion. To the best of our knowledge, there appears to be no other work which considers joint optimization of link scheduling and 
node placement using the link scheduling model.


On the other hand, an information theoretic model for a wireless network often provides a closed-form expression for the 
channel capacity, or at least an achievable rate region. These results are asymptotic, and make idealized assumptions such 
as full-duplex radios, perfect interference cancellation, etc., but provide algebraic expressions that can be used to formulate 
tractable optimization problems. The results from these formulations can provide useful insights. In the context of  
optimal relay placement, some researchers have already exploited this approach. For example, 
Thakur et al. in \cite{thakur-etal10optimal-relay-location-power-allocation} report on the problem of 
placing a single  relay node to maximize the capacity of a broadcast relay channel in a wideband regime. The linear deterministic 
channel model (\cite{avestimehr-etal11wireless-network-deterministic}) is used in \cite{appuswamy-etal10relay-placement-deterministic-line} to study the problem of placing two or more relay nodes along a line so as to maximize the end-to-end 
data rate. Our present paper is in a similar spirit; however, we use the achievable rate formulas for the $N$-relay channel 
(with decode and forward relays) to study the problem of placing relays on a line under individual node power constraints 
as well as with sum power constraints over the source and  the relays. 

Another major difference of our paper with 
\cite{thakur-etal10optimal-relay-location-power-allocation} and \cite{appuswamy-etal10relay-placement-deterministic-line} is that 
we also address the problem of sequential placement of the relay nodes along a line having unknown random length. 
Such a problem is motivated by the problem faced by ``first-responders" to emergencies (large building fires, 
buildings with terrorists and hostages), in which these personnel might need to deploy wireless sensor networks ``as-they-go." 
Howard et al., in \cite{howard-etal02incremental-self-deployment-algorithm-mobile-sensor-network}, 
provide heuristic algorithms for incremental
deployment of sensors (such as surveillance cameras) with the objective of covering the deployment area.
Souryal et al., in \cite{souryal-etal07real-time-deployment-range-extension}, address 
the problem of impromptu deployment of static wireless networks with an extensive study of indoor RF
link quality variation. More recently, Sinha et al. (\cite{sinha-etal12optimal-sequential-relay-placement-random-lattice-path}) 
have provided a Markov Decision Process based formulation of a problem  to establish a multi-hop 
network between a sink and an unknown source location by placing relay nodes along a random lattice path. 
The formulation in \cite{sinha-etal12optimal-sequential-relay-placement-random-lattice-path} 
is based on the so-called ``lone packet traffic model" under which, at any time instant, 
there can be no more than one packet traversing the network, thereby eliminating contention between wireless links. 
In our present paper, we consider the problem of as-you-go relay node placement so as to maximize a 
certain information theoretic capacity limiting term which is derived from the results on the fixed length line segment.

\subsection{Our Contribution}

\begin{itemize}
 \item In Section \ref{sec:individual_power_constraint}, we consider the problem of 
placing a single relay with individual power constraints at the source and the relay. In this context, 
we provide explicit formulas for the optimal relay location and the optimal source power split 
(between providing new information to the relay and cooperating with the relay to assist the sink),
for the exponential path-loss model (Theorem \ref{theorem:exponential_individual_power_constraint}) and for the power 
law path-loss model (Theorem \ref{theorem:power_law}). We find that at low attenuation it is better to place the relay near the source, whereas at very high 
attenuation the relay should be placed at half-distance between the source and the sink. 

\item In Section \ref{sec:total_power_constraint}, we focus on the $N$ relay placement problem with 
exponential path-loss model and a sum power constraint among the source and the relays.
For given relay locations, the optimal power split among the nodes and the achievable rate are given in Theorem 
\ref{theorem:multirelay_capacity} in terms of the channel gains. We explicitly solve the single relay placement problem in this 
context (Theorem~\ref{theorem:single_relay_total_power}). A numerical study shows that, the relay nodes are clustered 
near the source at low attenuation and are placed uniformly between the source and 
the sink at high attenuation. We have also studied the asymptotic behaviour of the achievable rate $R_N$ when $N$ relay nodes are placed uniformly on a 
line of fixed length, and  
show that for a total power constraint $P_T$ among the the source and the relays, 
$\liminf_{N \to \infty} R_N \geq C(\frac{P_T}{2 \sigma^2})$, where $C (\cdot)$ is the AWGN capacity formula and $\sigma^2$ is the power 
of the additive white Gaussian noise at each node.

\item In Section \ref{sec:mdp_total_power}, we consider the problem of placing relay nodes along a line 
of random length. Specifically, the problem is to start from a source, and walk along a line, placing relay nodes as we go, 
until the line ends, at which point the sink is placed. We are given that the distance between the source and the 
sink is exponentially distributed. With a sum power constraint over the source and the deployed relays, the aim is to 
maximize a certain capacity limiting term that is derived from the problem over a fixed length line segment, while 
constraining the expected number of relays that are used. Since the objective can be expressed as a sum of certain 
terms over the inter-relay links, we  ``relax" the expected number of relays constraint via a Lagrange multiplier, and thus 
formulate the problem as a total cost Markov Decision Process (MDP), with an uncountable state space and non-compact 
action sets. We establish the existence of an optimal policy, and convergence of value iteration, and also provide some 
properties of the value function. A numerical study supports certain intuitive structural properties of the optimal policy.
\end{itemize}

The rest of the paper is organized as follows. In Section \ref{sec:system_model_and_notation}, 
we describe our system model and notation. 
In Section \ref{sec:individual_power_constraint}, node placement with per-node power constraint 
is discussed. Node placement for total 
power constraint is discussed in Section \ref{sec:total_power_constraint}. Sequential placement of relay nodes for 
sum power constraint is discussed in Section \ref{sec:mdp_total_power}. Conclusions are 
drawn in Section \ref{conclusion}. The proofs of all the theorems are given in the appendices.

\vspace{-0.2cm}
\section{The Multirelay Channel: Notation and Review}
\label{sec:system_model_and_notation}

\subsection{Network and Propagation Models}\label{subsec:network_propagation_model}

The multi-relay channel was studied in \cite{xie-kumar04network-information-theory-scaling-law} and 
\cite{reznik-etal04degraded-gaussian-multirelay-channel} and is an extension of the single relay model presented in 
\cite{cover-gamal79capacity-relay-channel}. We 
consider a network deployed on a line with a source node, a sink node at the end of the line, 
and $N$ full-duplex relay nodes as shown in Figure 
\ref{fig:general_line_network}. The relay nodes are numbered as 
$1, 2,\cdots,N$. 
The source and sink are indexed by $0$ and $N+1$, respectively. The distance of the $k$-th node from the source is denoted by 
$y_{k}:=r_{1}+r_{2}+\cdots+r_{k}$. Thus, $y_{N+1}=L$. As in \cite{xie-kumar04network-information-theory-scaling-law} 
and \cite{reznik-etal04degraded-gaussian-multirelay-channel}, we consider the scalar, 
time-invariant, memoryless, additive white Gaussian noise setting. We do not consider propagation delays, 
hence, we use the model that a symbol transmitted by node~$i$ is received at node~$j$ after 
multiplication by the (positive, real valued) channel gain $h_{i,j}$; such a model has been used by previous researchers, see 
\cite{xie-kumar04network-information-theory-scaling-law}, \cite{gupta-kumar03capacity-wireless-networks}. 
The Gaussian additive noise at any receiver is independent 
and identically distributed from symbol to symbol and has variance $\sigma^2$. The \emph{power gain} from Node $i$ to Node $j$ is 
denoted by $g_{i,j} = h_{i,j}^2$.  We  model the power gain via two alternative path-loss models: exponential path-loss and 
power-law path-loss. The power 
gain at a distance $r$ is $e^{-\rho r}$ for the exponential path-loss model and $r^{-\eta}$ for the power-law path-loss model, where 
$\rho > 0$, $\eta >1$. These path-
loss models have their roots in the physics of wireless channels (\cite{franceschetti-etal04random-walk-model-wave-propagation}). 
For the exponential path-loss model, we will denote $\lambda:=\rho L$; $\lambda$ can be treated as a measure of attenuation 
in the line network. Under the exponential 
path-loss model, the channel gains and power gains in the line network become multiplicative, e.g., $h_{i,i+2}=h_{i,i+1}h_{i+1,i+2}$ 
and $g_{i,i+2}=g_{i,i+1}g_{i+1,i+2}$ for $i \in \{0,1,\cdots,N-1\}$. In this case, we define $g_{i,i}=1$ and $h_{i,i}=1$.
The power-law path-loss expression 
fails to characterize near-field transmission, since it goes to $\infty$ as $r \rightarrow 0$. One alternative is the 
``modified power-law path-loss'' model where the path-loss is $\min \{r^{-\eta}, b^{-\eta}\}$ with $b>0$ a reference distance. 
In this paper we consider both power-law and modified power-law path loss models, apart from the exponential path-loss model.

\subsection{An Inner Bound to the Capacity of the Multi-Relay Channel}\label{subsec:achievable_rate_multirelay_xie_and_kumar}

For the multi-relay channel, we denote the symbol transmitted by the $i$-th node at time $t$ ($t$ is discrete) by $X_{i}(t)$
 for $i=0,1,\cdots,N$. $Z_{k}(t) \sim \mathcal{N}(0,\sigma^{2})$ is the additive white Gaussian noise\footnote{We consider real 
symbols in this paper. But similar analysis will carry through if the symbols are complex-valued, so long as the channel 
gains, i.e., $h_{i,j}$-s are real positive numbers.} at 
node $k$ and time $t$, and is assumed to be independent and identically distributed across $k$ and $t$. 
Thus, at symbol time $t$, node~$k, 1 \leq k \leq N+1$ receives:
\begin{equation}
Y_{k}(t)= \sum_{j\in \{0,1,\cdots,N\}, j \neq k} h_{j,k}X_{j}(t)+Z_{k}(t) \label{eqn:network_equation}
\end{equation}
In \cite{xie-kumar04network-information-theory-scaling-law}, 
the authors showed that an inner bound to the capacity of this network is given by (defining $C(x):=\frac{1}{2×}\log_{2}(1+x)$):~\footnote{$\log (\cdot)$ in this paper will mean the natural logarithm unless the base is specified.}
\begin{equation}
 R=\min_{1 \leq k \leq N+1} C \bigg(     \frac{1}{\sigma^{2}×} \sum_{j=1}^{k} ( \sum_{i=0}^{j-1} h_{i,k} \sqrt{P_{i,j}}  )^{2}      \bigg)    \label{eqn:achievable_rate_multirelay}
\end{equation}
where $P_{i,j}$ denotes the power at which node $i$ transmits to node $j$.


In order to provide insight into the expression in (\ref{eqn:achievable_rate_multirelay}) and 
the relay placement results in this paper, we provide a descriptive overview of the coding and decoding scheme 
described in \cite{xie-kumar04network-information-theory-scaling-law}. 
Transmissions take place via block codes of $T$ symbols each. The transmission blocks at the source and the $N$ 
relays are synchronized. The coding and decoding scheme is such that a message generated at the source at the beginning 
of block $b, b \geq 1,$ is decoded by the sink at the end of block $b + N$, i.e., $N+1$ block durations after 
the message was generated (with probability tending to 1, as $T \to \infty)$. Thus, at the end of $B$ blocks, $B \geq N+1$, 
the sink is able to decode $B-N$ messages. It follows, by taking $B \to \infty$, that, if the code rate 
is $R$ bits per symbol, then an information rate of $R$ bits per symbol can be achieved from the source to the 
sink.

As mentioned earlier, we index the source by $0$, the relays 
by $k, 1 \leq k \leq N$, and the sink by $N+1$. There are  $(N+1)^2$ independent Gaussian random codebooks, each containing $2^{TR}$ codes, 
each code being of length $T$; these codebooks are available to all nodes. At the beginning of block $b$, the source 
generates a new message $w_b$, and, at this stage, we assume that 
each node $k, 1 \leq k \leq N+1,$ has a reliable estimate of all the 
messages $w_{b-j}, j \geq k$.  In block $b$, the source uses a new codebook to encode $w_b$. In addition, 
relay $k, 1 \leq k \leq N,$ and {\em all} of its previous transmitters (indexed $0 \leq j \leq k-1$), use {\em another} 
codebook to encode $w_{b-k}$ (or their estimate of it). Thus, if the relays $1,2,\cdots,k$ have a perfect estimate of $w_{b-k}$ 
at the beginning of block $b$, they will transmit the same codeword for $w_{b-k}$. Therefore, in block $b$, 
the source and relays $1, 2, \cdots, k$ \emph{coherently transmit} the codeword for $w_{b-k}$. 
In this manner, in block $b$, transmitter $k, 0 \leq k \leq N,$ 
generates $N+1 - k$ codewords, corresponding to $w_{b-k}, w_{b-k-1}, \cdots, w_{b-N}$, which are transmitted with powers 
$P_{k,k+1}, P_{k,k+2}, \cdots, P_{k,N+1}$. In block $b$, node $k, 1 \leq k \leq N+1,$ receives a superposition of transmissions 
from all other nodes. Assuming that node $k$ knows all the powers, and all the channel gains, and recalling that it has a reliable 
estimate of  all the messages $w_{b-j}, j \geq k$, it can subtract the interference from transmitters $k+1, k+2, \cdots, N$. 
At the end of block $b$, after subtracting the signals it knows, node $k$ is left with the $k$ received signals from 
nodes $0, 1, \cdots, (k-1)$ (received in blocks $b, b-1, \cdots, b-k+1$), which all carry an encoding of the message $w_{b-k+1}$. These $k$ signals are then jointly used 
to decode  $w_{b-k+1},$ using joint typicality decoding. The codebooks are cycled through in a manner so that in any block all 
nodes encoding a message (or their estimate of it) use the same codebook, but different (thus, independent) codebooks are used 
for different messages. Under this encoding and decoding scheme, relatively simple arguments lead to the conclusion 
that any rate strictly less than $R$ displayed in (\ref{eqn:achievable_rate_multirelay}) is achievable.

From the above description we see that a node receives information 
about a message in two ways (i) by the message being directed to it cooperatively by all the previous nodes, and (ii) 
by overhearing previous transmissions of the message to the previous nodes. 
Thus node $k$ receives codes corresponding to a message $k$ times before it attempts to decode the message. 
Note that, $C \bigg( \frac{1}{\sigma^{2}×} \sum_{j=1}^{k} ( \sum_{i=0}^{j-1} h_{i,k} \sqrt{P_{i,j}}  )^{2} \bigg)$ 
in (\ref{eqn:achievable_rate_multirelay}), for any $k$, denotes a possible rate that can be achieved by node $k$ from 
the transmissions from nodes $0,1,\cdots,k-1$. The smallest of these terms become the bottleneck, which 
has been reflected in (\ref{eqn:achievable_rate_multirelay}).


For the single relay channel, $N=1$. Thus, by (\ref{eqn:achievable_rate_multirelay}), 
an achievable rate is given by (see also \cite{cover-gamal79capacity-relay-channel}):

\footnotesize
 \begin{eqnarray}
R=\min & \bigg\{ & C \left(\frac{g_{0,1}P_{0,1}}{\sigma^{2}×}\right),\nonumber\\
 & & C  \left( \frac{g_{0,2}P_{0,1}+(h_{0,2}\sqrt{P_{0,2}}+h_{1,2}\sqrt{P_{1,2}})^{2}}{\sigma^{2}×}  \right)  \bigg\}\label{eqn:single_relay_genaral_capacity_formula}
\end{eqnarray}
\normalsize

Here, the first term in the $\min\{\cdot,\cdot\}$ of (\ref{eqn:single_relay_genaral_capacity_formula}) is the achievable rate at 
node $1$ (i.e., the relay node) due to the transmission from the source. The second term in the $\min\{\cdot,\cdot\}$ 
corresponds to the possible achievable rate at the sink node due to direct coherent transmission to itself from the 
source and the relay and due to the overheard transmission from the source to the relay.

The higher the channel attenuation, the less will be the contribution of 
farther nodes, ``overheard" transmissions become less relevant, 
and coherent transmission reduces to a simple transmission from the 
previous relay. The system is then closer to simple store-and-forward 
relaying.

The authors of \cite{xie-kumar04network-information-theory-scaling-law} have shown that any rate strictly less
than $R$ is achievable through the above-mentioned coding and
decoding scheme which involves coherent multi-stage relaying
and interference subtraction. This achievable rate formula can 
also be obtained from the capacity formula of a physically
degraded multi-relay channel (see \cite{reznik-etal04degraded-gaussian-multirelay-channel} for the channel model),
since the capacity of the degraded relay channel is a lower
bound to the actual channel capacity. In this paper, we will
seek to optimize $R$ in (\ref{eqn:achievable_rate_multirelay}) over power allocations to the nodes and
the node locations, keeping in mind that $R$ 
is a lower bound to the actual capacity. We denote the value of $R$ optimized over
power allocation and relay locations by $R^*$.

The following result justifies our aim to seek optimal relay placement on the line 
joining the source and the sink (rather than anywhere else on the plane). 
\begin{thm}\label{theorem:why_on_a_line}
 For given source and sink locations, the
relays should always be placed
on the line segment joining the source and the 
sink in order to maximize the end-to-end data rate between the source and the sink.
\end{thm}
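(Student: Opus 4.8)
The plan is to show that any placement of the relays anywhere in the plane can be ``improved to the line'' without decreasing the achievable rate, by moving every relay to its nearest point on the segment joining the source and the sink. The argument rests on two observations: (i) the achievable rate $R$ in (\ref{eqn:achievable_rate_multirelay}) depends on the relay positions \emph{only} through the channel gains $h_{i,k}$, and (ii) under either path-loss model the gain $h_{i,k}$ is a decreasing function of the Euclidean distance between nodes $i$ and $k$ (since both $e^{-\rho r}$ and $r^{-\eta}$ decrease in $r$, and $h_{i,k}=\sqrt{g_{i,k}}$). Hence it suffices to exhibit, for each placement, a placement on the segment for which no relevant internode distance is larger.

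First I would establish monotonicity of $R$ in the gains. Fix an arbitrary feasible power allocation $\{P_{i,j}\}$. Because all channel gains are positive and all powers are non-negative, every inner sum $\sum_{i=0}^{j-1} h_{i,k}\sqrt{P_{i,j}}$ is a sum of non-negative terms, hence non-decreasing in each $h_{i,k}$; squaring and summing over $j$ preserves this, and since $C(\cdot)$ is increasing and a minimum of non-decreasing functions is non-decreasing, the $k$-th term of (\ref{eqn:achievable_rate_multirelay}), and therefore $R$ itself, is non-decreasing in every gain $h_{i,k}$. The absence of sign cancellations, i.e. the fact that coherent combining adds constructively, is what makes this step clean, and it is the crux of the whole argument.

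Next I would invoke the geometric fact that the nearest-point projection onto a closed convex set is a non-expansion ($1$-Lipschitz). Let $\Pi$ denote this projection onto the closed segment $[S,D]$ joining the source $S$ (node $0$) and the sink $D$ (node $N+1$). Since $S,D\in[S,D]$ they are fixed by $\Pi$, so for every relay $k$ and every node $i$ we have $\|\Pi(x_k)-\Pi(x_i)\|\le\|x_k-x_i\|$, with $\Pi(S)=S$ and $\Pi(D)=D$. As every gain appearing in (\ref{eqn:achievable_rate_multirelay}) is $h_{i,k}$ with the nodes $i,k\in\{0,1,\dots,N+1\}$ each either fixed or projected, every relay--relay, relay--source and relay--sink distance weakly decreases while the source--sink distance is unchanged; hence every gain weakly increases. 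Keeping the same node indexing and the same powers, the previous paragraph yields that the on-segment placement achieves a rate at least as large. Taking the supremum over all plane placements and power allocations, and noting that segment placements form a subset of plane placements, the two suprema coincide, so any optimizer lies on $[S,D]$.

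The main obstacle here is bookkeeping rather than depth. I must ensure the projection is onto the convex \emph{segment} (not the infinite line), so that the relays genuinely land on $[S,D]$, and I must confirm that the projected configuration — possibly with relays reordered or coincident — is still a legitimate input to (\ref{eqn:achievable_rate_multirelay}), which it is, since that formula is defined for arbitrary positive gains and a fixed node indexing (we need only that the projected configuration is feasible and achieves at least the old rate, not that its ordering is optimal). A minor point worth a remark is the power-law singularity at $r=0$: if projection makes two nodes coincide, the gain blows up, which only strengthens the inequality, and in any case is excluded under the modified power-law model, so it causes no difficulty.
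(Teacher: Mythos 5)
Your proof is correct, and its skeleton is the same as the paper's: both arguments rest on the observation that for any fixed power allocation the rate in (\ref{eqn:achievable_rate_multirelay}) is coordinate-wise non-decreasing in the gains $h_{i,k}$ (every inner sum is a sum of non-negative terms, so coherent combining can only help), so it suffices to exhibit a relocation of the relays onto the segment under which no pairwise distance increases. Where you differ is in how that relocation is produced. The paper does it in two hand-crafted stages: it first drops the $y$-coordinates (which weakly shrinks all distances), and then, for relays whose $x$-coordinates fall outside $[0,L]$, redistributes them uniformly inside intervals of length $a/2$ abutting the endpoints, where $a$ is the minimum original pairwise spacing --- a construction whose distance inequalities must be verified case by case and are largely left implicit. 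You instead invoke the non-expansiveness ($1$-Lipschitz property) of the nearest-point projection onto the closed convex segment, with source and sink as fixed points, which accomplishes both stages in one line; this is cleaner and subsumes the paper's construction. You are also, if anything, more careful on two details the paper glosses over: coincident nodes after the move (the paper's axis-projection can collapse two relays with equal $x$-coordinate, or drop a relay onto the source, and it says nothing about the resulting power-law singularity, whereas you note the blow-up only strengthens the inequality and is absent in the modified model), and the bookkeeping that the decoding order and the power constraints are untouched, which matters because (\ref{eqn:achievable_rate_multirelay}) is tied to a fixed relay ordering --- the paper's re-enumeration of relays by $x$-coordinate is in fact unnecessary for its own argument. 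One shared limitation, consistent with the theorem as stated: both proofs yield only a weak inequality, i.e., restricting to the segment loses nothing, not that every off-segment placement is strictly suboptimal.
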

\begin{proof}
 See Appendix \ref{appendix:system_model_and_notation}.
\end{proof}

\begin{figure*}[!t]
 \footnotesize
\begin{equation}
 \alpha^{*}=\frac{e^{-\lambda}}{\left(2e^{-\lambda}+e^{-\frac{\lambda}{2×}}\right)×} \left( \sqrt{1-\frac{e^{-\lambda}}{(2e^{-\lambda}+e^{-\frac{\lambda}{2×}})^{2}×}} + \sqrt{\frac{1}{(2e^{-\lambda}+e^{-\frac{\lambda}{2×}})×} \left( 1-\frac{e^{-\lambda}}{(2e^{-\lambda}+e^{-\frac{\lambda}{2×}})×} \right)}   \right)^{2}\label{eqn:optimum_alpha}
\end{equation}
\begin{equation}
 R^{*}=C \left(\frac{P}{\sigma^{2}×}e^{-\lambda} \left( \sqrt{1-\frac{e^{-\lambda}}{(2e^{-\lambda}+e^{-\frac{\lambda}{2×}})^{2}×}} + \sqrt{\frac{1}{(2e^{-\lambda}+e^{-\frac{\lambda}{2×}})×} \left( 1-\frac{e^{-\lambda}}{(2e^{-\lambda}+e^{-\frac{\lambda}{2×}})×} \right)}   \right)^{2} \right)\label{eqn:theorem_capacity_high_attenuation}
\end{equation}
\normalsize
\end{figure*}

\vspace{3mm}
\section{Single Relay Node Placement: Node Power Constraints}
\label{sec:individual_power_constraint}
\vspace{2mm}

In this section, we aim at placing a single relay node between the source and the sink in order to maximize the achievable 
rate $R$ given by (\ref{eqn:single_relay_genaral_capacity_formula}). Let the distance between the source and the 
relay be $r$, i.e., $r_{1}=r$. Let $\alpha :=\frac{P_{0,1}}{P_{0}×}$. We assume that the source 
and the relay use the same transmit power $P$. Hence, $P_{0,1}=\alpha P$,
$P_{0,2}=(1-\alpha)P$ and $P_{1,2}=P$. Thus, for a given placement of the relay node, we obtain:

\footnotesize
\begin{eqnarray}
R= \min & \bigg\{& C  \left(\frac{\alpha g_{0,1}P}{\sigma^{2}×} \right),\nonumber\\
& & C  \left( \frac{P}{\sigma^{2}×}\left(g_{0,2}+g_{1,2}+2 \sqrt{(1-\alpha)g_{0,2}g_{1,2}}\right) \right)\bigg\} \label{eqn:capacity_single_relay_equal_power}
\end{eqnarray}
\normalsize

Maximizing over $\alpha$ and $r$, and exploiting the monotonicity of $C(\cdot)$ yields the following problem: 
\footnotesize
{
\begin{equation}
 \max_{r \in [0,L]} \, \max_{\alpha \in [0,1]} \min \bigg\{\alpha g_{0,1},\,g_{0,2}+g_{1,2}+2 \sqrt{(1-\alpha)g_{0,2}g_{1,2}}  \bigg\} \label{eqn:optimize_fd_relay_df}
\end{equation}
}
\normalsize
Note that $g_{0,1}$ and $g_{1,2}$ in the above equation depend on $r$, but $g_{0,2}$ does not depend on $r$.

\subsection{Exponential Path Loss Model}
\subsubsection{Optimum Relay Location}
Here $g_{0,1}=e^{-\rho r}$, $g_{1,2}=e^{-\rho (L-r)}$ and $g_{0,2}=e^{-\rho L}$. Let $\lambda:=\rho L$, 
and the optimum relay location be $r^{*}$. Let $x^{*}:=\frac{r^{*}}{L×}$ be the normalized optimal relay location. Let $\alpha^{*}$ be 
the optimum value of $\alpha$ when relay is optimally placed.

\begin{thm} For $N=1$, under the exponential path-loss model, if the source and relay have same 
power constraint $P$, then there is a unique optimum relay location $r^*$ and the following hold:
 \begin{enumerate}[label=(\roman{*})]

\item $x^*=\max \{x_{+},0\}$ where $x_{+}:=-\frac{1}{\lambda×} \log \left(2e^{-\lambda}+e^{-\frac{\lambda}{2}}\right)$.
  \item For $0 \leq \lambda \leq \log 2$, $x^{*}=0$, $\alpha^{*}=1$ and $R^{*}=C(\frac{P}{\sigma^{2}×})$.
   \item For $\log 2 \leq \lambda \leq \log 4$, $x^{*}=0$, $\alpha^{*}=4e^{-\lambda}(1-e^{-\lambda})$ and 
$R^{*}=C(\frac{P}{\sigma^{2}×}4e^{-\lambda}(1-e^{-\lambda}))$.
\item For $\lambda \geq \log 4$, $x^{*}=x_{+}$. Moreover, $\alpha^{*}$ and $R^{*}$ are given by (\ref{eqn:optimum_alpha}) 
and (\ref{eqn:theorem_capacity_high_attenuation}).
 \end{enumerate}
\label{theorem:exponential_individual_power_constraint}
\end{thm}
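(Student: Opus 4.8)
The plan is to start from the reduced max--min problem \eqref{eqn:optimize_fd_relay_df}, substitute the exponential gains $g_{0,1}=e^{-\lambda x}$, $g_{1,2}=e^{-\lambda(1-x)}$, $g_{0,2}=e^{-\lambda}$ (with $x=r/L$, $\lambda=\rho L$), and solve by the usual inner-then-outer decomposition: first optimize the power split $\alpha$ for a fixed location $x$, then optimize over $x$. Write the two competing terms as $f_1(\alpha)=\alpha e^{-\lambda x}$ and $f_2(\alpha)=e^{-\lambda}+e^{-\lambda(1-x)}+2\sqrt{(1-\alpha)\,e^{-\lambda}e^{-\lambda(1-x)}}$. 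Since $f_1$ is linear and increasing in $\alpha$ while $f_2$ is concave and strictly decreasing, the gap $f_2-f_1$ is concave with $f_2(0)-f_1(0)>0$, so it has at most one zero in $(0,1]$. Hence the inner maximum is attained either at the unique crossing $f_1=f_2$, or, when $e^{-\lambda x}\le e^{-\lambda}+e^{-\lambda(1-x)}$, at the boundary $\alpha=1$ with value $e^{-\lambda x}$. Evaluating the boundary condition at $x=0$ shows the crossing first appears when $\lambda>\log 2$, which is the threshold separating parts (ii) and (iii).

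The crux is the outer maximization, and the clean route is the substitution $m:=e^{-\lambda/2}$ (a constant), $n:=e^{-\lambda(1-x)/2}\in[m,1]$, and $\beta:=\sqrt{1-\alpha}$, under which $f_1=(1-\beta^2)m^2/n^2$ and $f_2=m^2+n^2+2\beta mn$. On the crossing branch I would impose $f_1=f_2$ together with the stationarity condition for the max--min in the two variables $(n,\beta)$, namely $\partial_\beta f_1\,\partial_n f_2=\partial_n f_1\,\partial_\beta f_2$ (both constraints active, interior optimum). The latter simplifies to $\beta n=m(1-2\beta^2)$; substituting this into $f_1=f_2$, the bracket $3-4\beta^2+(1-2\beta^2)^2/\beta^2$ appearing in $f_2$ collapses to $(1-\beta^2)/\beta^2$, leaving the remarkably simple relation $m^2=\beta^4/(1-2\beta^2)^2$, i.e. $\beta^2=m/(1+2m)$ and $\alpha^*=(1+m)/(1+2m)$.

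Back-substitution then gives $n^2=m/(1+2m)$, hence $e^{-\lambda x^*}=m^2/n^2=m(1+2m)=2e^{-\lambda}+e^{-\lambda/2}$, which is exactly $x^*=x_+$. Together with the algebraic check that the $\alpha$-optimized value $e^{-\lambda}\big(\sqrt{\cdots}+\sqrt{\cdots}\big)^2$ of \eqref{eqn:theorem_capacity_high_attenuation} equals $m^2+m$, this yields the closed forms \eqref{eqn:optimum_alpha}--\eqref{eqn:theorem_capacity_high_attenuation}. The three regimes then follow from a feasibility/clipping argument. The interior stationary point is admissible only when $n\ge m$, i.e. $\beta\le\tfrac12$; using $\beta^2=m/(1+2m)$ this is equivalent to $m\le\tfrac12$, i.e. $\lambda\ge\log 4$, giving part (iv) with $x^*=x_+\in(0,1)$ (note $x_+<1$ always, since $2e^{-\lambda}+e^{-\lambda/2}>e^{-\lambda}$). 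For $\lambda<\log 4$ the stationary point falls at $x_+<0$ and the optimum is clipped to $x^*=0$; evaluating the inner solution at $x=0$ gives the crossing value $4e^{-\lambda}(1-e^{-\lambda})$ with $\alpha^*=4e^{-\lambda}(1-e^{-\lambda})$ when $\log 2\le\lambda\le\log 4$ (part (iii)), and the boundary value $1$ with $\alpha^*=1$ when $\lambda\le\log 2$ (part (ii)). One checks that the expressions match at $\lambda=\log 2$ and at $\lambda=\log 4$, confirming continuity and the unified formula $x^*=\max\{x_+,0\}$ of part (i).

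The step I expect to be the main obstacle is not the stationarity algebra, which collapses pleasantly, but rigorously establishing \emph{global} optimality and \emph{uniqueness}: I must show the location objective $V(x)$ (the inner-optimized value) is strictly unimodal---e.g. strictly concave after the monotone reparametrization $x\mapsto n$, or by a sign analysis of $dV/dx$ obtained from implicit differentiation of the quadratic $V^2+(4e^{-2\lambda}t^2-2e^{-\lambda}(1+t))V+e^{-2\lambda}(1-t)^2=0$ with $t=e^{\lambda x}$. This is what guarantees that the single stationary point is the unique maximizer when $\lambda\ge\log 4$, and that clipping to $x=0$ is genuinely optimal (rather than merely stationary) for $\lambda<\log 4$, thereby justifying the claimed uniqueness of $r^*$.
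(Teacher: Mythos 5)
Your inner step (take $\alpha=1$ when $g_{0,1}\le g_{0,2}+g_{1,2}$, otherwise the unique crossing $f_1=f_2$) is exactly the paper's Case I/Case II split with threshold $r_{\mathsf{th}}$, and your stationarity algebra for the outer step checks out: with $m=e^{-\lambda/2}$, $n=e^{-\lambda(1-x)/2}$, $\beta=\sqrt{1-\alpha}$, the condition $\partial_\beta f_1\,\partial_n f_2=\partial_n f_1\,\partial_\beta f_2$ does reduce to $\beta n=m(1-2\beta^2)$, the bracket does collapse to $(1-\beta^2)/\beta^2$, and $\beta^2=m/(1+2m)$ gives $\alpha^*=(1+m)/(1+2m)$, which is identical to (\ref{eqn:optimum_alpha}) after simplification (the sum of the two square roots there equals $\sqrt{(1+m)/m}$); likewise $e^{-\lambda x^*}=m(1+2m)=2e^{-\lambda}+e^{-\lambda/2}$ recovers $x_+$, and $\alpha^*g_{0,1}=m(1+m)$ matches (\ref{eqn:theorem_capacity_high_attenuation}). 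Your quadratic in $V$ with $t=e^{\lambda x}$ is also correct. As a way of locating the interior optimum this is genuinely different from, and algebraically cleaner than, the paper's route.

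There is, however, a genuine gap, and it is the one you flag yourself: everything after the stationarity computation is asserted rather than proved. The first-order condition is only necessary; it establishes neither the uniqueness of $r^*$ claimed in the theorem for part (iv), nor anything at all in parts (ii) and (iii), where the stationary point is infeasible ($x_+<0$) and the claim that the optimum is ``clipped to $x^*=0$'' requires knowing that the inner-optimized objective decreases on the whole crossing branch $[0,x_{\mathsf{th}}]$ and on the $\alpha=1$ branch $[x_{\mathsf{th}},1]$. This monotonicity analysis is precisely where the paper's proof does its work: writing $u=e^{-\rho r}$, $k=e^{-\lambda}$, it computes $f'(u)$ for $f(u)=\sqrt{k(1-k/u^{2})}+\sqrt{(k/u)(1-k/u)}$ explicitly, shows $f'(u)\ge 0$ exactly when $u\in[u_-,u_+]$ with $u_\pm=2k\pm\sqrt{k}$, and verifies $u_-\le u_{\mathsf{th}}\le u_+$ for all $k\in[0,1]$, so that $R$ increases in $r$ on $[0,r_+]$ and decreases on $[r_+,L]$ (including the Case I region, where $R=C(g_{0,1}P/\sigma^{2})$ is trivially decreasing in $r$). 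That global sign analysis simultaneously delivers uniqueness, the unified clipping formula $x^*=\max\{x_+,0\}$, and the consistency fact $r_+\le r_{\mathsf{th}}$ guaranteeing the interior optimum really lies on the crossing branch. Your plan is completable --- the derivative sign analysis you mention as one option is essentially the paper's argument, and your implicit quadratic is a workable starting point --- but as written the proposal verifies the closed forms only at a candidate stationary point and leaves the theorem's uniqueness and regime claims unestablished.
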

\begin{proof}
 See Appendix \ref{appendix:exponential_individual}.
\end{proof}

\begin{figure}[!t]
\centering
\includegraphics[scale=0.3]{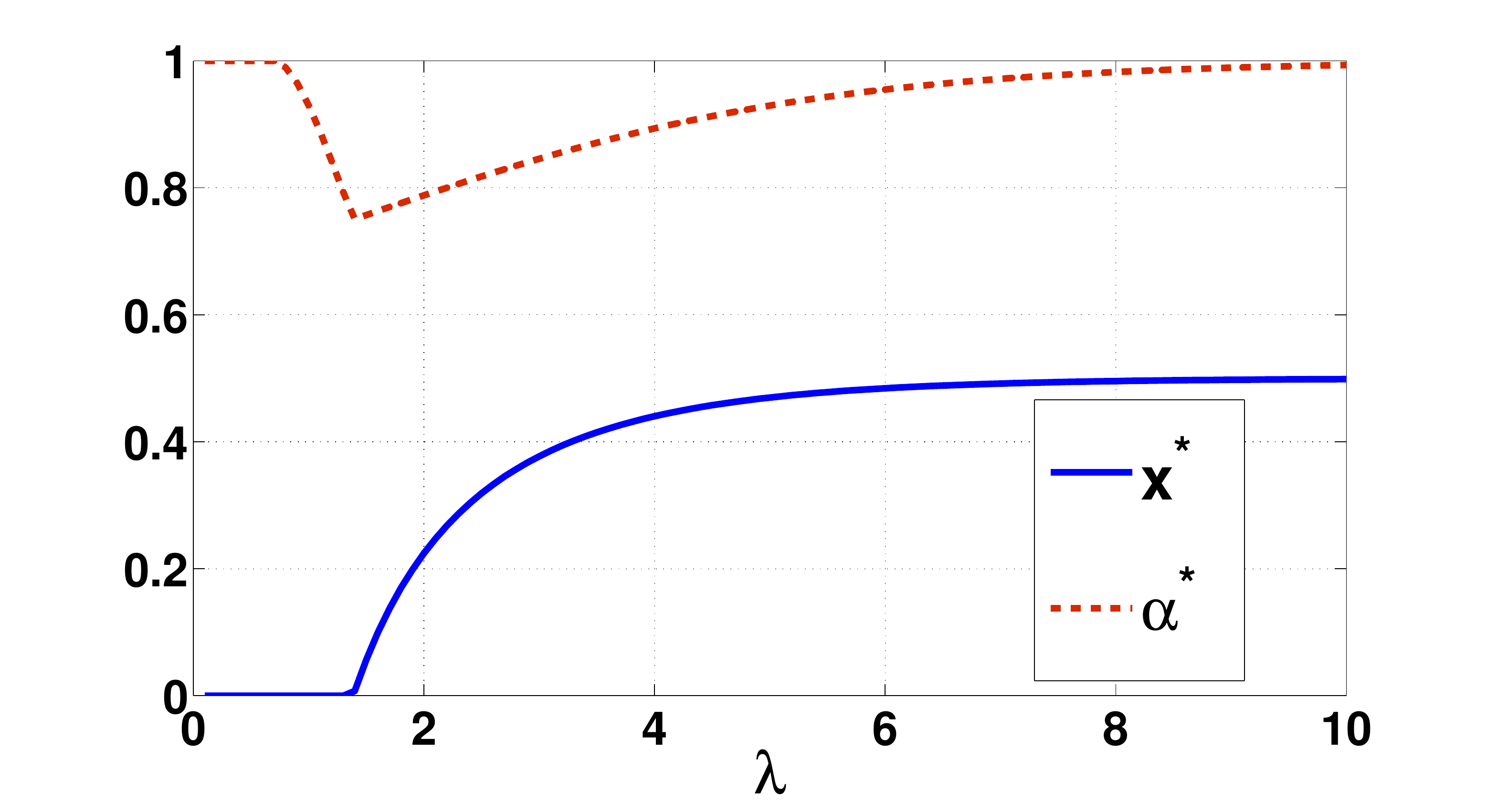}
\caption{Single Relay, exponential path-loss, node power constraint: $x^{*}$ and $\alpha^{*}$ versus $\lambda$.}
\label{fig:exponential_full_duplex_df_optimum_position}
\vspace{-5mm}
\end{figure}

\subsubsection{Numerical Work} In Figure 
\ref{fig:exponential_full_duplex_df_optimum_position}, we plot $x^*$ and $\alpha^*$ provided by Theorem 
\ref{theorem:exponential_individual_power_constraint}, versus $\lambda$. Recalling the discussion of the coding/decoding 
scheme in Section \ref{sec:system_model_and_notation}, we note that the relay provides 
two benefits to the source: (i) being nearer to the source than the sink, the power 
required to transmit a message to the relay is less than that to the sink,  and (ii) having received the message, the relay 
coherently assists the source in resolving ambiguity at the sink. These two benefits correspond to the  two terms 
in (\ref{eqn:single_relay_genaral_capacity_formula}). Hence, Theorem \ref{theorem:exponential_individual_power_constraint} and 
Figure \ref{fig:exponential_full_duplex_df_optimum_position} provide the following insights:
 
\begin{itemize}
 \item At very low attenuation ($\lambda \leq \log 2$), $x^{*}=0$ (since $x_{+} \leq 0$). Since $g_{1,2}\geq \frac{1}{2×}$ and 
$g_{0,2} \geq \frac{1}{2×}$, we will always have $g_{0,2}+g_{1,2}+2 \sqrt{(1-\alpha)g_{0,2}g_{1,2}} \geq \alpha g_{0,1}$ for all 
$\alpha \in [0,1]$ and for all $r \in [0,L]$. The minimum of the two terms in (\ref{eqn:optimize_fd_relay_df}) is maximized by 
$\alpha=1$ and $r=0$.

Note that the maximum possible achievable rate from the source to the sink 
for any value of $\lambda$ is $C \bigg( \frac{P}{\sigma^{2}×} \bigg)$, since the source 
has a transmit power $P$. In the low attenuation regime, we can dedicate the entire source power to the relay and place 
the relay at the source, so that the relay can receive at a rate $C \bigg( \frac{P}{\sigma^{2}×} \bigg)$ 
from the source. The attenuation 
is so small that even at this position of the relay, the sink can receive data 
at a rate $C \bigg( \frac{P}{\sigma^{2}×} \bigg)$ by receiving data from the relay and overhearing the transmission from 
the source to the relay; since $\alpha^{*}=1$, there is no coherent transmission involved in this situation.

\item For $\log 2 \leq \lambda \leq \log 4$, we again have $x^* = 0$. $\alpha^{*}$ decreases from $1$ to $\frac{3}{4×}$ 
as $\lambda$ increases. 
In this case since attenuation is higher, the source needs to direct some power 
to the sink for transmitting coherently with the relay to the sink to balance the source to relay data rate. 
$\lambda= \log 4$ is the critical value of $\lambda$ 
at which source to relay channel ceases to be the bottleneck. Hence, 
for $\lambda \in [\log 2, \log 4]$, we still place the relay at the 
source but the source reserves some of its power to transmit to the sink coherently 
with the relay's transmission to the sink. As attenuation increases, the source has to direct more 
power to the sink, and so $\alpha^*$ decreases with $\lambda$.

\item For $\lambda \geq \log 4$, $x^* \geq 0$. Since attenuation is high, the sink overhears less the source to relay transmissions 
and the relay to sink transmissions become more important. Hence, 
it is no longer optimal to keep the relay close to the source. Thus, 
$x^{*}$ increases with $\lambda$. Since the link between the source and the sink 
becomes less and less useful as $\lambda$ increases, we dedicate less and less power for the direct transmission 
from the source to the sink. Hence, $\alpha^{*}$ increases with $\lambda$. 
As $\lambda \rightarrow \infty$, $x^{*} \rightarrow \frac{1}{2×}$ and $\alpha^{*}\rightarrow 1$. 
We observe that the ratio of powers received by the sink from the source and the relay 
is less than $e^{-\frac{\lambda}{2×}}$, since $x^{*}\leq\frac{1}{2×}$. This ratio tends to zero as $\lambda \rightarrow \infty$. 
Thus, at high attenuation, the source transmits at full power to the relay and the relay acts just as a repeater.

\item We observe that $x^{*} \leq \frac{1}{2×}$. The two data rates in 
(\ref{eqn:capacity_single_relay_equal_power}) can be equal only if $r \leq \frac{L}{2×}$. Otherwise we will readily have 
$g_{0,1} \leq g_{0,2}+g_{1,2}$, which means that the two rates will not be equal.
\end{itemize}

\vspace{-2mm}
\subsection{Power Law Path Loss Model}
\subsubsection{Optimum Relay Location}

For the power-law path-loss model, $g_{0,1}=r^{-\eta}$, $g_{0,2}=L^{-\eta}$ and $g_{1,2}=(L-r)^{-\eta}$. Let $x^{*}$ 
be the normalized optimum relay location which maximizes $R$. Then the following theorem states how to compute $x^{*}$.

\begin{thm}\label{theorem:power_law}

\begin{enumerate}[label=(\roman{*})]
 \item For the single relay channel and power-law path-loss model (with $\eta >1$), there is a unique 
optimal placement point for the relay on the 
line joining
the source and the sink, maximizing $R$. The normalized distance of the point from source node is $x^{*}$,
 where $x^{*}$ is 
precisely the unique real root $p$ of the Equation 
$(x^{-\eta+1}-1)^{2}(1-(\frac{1}{x×}-1)^{-\eta})=(1-x)^{-\eta}-(\frac{1}{x×}-1)^{-\eta}$ in the 
interval $(0,\frac{1}{2×})$. 
\item For the ``modified power-law path-loss'' model with $2b<L$, the normalized optimum relay location is $x^{*}=\max \{p,\frac{b}{L×} \}$.
\end{enumerate}
\end{thm}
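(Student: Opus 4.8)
The plan is to first reduce the joint maximization in (\ref{eqn:optimize_fd_relay_df}) to a one–dimensional problem. Substituting $g_{0,1}=r^{-\eta}$, $g_{0,2}=L^{-\eta}$, $g_{1,2}=(L-r)^{-\eta}$, dividing the objective by the common factor $L^{-\eta}$, and writing $x=r/L$, the inner problem becomes
\[
\max_{\alpha\in[0,1]}\min\Big\{\alpha\,x^{-\eta},\ 1+(1-x)^{-\eta}+2\sqrt{(1-\alpha)(1-x)^{-\eta}}\Big\}.
\]
For fixed $x$ the first argument is strictly increasing in $\alpha$ and the second is strictly decreasing, so the inner maximin is attained either at $\alpha=1$ (when $x^{-\eta}\le 1+(1-x)^{-\eta}$) or at the unique $\alpha^{*}(x)$ that equalizes the two arguments (when $x^{-\eta}> 1+(1-x)^{-\eta}$). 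Denote the resulting value by $V(x)$. First I would note that $x\mapsto x^{-\eta}-\big(1+(1-x)^{-\eta}\big)$ is strictly decreasing and negative at $x=\tfrac12$, so the equalizing regime is an interval $(0,\tilde x)$ with $\tilde x<\tfrac12$, while on $[\tilde x,1)$ one has $V(x)=x^{-\eta}$, strictly decreasing. Hence the search for a maximizer can be confined to $(0,\tilde x)\subset(0,\tfrac12)$, which already yields $x^{*}<\tfrac12$.

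On the equalizing interval $V(x)$ is defined implicitly by $\alpha^{*}x^{-\eta}=V$ together with $1+(1-x)^{-\eta}+2\sqrt{(1-\alpha^{*})(1-x)^{-\eta}}=V$. Eliminating $\sqrt{1-\alpha^{*}}$ gives $V x^{\eta}=1-\big(V-1-(1-x)^{-\eta}\big)^{2}/\big(4(1-x)^{-\eta}\big)$, an implicit relation $F(V,x)=0$. Since $F(V(x),x)\equiv0$, differentiating yields $F_{V}V'+F_{x}=0$, so at an interior maximizer $V'(x^{*})=0$ forces $F_{x}(V(x^{*}),x^{*})=0$. I would compute $F_{x}$ (holding $V$ fixed) and then eliminate $V$ and $\alpha^{*}$ between this stationarity relation and the two defining equations. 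After routine but somewhat lengthy algebra (writing $u=x^{-\eta}$, $v=(1-x)^{-\eta}$, $m=x\,u=x^{1-\eta}$) the first–order condition collapses to $(m-1)^{2}(u-v)=v\,(u-1)$; restoring $m=x^{-\eta+1}$ and using $(1-x)^{-\eta}=v$ together with $v/u=(\tfrac1x-1)^{-\eta}$, this is exactly the stated equation $(x^{-\eta+1}-1)^{2}\big(1-(\tfrac1x-1)^{-\eta}\big)=(1-x)^{-\eta}-(\tfrac1x-1)^{-\eta}$.

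The main obstacle is the uniqueness claim together with verifying that the root is the \emph{global} maximizer rather than a spurious stationary point. I would let $G(x)$ denote the difference of the two sides of the stated equation and analyze its sign on $(0,\tfrac12)$. As $x\to0^{+}$ the factor $(x^{-\eta+1}-1)^{2}\to\infty$ (since $\eta>1$) while the remaining factors stay bounded and positive, so $G(0^{+})=+\infty$; at $x=\tfrac12$ one has $(\tfrac1x-1)^{-\eta}=1$, so the left side vanishes while the right side equals $2^{\eta}-1>0$, giving $G(\tfrac12)<0$. Thus a root exists. For uniqueness I would show $G$ crosses zero only once, either by proving $G$ is strictly decreasing on $(0,\tfrac12)$ or, more robustly, by establishing that $V$ is strictly unimodal there (its derivative changing sign exactly once), so that its unique interior stationary point is the global maximum that dominates the endpoint behaviour $V(0^{+})=4$ and $V(\tilde x)=\tilde x^{-\eta}$. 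Pushing this single–crossing/unimodality argument through, against the awkward algebraic form of $G$, is where the real work lies; the monotonicity of the component functions $u$, $v$, $m$ in $x$ should be the main lever.

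For part (ii), with the modified path loss $g_{0,1}=\min\{r^{-\eta},b^{-\eta}\}$ only the source–relay gain changes, and only for $x<b/L$, where it saturates at the constant $b^{-\eta}$; since $2b<L$, the gains $g_{0,2}$ and $g_{1,2}$ retain their power–law forms on this range. On $[0,b/L)$ the first argument $\alpha\,b^{-\eta}$ is independent of $x$ while the second argument is strictly increasing in $x$ (as $(1-x)^{-\eta}$ grows), so for each $\alpha$ the inner minimum is nondecreasing in $x$, whence $V$ is nondecreasing on $[0,b/L)$ and the best location in the saturated region is its right endpoint $x=b/L$. Combining this with the unimodality of $V$ on the unsaturated range $[b/L,1)$ from part (i): if $p\ge b/L$ the unconstrained optimum $p$ is feasible and optimal, while if $p<b/L$ then $V$ is already decreasing on $[b/L,1)$, so the constrained optimum is the boundary $b/L$. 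In both cases $x^{*}=\max\{p,\,b/L\}$.
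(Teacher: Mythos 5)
Your setup tracks the paper's own proof closely: the same reduction to a normalized one-dimensional problem, the same threshold (your $\tilde x$ is the paper's $x_{\mathsf{th}}$, defined by $x^{-\eta}=1+(1-x)^{-\eta}$, with $x_{\mathsf{th}}<\tfrac12$), and essentially the same argument for part (ii). But part (i) has a genuine gap, and it is precisely the step you yourself flag as ``where the real work lies'': uniqueness of the root and its global maximality are never actually proved. The paper closes this by \emph{not} working implicitly. On $(0,x_{\mathsf{th}}]$ it solves the equalization for $\alpha_{\mathsf{th}}$ in closed form, so the objective becomes the explicit function $f(x)=\sqrt{1-\frac{(1-x)^{-\eta}}{x^{-\eta}}}+\sqrt{(1-x)^{-\eta}\left(1-x^{\eta}\right)}$, and direct differentiation gives the \emph{sign characterization} $f'(x)\geq 0$ if and only if $f_{1}(x)\geq f_{2}(x)$, where $f_{1}(x)=(x^{-\eta+1}-1)^{2}\bigl(1-(\tfrac1x-1)^{-\eta}\bigr)$ and $f_{2}(x)=(1-x)^{-\eta}-(\tfrac1x-1)^{-\eta}$. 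Single crossing is then immediate: $f_{1}$ is a product of positive strictly decreasing factors and falls from $+\infty$ to $0$ on $(0,\tfrac12)$, while $f_{2}(x)=(1-x)^{-\eta}(1-x^{\eta})$ has $f_{2}'(x)=\eta(1-x)^{-\eta-1}(1-x^{\eta-1})>0$ for $\eta>1$, so $f_{2}$ is strictly increasing. Hence $f$ is strictly unimodal with a unique peak $p$. Your implicit-function route, by contrast, only identifies stationary points: your IVT sign change for $G$ gives \emph{existence} of a root, but you never establish that the sign of $G$ agrees with the sign of $V'$ throughout $(0,\tilde x)$ (your algebra collapsing the FOC to the stated equation is asserted, not carried out), so a priori the root could be a spurious critical point; and your two candidate uniqueness strategies are named but not executed.

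There is a second, quieter omission: you never verify that the root $p$ actually lies in the equalizing regime, i.e., $p<\tilde x$. Your FOC derivation is valid only on $(0,\tilde x)$, while your IVT argument places a root of $G$ somewhere in $(0,\tfrac12)$; if it landed in $[\tilde x,\tfrac12)$, it would say nothing about the maximizer of $V$, and the optimum would instead sit at the boundary $\tilde x$ (as it effectively does in the modified model of part (ii)). The paper checks this explicitly: using the defining identity $(\tfrac{1}{x_{\mathsf{th}}})^{-\eta}+(\tfrac{1}{x_{\mathsf{th}}}-1)^{-\eta}=1$, it computes $f_{1}(x_{\mathsf{th}})=\frac{(x_{\mathsf{th}}^{-\eta+1}-1)^{2}}{x_{\mathsf{th}}^{-\eta}}$ and $f_{2}(x_{\mathsf{th}})=\frac{(x_{\mathsf{th}}^{-\eta}-1)^{2}}{x_{\mathsf{th}}^{-\eta}}$, and since $x_{\mathsf{th}}<1$ and $\eta>1$ force $f_{1}(x_{\mathsf{th}})<f_{2}(x_{\mathsf{th}})$, concludes $p<x_{\mathsf{th}}$, whence $R$ increases on $[0,pL]$ and decreases on $[pL,L]$. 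Your part (ii) is correct and matches the paper's argument, but it inherits the unproved unimodality from part (i).
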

\begin{proof}
 See Appendix \ref{appendix:exponential_individual}.
\end{proof}

\begin{figure}[t!]
\centering
\includegraphics[scale=0.3]{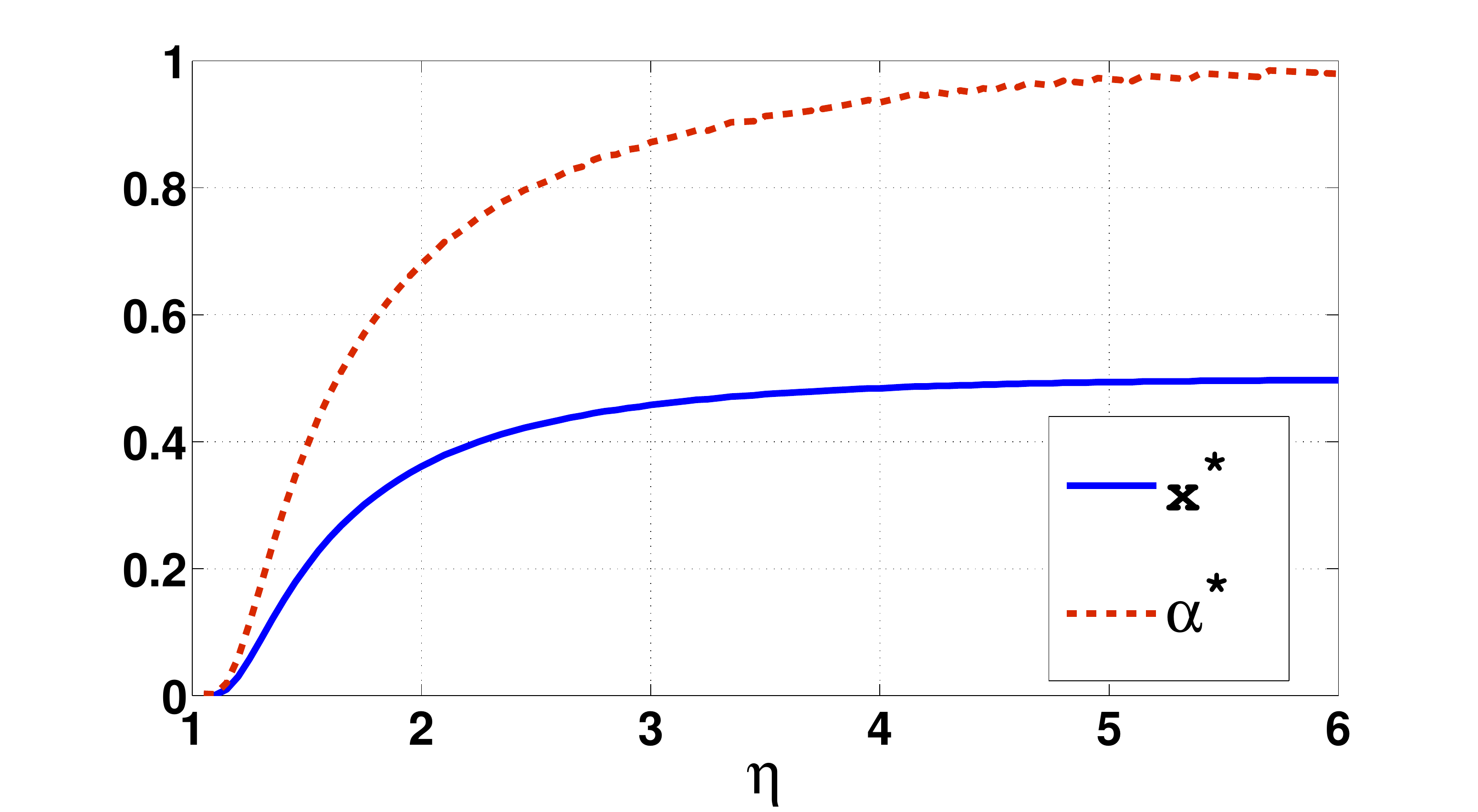}
\caption{Power law path-loss, single relay node, individual power constraint: $x^{*}$ and $\alpha^{*}$ versus $\eta$.}
\label{fig:power_law_full_duplex_df_position}
\vspace{-3mm}
\end{figure}

\subsubsection{Numerical Work} The variation of $x^{*}$ and $\alpha^{*}$ as a function of $\eta$ 
for the power-law path loss model are shown in Figure \ref{fig:power_law_full_duplex_df_position}. 
As $\eta$ increases, both $x^{*}$ and $\alpha^{*}$ increase. For large $\eta$, they are close to $0.5$ and $1$ respectively, which 
means that the relay works just as a repeater. At low attenuation the 
behaviour is different from exponential path-loss because in that case 
the two rates in (\ref{eqn:capacity_single_relay_equal_power}) can be equalized since channels gains are unbounded. For 
small $\eta$, relay is placed close to the source, $g_{0,1}$ is high and hence a small $\alpha$ suffices to equalize the two rates. 
Thus, we are in a situation similar to the $\lambda \geq \log 4$ case for exponential path-loss 
model. Hence, $\alpha^*$ and $x^*$ increase with $\lambda$.

On the other hand, the variation of $x^{*}$ and $\alpha^{*}$ with $\eta$ under the modified power-law path-loss model are shown 
in Figure~\ref{fig:modified_power_law_full_duplex_df_position}, with $\frac{b}{L×}=0.1$. 
Note that, for a fixed $\eta$ and $L$, $g_{0,1}$ remains constant at $b^{-\eta}$ 
if $r \in [0,b]$, but $g_{1,2}$ achieves its maximum over this set of relay locations 
at $r=b$. 
Hence, for any $\alpha$, the achievable rate is maximized at $r=b$ over the interval $r \in [0,b]$. 
For small values of $\eta$, 
$p \leq 0.1$ and hence $x^{*}=0.1$. 
Beyond the point where $0.1$ is the solution of 
$(x^{-\eta+1}-1)^{2}(1-(\frac{1}{x×}-1)^{-\eta})=(1-x)^{-\eta}-(\frac{1}{x×}-1)^{-\eta}$, 
the behaviour is similar to the power-law path-loss model. However, for those values of $\eta$ which result in 
$p \leq \frac{b}{L×}$, the value of $\alpha^{*}$ decreases 
with $\eta$. This happens because in this region, if $\eta$ increases, $g_{0,1}=b^{-\eta}$ decreases at a slower rate 
compared to $g_{1,2}=(L-b)^{-\eta}$ (since $b < \frac{L}{2×}$) and $g_{1,2}$ decreases at a slower rate compared to 
$g_{0,2}=L^{-\eta}$, and, hence, more source power should be dedicated for direct transmission to the sink as $\eta$ increases. 
The variation of $\alpha^{*}$ is not similar to that in Figure~\ref{fig:exponential_full_duplex_df_optimum_position} because here, 
unlike the exponential path-loss model, $g_{0,1}$ remains constant over the region $r \in [0,b]$ and also because $\eta>1$.

\begin{figure}[t!]
\centering
\includegraphics[scale=0.3]{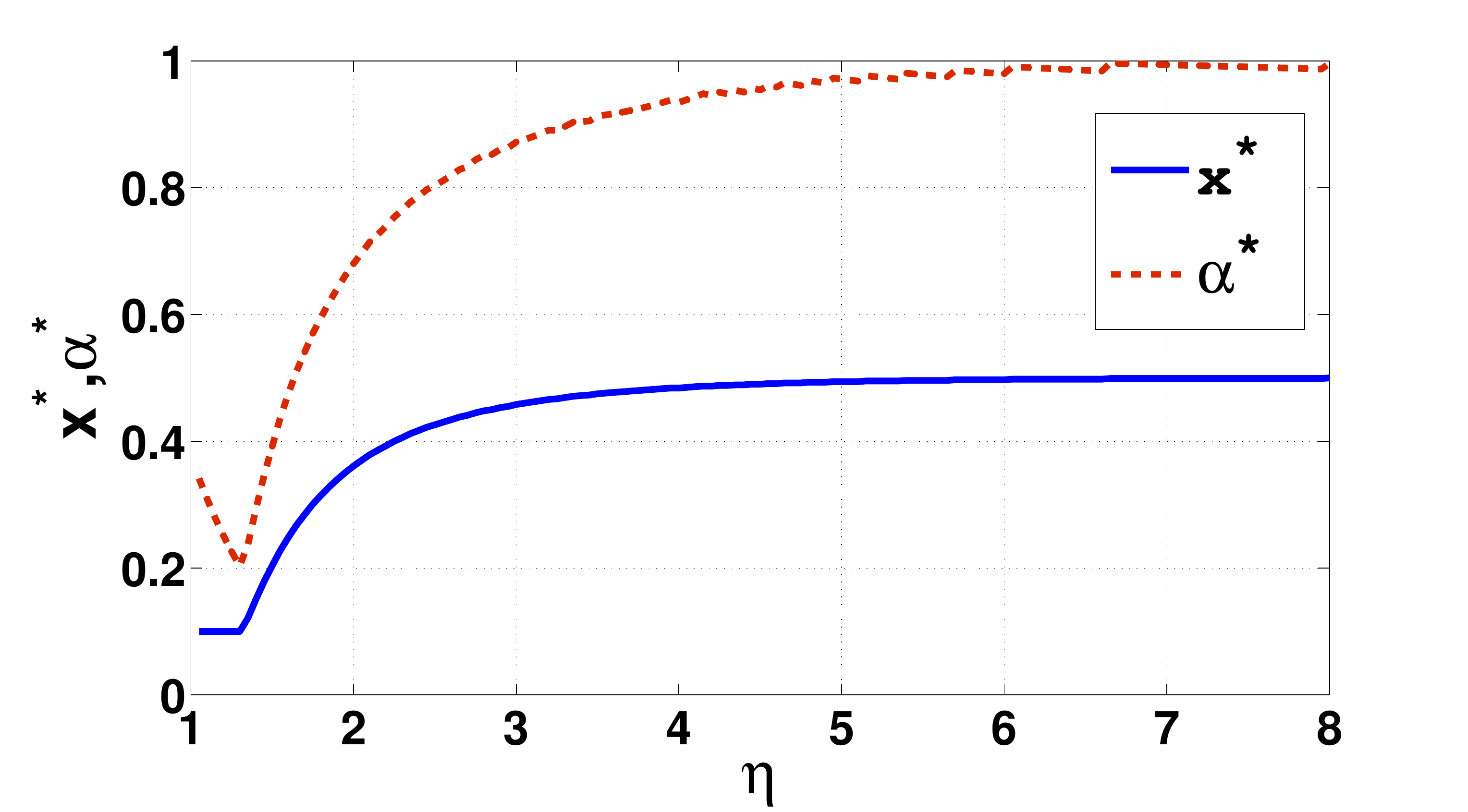}
\caption{Modified power-law path-loss (with $\frac{b}{L×}=0.1$), single relay node, individual power 
constraint: $x^{*}$ and $\alpha^{*}$ versus $\eta$.}
\label{fig:modified_power_law_full_duplex_df_position}
\vspace{-5mm}
\end{figure}

\vspace{-2mm}
\section{Multiple Relay Placement : Sum Power Constraint}
\label{sec:total_power_constraint}

In this section, we consider the optimal placement of relay nodes to maximize $R$ (see (\ref{eqn:achievable_rate_multirelay})), 
subject to a total power constraint on the source and relay nodes given by $\sum_{i=0}^{N}P_{i}=P_{T}$. We 
consider only the exponential 
path-loss model. We will first maximize 
$R$ in (\ref{eqn:achievable_rate_multirelay}) over $P_{i,j}, 0 \leq i < j \leq (N+1)$ for any given placement 
of nodes (i.e., given $y_{1}, y_{2},\cdots,y_{N}$). This will provide an expression of achievable rate in terms of channel gains, which has to 
be maximized over $y_{1}, y_{2},\cdots,y_{N}$. Let $\gamma_{k}:=\sum_{i=0}^{k-1}P_{i,k}$ for $k \in \{1,2,\cdots,N+1\}$. 
Hence, the sum power constraint becomes 
$\sum_{k=1}^{N+1}\gamma_{k}=P_{T}$. 

\begin{thm}\label{theorem:multirelay_capacity}
 \begin{enumerate}[label=(\roman{*})]
  \item For fixed location of relay nodes, the optimal power allocation that maximizes the achievable rate for the sum power constraint is given by: 
\begin{equation}
 P_{i,j}=
\begin{cases}
\frac{g_{i,j}}{\sum_{l=0}^{j-1}g_{l,j}×}\gamma_{j}\,\, & \forall 0 \leq i <j \leq (N+1) \\
0, \,\,  &\text{if}\,\,  j \leq i 
\end{cases}\label{eqn:power_gamma_relation}
\end{equation}
where
\begin{eqnarray}
\gamma_{1}&=&\frac{P_{T}×}{1+g_{0,1}\sum_{k=2}^{N+1} \frac{(g_{0,k-1}-g_{0,k})}{g_{0,k}g_{0,k-1}\sum_{l=0}^{k-1}\frac{1}{g_{0,l}×}×}  ×} \label{eqn:gamma_one}\\
 \gamma_{j}&=&\frac{g_{0,1}\frac{(g_{0,j-1}-g_{0,j})}{g_{0,j}g_{0,j-1}\sum_{l=0}^{j-1}\frac{1}{g_{0,l}×}×}×}{1+g_{0,1}\sum_{k=2}^{N+1}
  \frac{(g_{0,k-1}-g_{0,k})}{g_{0,k}g_{0,k-1}\sum_{l=0}^{k-1}\frac{1}{g_{0,l}×}×}  ×} P_{T} \,\,\,\, \forall \, j \geq 2 \label{eqn:gamma_k}\nonumber
\end{eqnarray}
\item The achievable rate optimized over the power allocation for a given placement of nodes is given by:
\footnotesize
\begin{equation}
 R^{opt}_{P_T}(y_1,y_2,\cdots,y_N)=C \bigg( \frac{\frac{P_{T}}{\sigma^{2}×}}
{\frac{1}{g_{0,1}×}+\sum_{k=2}^{N+1}
 \frac{(g_{0,k-1}-g_{0,k})}{g_{0,k}g_{0,k-1}\sum_{l=0}^{k-1}\frac{1}{g_{0,l}×}×}  ×} \bigg)\label{eqn:capacity_multirelay}
\end{equation}
 \end{enumerate}
\normalsize
\end{thm}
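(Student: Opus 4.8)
The plan is to reduce the joint power optimization to a one-dimensional max--min over the aggregate powers $\gamma_k$, exploiting the multiplicative structure of the exponential path-loss gains. First I would use $g_{i,j}=g_{0,j}/g_{0,i}$ (equivalently $h_{i,k}=\sqrt{g_{0,k}/g_{0,i}}$) to pull the $k$-dependence out of the inner sums in (\ref{eqn:achievable_rate_multirelay}). Writing $A_j:=\left(\sum_{i=0}^{j-1}\sqrt{P_{i,j}/g_{0,i}}\right)^2$, the bracketed term for receiver $k$ becomes $T_k=\frac{g_{0,k}}{\sigma^2}\sum_{j=1}^{k}A_j$, where $A_j$ no longer depends on $k$. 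By monotonicity of $C(\cdot)$, maximizing $R=\min_k C(T_k)=C(\min_k T_k)$ is the same as maximizing $\min_k T_k$. This is the key simplification: the only coupling across receivers is through the running sum $\sum_{j\le k}A_j$ scaled by the decreasing factor $g_{0,k}$.

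Next I would carry out the inner optimization: for each $j$, split the aggregate power $\gamma_j=\sum_{i=0}^{j-1}P_{i,j}$ among transmitters $0,\dots,j-1$ to maximize $A_j$. Cauchy--Schwarz gives $\left(\sum_{i=0}^{j-1}\sqrt{P_{i,j}}\cdot\tfrac{1}{\sqrt{g_{0,i}}}\right)^2\le\left(\sum_{i=0}^{j-1}P_{i,j}\right)\left(\sum_{i=0}^{j-1}\tfrac{1}{g_{0,i}}\right)=\gamma_j S_j$, where $S_j:=\sum_{l=0}^{j-1}1/g_{0,l}$, with equality exactly when $P_{i,j}\propto 1/g_{0,i}$; normalizing yields $P_{i,j}=\frac{g_{i,j}}{\sum_l g_{l,j}}\gamma_j$, which is (\ref{eqn:power_gamma_relation}). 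Crucially, $A_j$ enters every $T_k$ with $k\ge j$ with a positive coefficient, so maximizing each $A_j$ simultaneously maximizes all the $T_k$; there is no tension, and the reduction to the variables $\gamma_1,\dots,\gamma_{N+1}$ (subject to $\sum_k\gamma_k=P_T$) is lossless. After this step $T_k=\frac{g_{0,k}}{\sigma^2}B_k$ with $B_k:=\sum_{j=1}^{k}\gamma_j S_j$.

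The remaining step is the outer problem $\max_{\gamma\ge 0,\,\sum_k\gamma_k=P_T}\min_k T_k$, and I would argue that the optimum equalizes all $T_k$. Imposing $T_k\equiv v$ forces $B_k=v\sigma^2/g_{0,k}$, and since $\gamma_k S_k=B_k-B_{k-1}$ (with $B_0=0$) this gives $\gamma_1=v\sigma^2/g_{0,1}$ and $\gamma_k=\frac{v\sigma^2}{S_k}\cdot\frac{g_{0,k-1}-g_{0,k}}{g_{0,k}g_{0,k-1}}$ for $k\ge 2$. As $g_{0,\cdot}$ is strictly decreasing (distances strictly increasing), these are all positive, so the equalizing point is feasible, and imposing $\sum_k\gamma_k=P_T$ fixes $v$ and reproduces (\ref{eqn:gamma_one}), (\ref{eqn:gamma_k}) and the rate (\ref{eqn:capacity_multirelay}) via $R=C(v)$.

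To prove that equalization is \emph{optimal} — the crux of the argument — I would establish a matching upper bound by summation by parts. Any feasible $\gamma$ with $\min_k T_k=v$ satisfies $B_k\ge v\sigma^2/g_{0,k}$, and expanding the constraint gives $P_T=\sum_{k=1}^{N+1}\frac{B_k-B_{k-1}}{S_k}=\frac{B_{N+1}}{S_{N+1}}+\sum_{k=1}^{N}B_k\left(\frac{1}{S_k}-\frac{1}{S_{k+1}}\right)$. Since $S_k$ is strictly increasing, every coefficient of $B_k$ here is positive, so substituting the lower bounds yields $P_T\ge v\sigma^2 M$ for a constant $M>0$ depending only on the gains; hence $v\le P_T/(\sigma^2 M)$. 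The equalizing solution saturates each bound $B_k=v\sigma^2/g_{0,k}$, so it attains this value with equality, which identifies $M$ with the bracketed denominator of (\ref{eqn:capacity_multirelay}) and shows the equalizing choice is the unique maximizer. I expect the one delicate point to be exactly this equalization claim: a naive KKT/Lagrangian treatment of a nonsmooth max--min, complicated by possibly-zero $\gamma_k$, is awkward, whereas the summation-by-parts bound sidesteps it cleanly because all the coefficients turn out to share the same positive sign.
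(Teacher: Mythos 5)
Your proposal is correct, and its first two stages coincide exactly with the paper's proof: the same use of multiplicativity ($h_{i,k}=h_{0,k}/h_{0,i}$) to factor the receiver index out of each inner sum, and the same Cauchy--Schwarz step, with the same observation that each aggregate term enters every downstream rate constraint positively so the reduction to the variables $\gamma_1,\dots,\gamma_{N+1}$ is lossless (the paper does this sequentially, peeling off $j=N+1$, then $j=N$, etc., but the logic is identical and yields (\ref{eqn:power_gamma_relation})). Where you genuinely diverge is on the crux --- proving that the optimal $\gamma$ equalizes all $N+1$ rate terms. The paper formulates the resulting max--min as a linear program with an auxiliary variable $\zeta$, writes down the dual LP, constructs an explicit primal feasible point satisfying the equalization together with an explicit dual feasible point, verifies $\zeta^*=P_T\theta^*$ (zero duality gap), and concludes primal optimality; it then unwinds the equalization recursion $\gamma_k = d_k\sum_{j<k}a_j\gamma_j$ to get the closed forms. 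Your Abel-summation argument replaces all of this: from $B_k\geq v\sigma^2/g_{0,k}$ and $P_T=\frac{B_{N+1}}{S_{N+1}}+\sum_{k=1}^{N}B_k\bigl(\frac{1}{S_k}-\frac{1}{S_{k+1}}\bigr)$ with strictly positive coefficients (since $S_k$ is strictly increasing regardless of the node placement), you get the matching upper bound on $v$ directly, and as a bonus you get uniqueness of the optimal $\gamma$ for free --- strict positivity of the coefficients forces equality in every bound $B_k \geq v^*\sigma^2/g_{0,k}$ at the optimum, something the paper's duality certificate does not immediately deliver. Your approach is more elementary and self-contained; the paper's buys the standard LP machinery and explicit dual multipliers. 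One small correction: feasibility of the equalizing point needs only $g_{0,k}$ \emph{nonincreasing} (giving $\gamma_k\geq 0$), not strictly decreasing; when two nodes coincide, $g_{0,k}=g_{0,k-1}$ forces $\gamma_k=0$, a degenerate case your "all positive" phrasing excludes but which your argument handles verbatim with weak inequalities (the paper relegates this case to a remark after the theorem).
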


\begin{proof}
 The basic idea of the proof is to choose the power levels (i.e., $P_{i,j},\, 0 \leq i<j \leq N+1$) in 
(\ref{eqn:achievable_rate_multirelay}) so that all the terms in the $\min\{\cdot\}$ 
in (\ref{eqn:achievable_rate_multirelay}) become equal. This has been proved in 
\cite{reznik-etal04degraded-gaussian-multirelay-channel} via an inductive argument based on the 
coding scheme used in \cite{reznik-etal04degraded-gaussian-multirelay-channel}, for a degraded Gaussian multi-relay channel. 
However, we provide explicit expressions for $P_{i,j},\,0 \leq i<j \leq N+1$ and the 
achievable rate (optimized over power allocation) in terms of the power gains, and the proof depends on 
LP duality theory. 

See Appendix \ref{appendix:total_power_constraint} for the detailed proof.
\end{proof}

{\em Remarks and Discussion:}
\begin{itemize}
 \item We find that in order to maximize $R^{opt}_{P_T}(y_1,y_2,\cdots,y_N)$, we need to place the relay nodes such that 
$\frac{1}{g_{0,1}×}+\sum_{k=2}^{N+1}\frac{(g_{0,k-1}-g_{0,k})}{g_{0,k}g_{0,k-1}\sum_{l=0}^{k-1}\frac{1}{g_{0,l}×}×}$ is minimized. 
This quantity can be seen as the net attenuation the power $P_T$ faces.

\item When no relay is placed, the effective attenuation is $e^{\rho L}$. 
The ratio of the attenuation with no relay and the attenuation 
with relays is called the ``relaying gain'' $G$, and is defined as follows:
\begin{equation}
 G:=\frac{e^{\rho L}}{\frac{1}{g_{0,1}×}+\sum_{k=2}^{N+1}\frac{(g_{0,k-1}-g_{0,k})}{g_{0,k}g_{0,k-1}\sum_{l=0}^{k-1}\frac{1}{g_{0,l}×}×}×}
\label{eqn:gain_definition}
\end{equation}
The relaying gain $G$ captures the effect of placing the relay nodes on the SNR $\frac{P_T}{\sigma^2}$. Since, 
by placing relays we can only 
improve the rate, and we cannot increase the rate beyond $C(\frac{P_T}{\sigma^2×})$, $G$ can take values only between $1$ and 
$e^{\rho L}$. 

\item  From (\ref{eqn:power_gamma_relation}), it follows that 
any node $k<j$ will transmit at a higher power to node $j$, compared to any node preceding node $k$.

\item Note that we have derived Theorem \ref{theorem:multirelay_capacity} using the fact that $g_{0,k}$ is nonincreasing in $k$. 
If there exists some $k \geq 1$ 
such that $g_{0,k}=g_{0,k+1}$, i.e, if $k$-th and $(k+1)$-st nodes are placed at the same position, then $\gamma_{k+1} = 0$, i.e., the nodes $i < k$ do not direct any power specifically to relay $k+1$. However, 
relay $k+1$ can decode the symbols received at relay $k$, and those transmitted by relay $k$. Then
relay $(k+1)$ can transmit coherently with the nodes $l \leq k$ to improve effective received power in the nodes $j > k+1$. 

\end{itemize}

\begin{figure}[t!]
\centering
\includegraphics[scale=0.3]{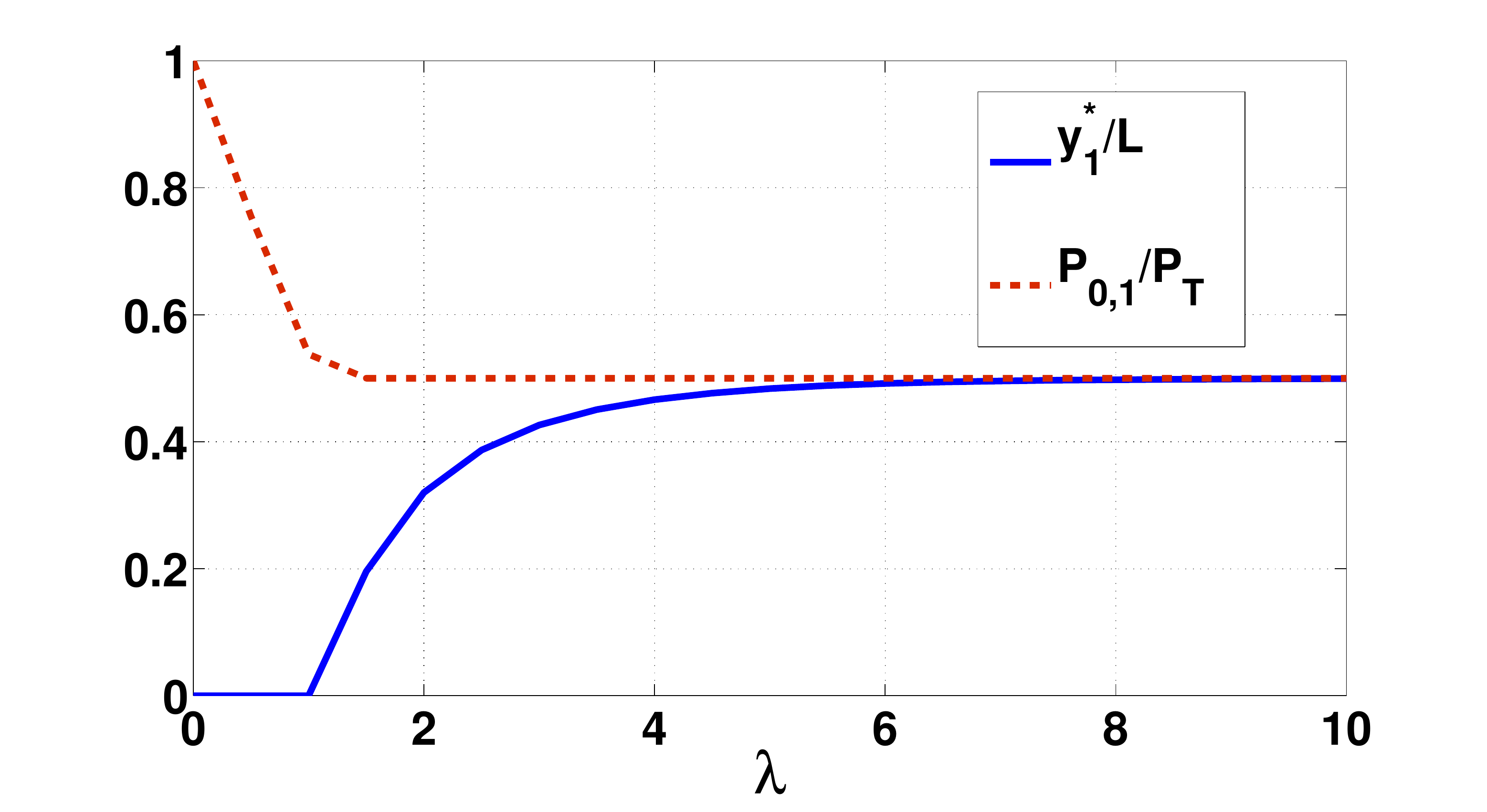}
\caption{Single relay placement, total power constraint, exponential path-loss: $\frac{y_{1}^{*}}{L×}$ and optimum $\frac{P_{0,1}}{P_{T}×}$ versus $\lambda$.}
\label{fig:single_relay_total_power}
\vspace{-3mm}
\end{figure}

\begin{figure*}[t]
\begin{minipage}[r]{0.32\linewidth}
\subfigure[$N=2$]{
\includegraphics[width=\linewidth]{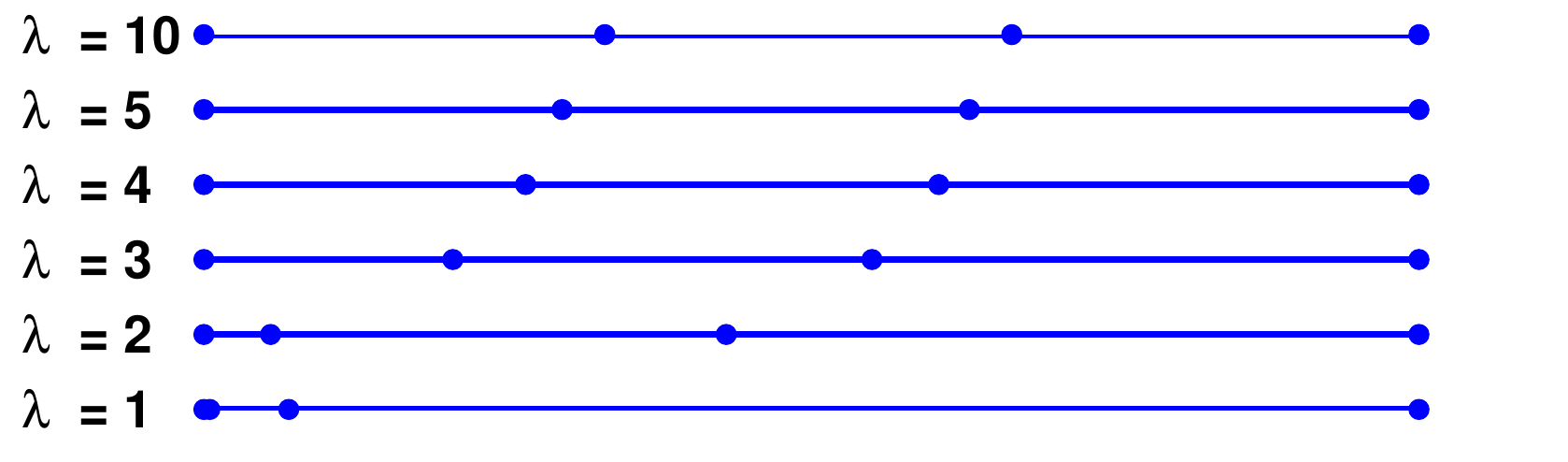}
\label{fig:2relays}}
\end{minipage}  \hfill
\begin{minipage}[c]{0.32\linewidth}
\subfigure[$N=3$]{
\includegraphics[width=\linewidth]{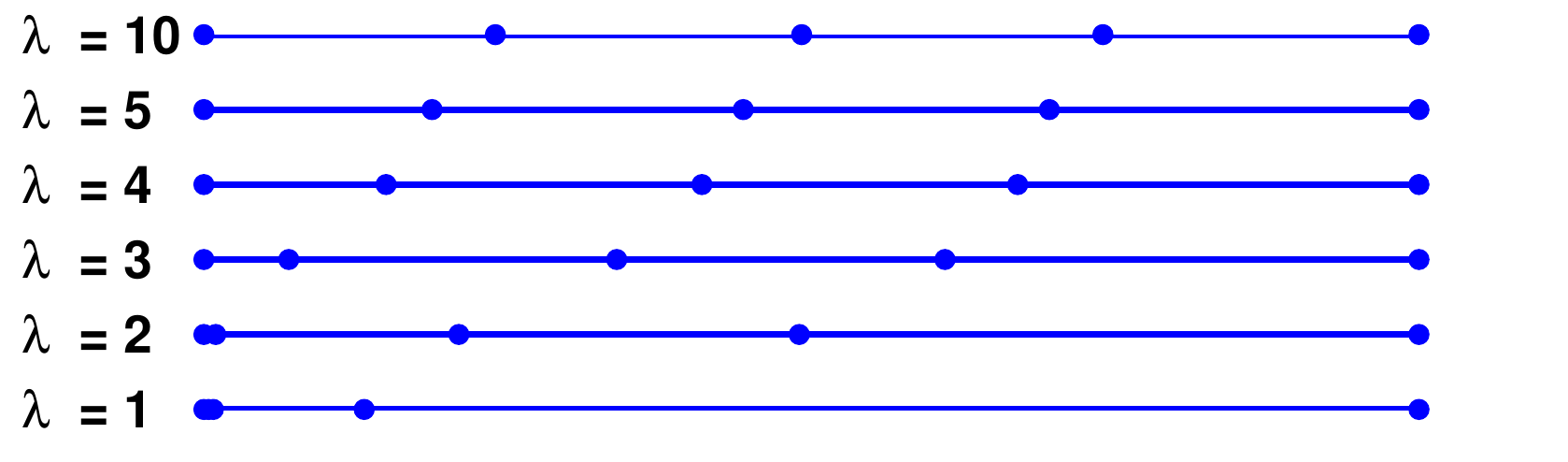}
\label{fig:3relays}}
\end{minipage} \hfill
\begin{minipage}[r]{0.32\linewidth}
\subfigure[$N=5$]{
\includegraphics[width=\linewidth]{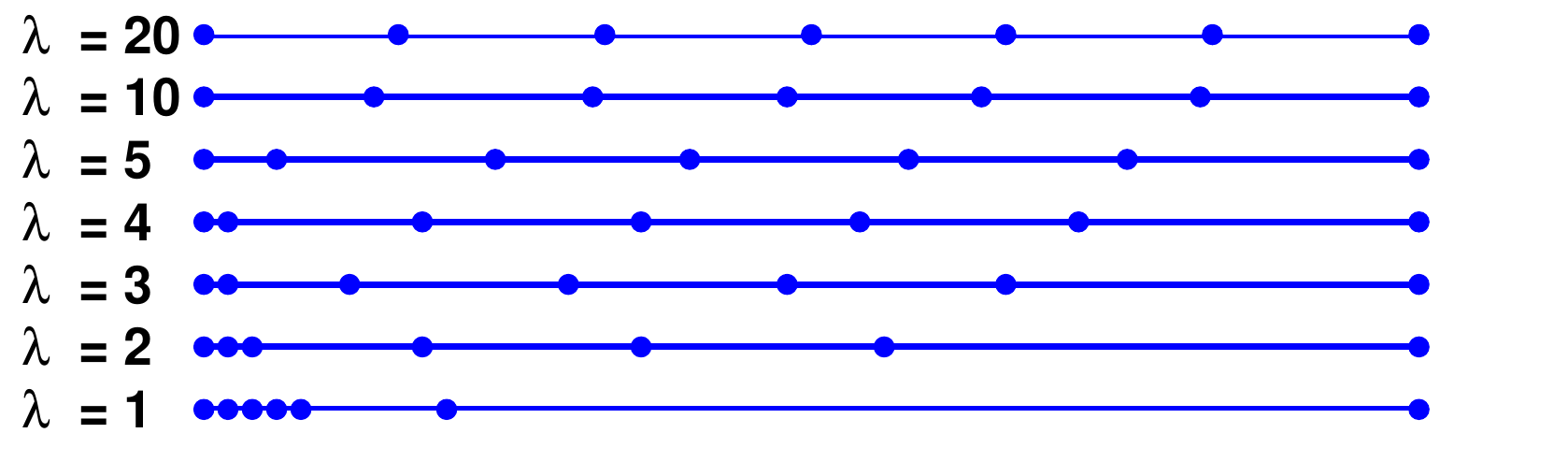}
\label{fig:5relays}}
\end{minipage}
\caption[]{$N$ relays, exponential path-loss model: depiction of the optimal relay positions for $N = 2, 3, 5$ for various values of $\lambda$.}
\label{fig:multirelay-optimal-location}
\end{figure*}


\subsection{Optimal Placement of a Single Relay Node}

\begin{thm}\label{theorem:single_relay_total_power}
 For the single relay node placement problem with sum power constraint and exponential path-loss model, the normalized optimum relay 
location $\frac{y_{1}^{*}}{L×}$, power allocation and optimized achievable rate are given as follows : 
 
\begin{enumerate}[label=(\roman{*})]
 
\item {\em For $\lambda \leq \log 3$}, $\frac{y_{1}^{*}}{L×}=0$, $P_{0,1}=\frac{2P_{T}}{e^{\lambda}+1×}$, 
  $P_{0,2}=P_{1,2}=\frac{e^{\lambda}-1}{e^{\lambda}+1×}\frac{P_{T}}{2×}$ and $R^{*}=C \left(\frac{2P_{T}}{(e^{\lambda}+1)\sigma^{2}×}\right)$.

\item {\em For $\lambda \geq \log 3$}, $\frac{y_{1}^{*}}{L×}=\frac{1}{\lambda×} \log \left(\sqrt{e^{\lambda}+1}-1\right)$, 
$P_{0,1}=\frac{P_{T}}{2×}$, $P_{0,2}=\frac{1}{\sqrt{e^{\lambda}+1}×}\frac{P_{T}}{2×}$, 
$P_{1,2}=\frac{\sqrt{e^{\lambda}+1}-1}{\sqrt{e^{\lambda}+1}×}\frac{P_{T}}{2×}$ and 
$R^{*}=C \left( \frac{1}{\sqrt{e^{\lambda}+1}-1×}\frac{P_{T}}{2 \sigma^{2}×} \right)$
\end{enumerate}
\end{thm}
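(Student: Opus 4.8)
The plan is to invoke Theorem~\ref{theorem:multirelay_capacity} with $N=1$ and collapse the whole problem to a one-dimensional minimization. Since $C(\cdot)$ is strictly increasing, maximizing $R^{opt}_{P_T}(y_1)$ from (\ref{eqn:capacity_multirelay}) over the relay position is equivalent to \emph{minimizing} the net attenuation
\begin{equation*}
D(y_1) = \frac{1}{g_{0,1}} + \frac{g_{0,1}-g_{0,2}}{g_{0,2}\,g_{0,1}\,(1 + 1/g_{0,1})}
\end{equation*}
over $y_1 \in [0,L]$, where under the exponential model $g_{0,1}=e^{-\lambda x}$, $g_{1,2}=e^{-\lambda(1-x)}$ and $g_{0,2}=e^{-\lambda}$ with $x = y_1/L$ (here I use $g_{0,0}=1$, so that the inner sum for $k=2$ is $\sum_{l=0}^{1}1/g_{0,l} = 1 + 1/g_{0,1}$).

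First I would pass to the variable $a := g_{0,1}=e^{-\lambda x}$, which decreases monotonically from $1$ (at $x=0$) to $c:=e^{-\lambda}$ (at $x=1$); thus minimizing $D$ over $x\in[0,1]$ is the same as minimizing $D(a) = \tfrac{1}{a} + \tfrac{a-c}{c(a+1)}$ over $a\in[c,1]$. Differentiating gives $D'(a) = -\tfrac{1}{a^2} + \tfrac{1+c}{c(a+1)^2}$, and setting $D'(a)=0$ reduces to the quadratic $a^2 - 2ca - c = 0$, whose only positive root is $a^* = c + \sqrt{c^2+c}$.

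The case split then comes from testing whether $a^*$ lies in $[c,1]$. Since $\sqrt{c^2+c}>0$ we always have $a^*>c$, so the only question is whether $a^*\le 1$; squaring the equivalent inequality $\sqrt{1+e^\lambda}\le e^\lambda-1$ shows $a^*\le 1 \iff e^\lambda \ge 3$. Equivalently, $D'(1)=-1+\tfrac{1+c}{4c}$ satisfies $D'(1)<0 \iff c>1/3 \iff \lambda<\log 3$. Hence for $\lambda\le\log 3$ the function $D$ is strictly decreasing on $[c,1]$ (its only stationary point $a^*$ lies to the right of $1$), so the minimizer is the endpoint $a=1$, i.e. $x^*=0$; while for $\lambda\ge\log 3$ we have $D'(c)<0<D'(1)$ with a unique interior stationary point $a^*\in(c,1)$, which is therefore the minimizer.

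It remains to translate each regime back into the stated quantities. Writing $s:=\sqrt{e^\lambda+1}$ and using $c = 1/(s^2-1)$, the interior root simplifies cleanly to $a^* = 1/(s-1)$, whence $x^* = -\tfrac1\lambda\log a^* = \tfrac1\lambda\log(\sqrt{e^\lambda+1}-1)$, matching (ii). Substituting $a^*$ (resp.\ $a=1$) into the power-allocation formulas (\ref{eqn:gamma_one})--(\ref{eqn:power_gamma_relation}) yields $\gamma_1=P_{0,1}$ and $\gamma_2=P_{0,2}+P_{1,2}$, the latter split in the ratio $g_{0,2}:g_{1,2}$, and the optimized rate then follows from (\ref{eqn:capacity_multirelay}) by plugging in $D(a^*)=2(s-1)$ (resp.\ $D(1)=(1+c)/(2c)$). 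I expect the main obstacle to be purely algebraic: carrying out these back-substitutions so that $D$, $\gamma_1$, and the individual $P_{i,j}$ collapse to the advertised closed forms---in particular verifying the striking cancellation $\gamma_1=P_T/2$ that holds exactly at $a=a^*$---together with the bookkeeping needed to certify uniqueness of the minimizer in each regime.
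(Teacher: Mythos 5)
Your proposal is correct and follows essentially the same route as the paper: reduce via Theorem~\ref{theorem:multirelay_capacity} to a one-dimensional minimization of the net attenuation, locate the unique stationary point, obtain the threshold $\lambda = \log 3$ from whether it falls inside the feasible interval, and back-substitute into (\ref{eqn:power_gamma_relation})--(\ref{eqn:gamma_one}) and (\ref{eqn:capacity_multirelay}). The only (cosmetic) difference is your choice of variable $a = e^{-\lambda x}$ with a sign analysis of the quadratic $a^2 - 2ca - c$, where the paper substitutes $z_1 = e^{\rho r_1} = 1/a$, turning the objective into $z_1 - 1 + \frac{e^{\rho L}+1}{z_1+1}$, whose convexity makes uniqueness of the minimizer immediate; all your stated closed forms ($a^* = 1/(\sqrt{e^\lambda+1}-1)$, $D(a^*) = 2(\sqrt{e^\lambda+1}-1)$, $\gamma_1 = P_T/2$) check out.
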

\begin{proof}
 See Appendix \ref{appendix:total_power_constraint}.
\end{proof}

{\em Remarks and Discussion:}
\begin{itemize}
 \item  It is easy to check that $R^{*}$ obtained in Theorem \ref{theorem:single_relay_total_power} is strictly greater than the 
AWGN capacity $C \left(\frac{P_{T}}{\sigma^{2}×}e^{-\lambda}\right)$ for all $\lambda>0$. This happens because the source and 
relay transmit coherently to the sink. $R^{*}$ becomes equal to the AWGN capacity only at $\lambda=0$. At $\lambda=0$, 
we do not use the relay since the sink can decode any message that the relay is able to decode.

\item The variation of $\frac{y_{1}^{*}}{L×}$ and $\frac{P_{0,1}}{P_{T}×}$ with $\lambda$ has been shown in 
Figure~\ref{fig:single_relay_total_power}.
We observe that (from Figure~\ref{fig:single_relay_total_power} and Theorem~\ref{theorem:single_relay_total_power}) 
$\lim_{\lambda \rightarrow \infty} \frac{y_{1}^{*}}{L}=\frac{1}{2×}$, $\lim_{\lambda \rightarrow \infty}P_{0,2}=0$
 and $\lim_{\lambda \rightarrow 0}P_{0,1}=P_{T}$. For large values of $\lambda$, source and relay cooperation provides negligible 
benefit since source to sink attenuation is very high. So it is optimal to 
place the relay at a distance $\frac{L}{2×}$. The relay 
works as a repeater which forwards data received from the source to the sink. 
\end{itemize}

\vspace{-1mm}
\subsection{Optimal Relay Placement for a Multi-Relay Channel}\label{subsection:properties_of_gain}
\vspace{-1mm}
As we discussed earlier, we need to place $N$ relay nodes such that 
$\frac{1}{g_{0,1}×}+\sum_{k=2}^{N+1} \frac{(g_{0,k-1}-g_{0,k})}{g_{0,k}g_{0,k-1}\sum_{l=0}^{k-1}\frac{1}{g_{0,l}×}×}$ is minimized.
Here $g_{0,k}=e^{-\rho y_{k}}$. We have the constraint $0 \leq y_{1} \leq y_{2} \leq \cdots \leq y_{N} \leq y_{N+1}=L$. Now, writing 
$z_{k}=e^{\rho y_{k}}$, and defining $z_{0}:=1$, we arrive at the following optimization problem:
\begin{eqnarray}
& & \min \bigg\{ z_{1}+\sum_{k=2}^{N+1} \frac{z_{k}-z_{k-1}}{\sum_{l=0}^{k-1} z_{l}}\bigg\}\nonumber\\
& s.t & \,\, 1 \leq z_{1} \leq \cdots \leq z_{N} \leq z_{N+1} = e^{\rho L} \label{eqn:multirelay_optimization}
\end{eqnarray}
One special feature of the objective function is 
that it is convex in each of the variables $z_{1}, z_{2},\cdots, z_{N}$. The objective function is sum of linear fractionals, and
the constraints are linear. 

{\em Remark:} From optimization problem (\ref{eqn:multirelay_optimization}) we observe that optimum 
$z_{1},z_{2},\cdots,z_{N}$ depend only on $\lambda:=\rho L$. Since 
$z_{k}=e^{\lambda \frac{y_{k}}{L×}}$, we find that normalized optimal distance of relays from the source depend only on $\lambda$ and 
$N$.

\begin{thm}\label{theorem:capacity_increasing_with_N}
 For fixed $\rho$, $L$ and $\sigma^{2}$, the optimized achievable rate $R^{*}$ for a sum power constraint 
\textit{strictly} increases with the 
number of relay nodes.
\end{thm}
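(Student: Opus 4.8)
The plan is to reduce the claim to a strict monotonicity property of the ``net attenuation'' appearing in the denominator of (\ref{eqn:capacity_multirelay}), and then to exhibit, for every $N$, a strictly better feasible configuration that uses one extra relay. Since $C(\cdot)$ is strictly increasing, showing that $R^{*}$ strictly increases with $N$ is equivalent to showing that the minimum value of the objective in (\ref{eqn:multirelay_optimization}) strictly decreases when one more relay is allowed. Writing $z_{0}:=1$ and $z_{N+1}=e^{\rho L}$, I would first record the convenient identity $z_{1}+\sum_{k=2}^{N+1}\frac{z_{k}-z_{k-1}}{\sum_{l=0}^{k-1}z_{l}} = 1+\sum_{k=1}^{N+1}\frac{z_{k}-z_{k-1}}{\sum_{l=0}^{k-1}z_{l}}$, which puts every gap $(z_{k-1},z_{k})$ on the same footing. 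Let $A_{N}^{*}$ denote the minimum of this quantity over the feasible set $1=z_{0}\le z_{1}\le\cdots\le z_{N}\le z_{N+1}=e^{\rho L}$, so that $R^{*}=C\!\left(\frac{P_{T}/\sigma^{2}}{A_{N}^{*}}\right)$.

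The core step is an insertion argument. Take an optimal $N$-relay placement $z_{1}^{*},\ldots,z_{N}^{*}$ attaining $A_{N}^{*}$. Because $\rho L>0$ we have $e^{\rho L}>1$, so the points $z_{0}=1,z_{1}^{*},\ldots,z_{N}^{*},z_{N+1}=e^{\rho L}$ cannot all coincide; hence at least one consecutive gap, say $(z_{m},z_{m+1})$ with $z_{m}<z_{m+1}$, has positive length. I would then build a feasible $(N+1)$-relay configuration by leaving every existing node fixed and inserting a new relay at some $t\in(z_{m},z_{m+1})$. This replaces the single term $\frac{z_{m+1}-z_{m}}{S_{m}}$, where $S_{m}:=\sum_{l=0}^{m}z_{l}$, by the two terms $\frac{t-z_{m}}{S_{m}}+\frac{z_{m+1}-t}{S_{m}+t}$, and a short simplification gives the exact change
\begin{equation}
\Delta(t)=\frac{t-z_{m}}{S_{m}}+\frac{z_{m+1}-t}{S_{m}+t}-\frac{z_{m+1}-z_{m}}{S_{m}}=-\frac{t\,(z_{m+1}-t)}{S_{m}\,(S_{m}+t)}.\nonumber
\end{equation}
For $t\in(z_{m},z_{m+1})$ one has $t>0$, $z_{m+1}-t>0$, and $S_{m}\ge z_{0}=1>0$, so every factor is positive and $\Delta(t)<0$.

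Combining the pieces, the inserted configuration is feasible for the $(N+1)$-relay problem and attains objective value $A_{N}^{*}+\Delta(t)<A_{N}^{*}$; therefore $A_{N+1}^{*}\le A_{N}^{*}+\Delta(t)<A_{N}^{*}$. Feeding this back through (\ref{eqn:capacity_multirelay}) and using the strict monotonicity of $C(\cdot)$ yields $R^{*}_{N+1}>R^{*}_{N}$, as claimed. I expect the only genuinely delicate point to be the guarantee that a positive-length gap always exists into which a relay can be inserted; this is exactly where the hypothesis $\lambda=\rho L>0$ enters, ruling out the degenerate case (which corresponds to $\lambda=0$, where relaying gives no benefit). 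Everything else is elementary: the algebraic collapse of $\Delta(t)$, and the preliminary observation that placing the extra relay \emph{at} the sink makes the final gap term vanish and reproduces the $N$-relay value, already giving $A_{N+1}^{*}\le A_{N}^{*}$ and leaving the strict improvement as the only substantive thing to establish.
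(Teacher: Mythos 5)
Your proposal is correct and is essentially the paper's own argument: the paper likewise takes an optimal $N$-relay solution of (\ref{eqn:multirelay_optimization}), notes that some gap satisfies $y_{i+1}^{*}>y_{i}^{*}$, and inserts a new relay strictly inside it to obtain a feasible $(N+1)$-relay configuration with strictly smaller objective, concluding via the strict monotonicity of $C(\cdot)$. One small correction: your $\Delta(t)$ is not the \emph{exact} change unless you insert into the last gap, since every later term $\frac{z_{k}-z_{k-1}}{\sum_{l=0}^{k-1}z_{l}}$ with $k>m+1$ also has its denominator increased by $t$ (the paper accounts for this explicitly in (\ref{eqn:capacity_increasing_with_N})); these additional changes are nonpositive, so the inserted configuration's objective is at most $A_{N}^{*}+\Delta(t)$ and your conclusion $A_{N+1}^{*}<A_{N}^{*}$ stands unharmed.
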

\begin{proof}
 See Appendix \ref{appendix:total_power_constraint}.
\end{proof}

Let us denote the relaying gain $G$ in (\ref{eqn:gain_definition}) by $G(N,\lambda)$.
\begin{thm}\label{theorem:G_increasing_in_lambda}
 For any fixed number of relays $N \geq 1$, $G(N,\lambda)$ is increasing in $\lambda$.
\end{thm}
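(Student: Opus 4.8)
The plan is to work entirely with the reformulation \eqref{eqn:multirelay_optimization}. Writing the normalized relay positions $x_k := y_k/L$ and $a_k := z_k = e^{\lambda x_k}$, so that $a_0 = 1$, $a_{N+1} = e^{\lambda}$, and $0 = x_0 \le x_1 \le \cdots \le x_N \le x_{N+1} = 1$, the relaying gain becomes
\begin{equation}
G(N,\lambda) = \frac{e^{\lambda}}{D^{*}(\lambda)}, \qquad D^{*}(\lambda) = \min_{x} D(\lambda,x), \qquad D(\lambda,x) = a_1 + \sum_{k=2}^{N+1}\frac{a_k - a_{k-1}}{\sum_{l=0}^{k-1} a_l}.
\end{equation}
The remark following \eqref{eqn:multirelay_optimization} guarantees the minimizer depends only on $(N,\lambda)$, so $G$ is well defined. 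Rather than differentiate $D^{*}$ directly, which would force me to track the moving minimizer and the active constraints, I would fix a placement, show that its gain is monotone in $\lambda$, and then pass to the maximum using $G(N,\lambda) = \max_x \big(e^{\lambda}/D(\lambda,x)\big)$.

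The crux is to show that for an \emph{arbitrary} fixed feasible $x$ the per-placement gain $g(\lambda;x) := e^{\lambda}/D(\lambda,x)$ is nondecreasing in $\lambda$. Since $\frac{d}{d\lambda}\log g = 1 - D_\lambda/D$, with $D_\lambda = \partial D/\partial\lambda$ computed at fixed $x$ (so $\frac{d}{d\lambda} a_l = x_l a_l$), this amounts to proving $D_\lambda \le D$. Introducing $P_k := \sum_{l=0}^{k-1} a_l$, $P_k' := \sum_{l=0}^{k-1} x_l a_l$, and $c_k := a_k(1-x_k) \ge 0$ (note $c_0 = 1$, $c_{N+1} = 0$), a short differentiation yields
\begin{equation}
D - D_\lambda = c_1 + \sum_{k=2}^{N+1}\frac{c_k - c_{k-1}}{P_k} + \sum_{k=2}^{N+1}\frac{(a_k - a_{k-1})\,P_k'}{P_k^{2}}.
\end{equation}
The final sum is manifestly nonnegative. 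For the remaining terms I would use $P_1 = a_0 = 1$ to absorb $c_1$ and then apply summation by parts, exploiting $c_{N+1} = 0$, $c_0 = 1$, and $P_{k+1} > P_k$:
\begin{equation}
c_1 + \sum_{k=2}^{N+1}\frac{c_k - c_{k-1}}{P_k} = 1 + \sum_{k=1}^{N+1}\frac{c_k - c_{k-1}}{P_k} = \sum_{k=1}^{N} c_k\Big(\frac{1}{P_k} - \frac{1}{P_{k+1}}\Big) \ge 0.
\end{equation}
Hence $D_\lambda \le D$, and $g(\cdot;x)$ is nondecreasing.

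It then follows immediately that $G(N,\lambda)$, being a pointwise maximum over $x$ of nondecreasing functions of $\lambda$, is nondecreasing. To upgrade this to \emph{strictly} increasing I would argue by comparison: for $\lambda_1 < \lambda_2$, feed the optimal placement $x^{*}(\lambda_1)$ into $g(\lambda_2;\cdot)$. An optimal placement is never the degenerate ``all relays at the sink'' configuration (which gives $G = 1$ and is strictly suboptimal for $\lambda_1 > 0$), so at least one $c_k$ with $1 \le k \le N$ is strictly positive; the bound above is then strictly positive for all $\lambda$, making $g(\cdot;x^{*}(\lambda_1))$ strictly increasing, so that $G(N,\lambda_2) \ge g(\lambda_2;x^{*}(\lambda_1)) > g(\lambda_1;x^{*}(\lambda_1)) = G(N,\lambda_1)$.

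The main obstacle is precisely the inequality $D_\lambda \le D$: a naive term-by-term comparison fails because $c_k = a_k(1-x_k)$ is not monotone in $k$, so the telescoping structure must be exposed through summation by parts, which converts the differences $c_k - c_{k-1}$ into the non-negative combination $\sum_k c_k\big(1/P_k - 1/P_{k+1}\big)$ and makes the sign transparent. Everything else is bookkeeping around this identity.
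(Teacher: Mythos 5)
Your proof is correct, but it takes a genuinely different route from the paper's. The paper also uses the ``fix a feasible placement, exploit monotonicity in $\lambda$, then invoke optimality at $\lambda_2$'' template, but it freezes the \emph{unnormalized} variables $z_k=e^{\rho y_k}$ of the $\lambda_1$-optimal solution (i.e., the physical relay positions, letting only the sink recede). With the $z_k$ fixed, the denominator in $G$ is affine in $e^{\lambda}$: it equals $A+Be^{\lambda}$ with $B=1/(1+z_1^*+\cdots+z_N^*)>0$ and $A=z_1^*-\frac{z_N^*}{1+z_1^*+\cdots+z_N^*}+\sum_{k=2}^{N}\frac{z_k^*-z_{k-1}^*}{\sum_{l=0}^{k-1}z_l^*}\geq 0$, so $e^{\lambda}/(A+Be^{\lambda})$ is trivially nondecreasing (the only check needed is $A\geq 0$, which follows from $z_1^*\geq 1$); feasibility at $\lambda_2$ holds since $z_N^*\leq e^{\lambda_1}<e^{\lambda_2}$. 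You instead freeze the \emph{normalized} placement $x$, under which every $a_k=e^{\lambda x_k}$ moves with $\lambda$, so the monotonicity of $g(\lambda;x)$ is no longer free and you must establish the differential inequality $D_\lambda\leq D$; your computation and the summation-by-parts identity
\begin{equation*}
c_1+\sum_{k=2}^{N+1}\frac{c_k-c_{k-1}}{P_k}=\sum_{k=1}^{N}c_k\left(\frac{1}{P_k}-\frac{1}{P_{k+1}}\right)\geq 0
\end{equation*}
(using $c_0=1$, $c_{N+1}=0$, $P_1=1$) are correct, as is the nonnegativity of the residual sum $\sum_{k=2}^{N+1}(a_k-a_{k-1})P_k'/P_k^2$. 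What each approach buys: the paper's parametrization makes the monotonicity a one-line observation at the cost of being tied to this particular algebraic form, and its chain of inequalities is stated weakly (so it literally proves only $G(N,\lambda_1)\leq G(N,\lambda_2)$, though strictness is recoverable since in fact $A>0$); your version requires the Abel-summation work but yields the stronger pointwise statement that \emph{every} fixed normalized placement has nondecreasing gain, and it cleanly delivers strict increase of $G$ --- your strictness step is sound, since the needed fact that the all-at-sink configuration (giving $G=1$) is strictly suboptimal for $\lambda_1>0$ is supplied by the paper's own results (the remark after Theorem~\ref{theorem:single_relay_total_power} and Theorem~\ref{theorem:capacity_increasing_with_N}), whence some $x_k<1$, some $c_k>0$, and $\frac{d}{d\lambda}\log g(\lambda;x^*(\lambda_1))>0$ on all of $[\lambda_1,\lambda_2]$.
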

\begin{proof}
 See Appendix \ref{appendix:total_power_constraint}.
\end{proof}

\vspace{-2mm}
\subsection{Numerical Work}
\vspace{-1mm}
We discretize the interval $[0,L]$ and run a search program to 
find normalized optimal relay locations for different values of $\lambda$ and $N$. 
 The results 
are summarized in Figure~\ref{fig:multirelay-optimal-location}. We observe that at 
low attenuation (small $\lambda$), relay nodes are clustered near the source node and are often at the source node, whereas 
at high attenuation (large $\lambda$) they are almost uniformly 
placed along the line. For large $\lambda$, the effect of long distance 
between any two adjacent nodes dominates the gain obtained by 
coherent relaying. Hence, it is beneficial to minimize the maximum distance between any two adjacent nodes and thus 
multihopping is a better strategy in this case. On the other hand, if attenuation is 
low, the gain obtained by coherent transmission is dominant.
 In order to allow this, relays should be able to receive sufficient information from their previous 
nodes. Thus, they tend to be clustered 
near the source.\footnote{At low attenuation, one or more than one relay nodes are placed very close to the source. We 
believe that some of them 
will indeed be placed at the source, but it is not showing up in Figure \ref{fig:multirelay-optimal-location}
 because we discretized the interval $[0,L]$ to run a 
search program to find optimum relay locations.}

\begin{figure}[t!]
\centering
\includegraphics[scale=0.3]{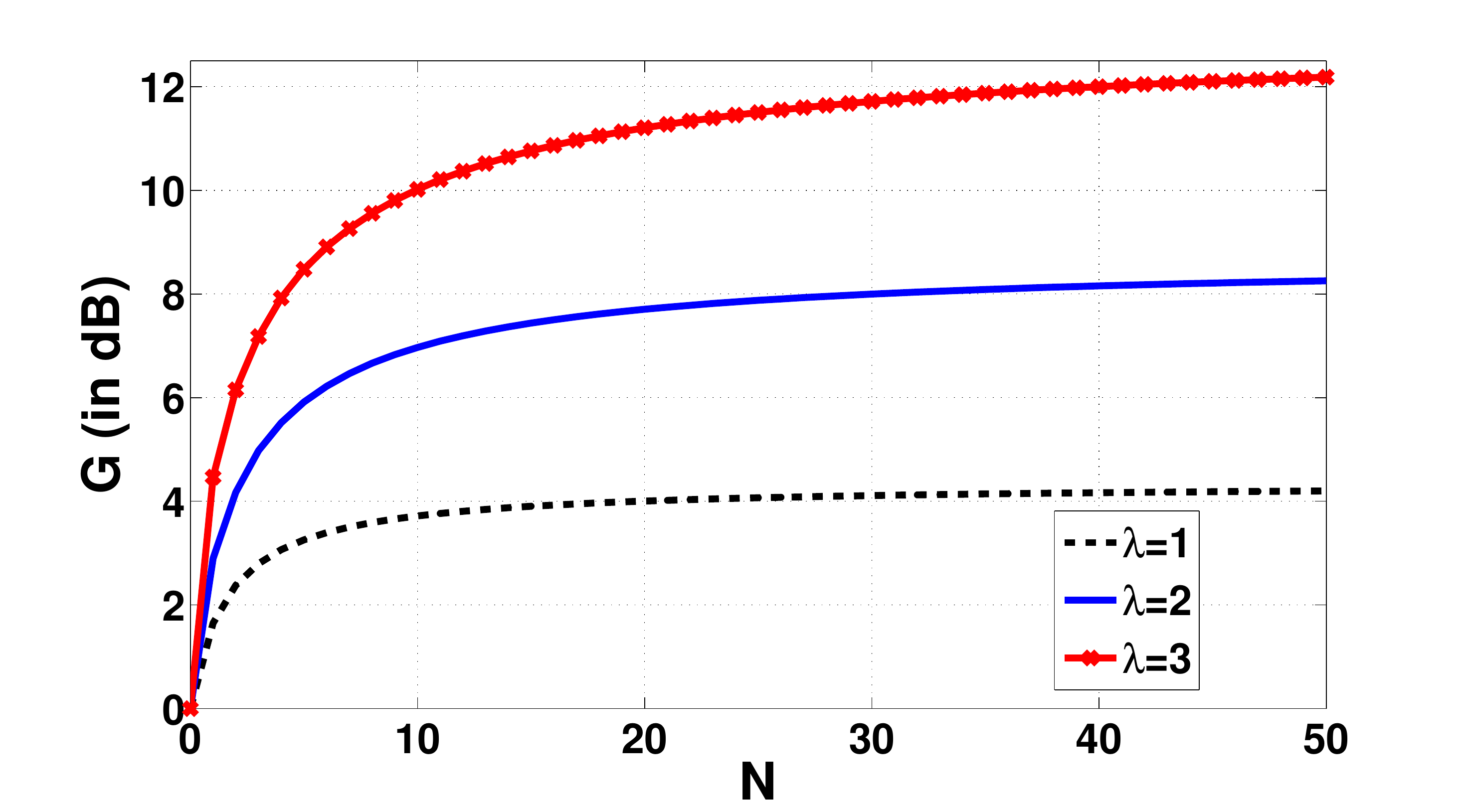}
\caption{$G$ vs $N$ for total power constraint.}
\label{fig:G_vs_N}
\end{figure}

In Figure~\ref{fig:G_vs_N} we plot the relaying gain $G(N,\lambda)$ in dB  vs. 
the number of relays $N$, for various values of $\lambda$. As proved in Theorem~\ref{theorem:capacity_increasing_with_N}, 
we see that the $G(N,\lambda)$ increases with $N$ for fixed $\lambda$. 
On the other hand, $G(N,\lambda)$ increases with $\lambda$ for fixed $N$, as proved in Theorem~\ref{theorem:G_increasing_in_lambda}.

\subsection{Uniformly Placed Relays, Large $N$}
\vspace{-1mm}
If the relays are uniformly placed, the behaviour of $R^{opt}_{P_T}(y_1,y_2,\cdots,y_N)$ (called $R_{N}$ in the next theorem) for large number of relays 
is captured by the following Theorem.

\begin{thm}\label{theorem:large_nodes_uniform}
For the exponential path-loss model with total power constraint, if $N$ relay nodes are placed uniformly between the source and sink, 
resulting in $R_{N}$ achievable rate, then $\liminf_N R_{N} \geq C\left(\frac{P_{T}/\sigma^{2}}{2-e^{-\lambda}×}\right)$ and 
$\limsup_N R_{N} \leq C \left(\frac{P_{T}}{\sigma^{2}×}\right)$.
\end{thm}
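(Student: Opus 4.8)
\emph{The plan} is to reduce everything to the denominator of the optimized rate formula (\ref{eqn:capacity_multirelay}). Writing $R_N = C\left(\frac{P_T/\sigma^2}{D_N}\right)$ with
\[
D_N := \frac{1}{g_{0,1}} + \sum_{k=2}^{N+1}\frac{g_{0,k-1}-g_{0,k}}{g_{0,k}\,g_{0,k-1}\,\sum_{l=0}^{k-1}g_{0,l}^{-1}},
\]
and using that $C(\cdot)$ is continuous and strictly increasing while $x\mapsto (P_T/\sigma^2)/x$ is decreasing, the two claims are \emph{equivalent} to $\limsup_N D_N \le 2-e^{-\lambda}$ (which gives the $\liminf$ bound on $R_N$) and $\liminf_N D_N \ge 1$ (which gives the $\limsup$ bound). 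For the uniform placement $y_k = kL/(N+1)$ one has $g_{0,k}=a^{k}$ with $a := e^{-\lambda/(N+1)}$, and the inner geometric sum $\sum_{l=0}^{k-1}a^{-l} = (a^{-k}-1)/(a^{-1}-1)$ collapses each term, yielding the closed form
\[
D_N = \frac{1}{a} + \frac{(1-a)^2}{a}\sum_{k=2}^{N+1}\frac{1}{1-a^{k}} .
\]
All the work then lives in estimating this single sum as $N\to\infty$, i.e.\ as $a\uparrow 1$.

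The bound $\liminf_N D_N\ge 1$ is immediate: every summand is nonnegative and $1/a = e^{\lambda/(N+1)}\ge 1$, so $D_N\ge 1$ for all $N$, hence $R_N\le C(P_T/\sigma^2)$ for every $N$. This is just the statement that relaying cannot exceed the point-to-point AWGN capacity afforded by the total power $P_T$ over a unit-gain channel.

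For the other direction I would set $\epsilon := \lambda/(N+1)\to 0$, so $a=e^{-\epsilon}$, $1/a\to 1$, and $(1-a)^2/a = 4\sinh^2(\epsilon/2) = \epsilon^2\bigl(1+o(1)\bigr)$. The surviving factor $\sum_{k=2}^{N+1}(1-e^{-\epsilon k})^{-1}$ is a Riemann-type sum for $\int_0^{\lambda}\frac{dx}{1-e^{-x}}$, and \emph{the main obstacle} is that the integrand behaves like $1/x$ near $0$, so the integral diverges and the sum cannot be replaced by it naively. I would instead use the explicit antiderivative $\int\frac{dx}{1-e^{-x}} = x+\ln\left(1-e^{-x}\right)$ together with monotonicity of $x\mapsto(1-e^{-x})^{-1}$ to sandwich the sum between integrals over shifted intervals; this gives $\sum_{k=2}^{N+1}(1-e^{-\epsilon k})^{-1} = \frac{1}{\epsilon}\ln\frac{1}{\epsilon}\bigl(1+o(1)\bigr)$. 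Multiplying by the $\epsilon^2$ prefactor forces the second term of $D_N$ to be $O\!\left(\epsilon\ln\frac{1}{\epsilon}\right)\to 0$, so in fact $\lim_N D_N = 1$. Since $1\le 2-e^{-\lambda}$, this yields $\limsup_N D_N\le 2-e^{-\lambda}$ and therefore $\liminf_N R_N \ge C\!\left(\frac{P_T/\sigma^2}{2-e^{-\lambda}}\right)$.

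Finally I would note that the estimate actually proves $\lim_N D_N = 1$, hence $\lim_N R_N = C(P_T/\sigma^2)$, so the two displayed bounds in fact sandwich a sequence whose limit coincides with the upper one; the lower bound $C\!\left(\frac{P_T/\sigma^2}{2-e^{-\lambda}}\right)$ is therefore conservative. If one instead wants the constant $2-e^{-\lambda}$ to emerge intrinsically rather than as slack, an alternative is to lower-bound $R_N$ directly by the rate (\ref{eqn:achievable_rate_multirelay}) of a fixed, analytically tractable power split over the uniformly spaced nodes and to evaluate its limiting min-cut, the obstacle there being to identify which cut in the $\min_k$ is binding.
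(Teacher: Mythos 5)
Your proof is correct, and it actually establishes more than the theorem claims. Your reduction and the closed form $D_N=\frac{1}{a}+\frac{(1-a)^2}{a}\sum_{k=2}^{N+1}\frac{1}{1-a^{k}}$ is exactly the paper's starting point (their $f(N)$, written with $a=e^{\rho L/(N+1)}>1$, is the same quantity), and your one-line observation $D_N\ge 1/a\ge 1$ is also their argument for $\limsup_N R_N\le C(P_T/\sigma^2)$. The divergence is in the remaining estimate. The paper stays entirely elementary: it bounds $\frac{a^k}{a^{k+1}-1}\le 1+\frac{1}{a^{k+1}-a^{k}}=1+\frac{1}{a^k(a-1)}$ termwise and sums the resulting geometric series, so the correction term evaluates to $N(a-1)^2+(1-a^{-N})\to 1-e^{-\lambda}$; the constant $2-e^{-\lambda}$ in the statement is thus an artifact of that deliberately crude bound (the geometric-series step is only tight for small $k$ but is applied to all $k$). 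Your integral comparison via the antiderivative $x+\ln(1-e^{-x})$, with the $1/x$ singularity correctly quarantined by starting at $x=2\epsilon$, shows the correction is $O(\epsilon\ln(1/\epsilon))\to 0$, hence $\lim_N D_N=1$ and $\lim_N R_N=C(P_T/\sigma^2)$ — strictly stronger than both displayed bounds (one can confirm numerically: for $\lambda=2$, $D_N\approx 1.76,\,1.12,\,1.02$ at $N=10,10^2,10^3$). This also shows that the paper's follow-up remark, that uniform placement guarantees roughly $C(P_T/(2\sigma^2))$, is conservative: dense uniform relaying recovers the full unattenuated AWGN capacity. What each approach buys: the paper's termwise bounds take two lines each and avoid asymptotic analysis of sums; yours costs the sandwich estimate but identifies the exact limit and exposes the slack in the stated lower bound.
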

\begin{proof}
 See Appendix \ref{appendix:total_power_constraint}.
\end{proof}

 {\em Remark:} From Theorem \ref{theorem:large_nodes_uniform}, it is clear that 
we can ensure at least 
$C(\frac{P_{T}}{2 \sigma ^{2}×})$ data rate (roughly) by placing a large enough number of relay nodes.

\section{Optimal Sequential Placement of Relay Nodes on a Line of Random Length}\label{sec:mdp_total_power}
In this section, we consider the sequential placement of relay nodes along a straight line having unknown random length, subject 
to a total power constraint on the source node and the relay nodes. For analytical tractability, we model the length of the line as an 
exponential random variable with mean $\overline{L}=\frac{1}{\beta×}$.\footnote{One justification 
for the use of the exponential distribution, given the prior
knowledge of the mean length $\overline{L}$, is that it is the maximum entropy continuous probability density
function with the given mean. Thus, by using the exponential distribution, we
are leaving the length of the line as uncertain as we can, given the prior
knowledge of its mean.
} We are motivated by the scenario where, starting from the data source, 
a person 
is walking along the line, and places relays at appropriate places in order to maximize the end-to-end achievable data rate 
(to a sink at the end of the line) subject to a total power constraint and a constraint on the mean number of relays placed. 
We formulate the problem as a total cost Markov decision process (MDP).

\subsection{Formulation as an MDP}\label{subsection:mdp_formulation}
We recall from (\ref{eqn:capacity_multirelay}) that for a fixed length $L$ of the line and a fixed $N$,
$e^{\rho y_1}+\sum_{k=2}^{N+1}\frac{e^{\rho y_k}-e^{\rho y_{k-1}}}{1+e^{\rho y_1}+\cdots+e^{\rho y_{k-1}}×}$ has to be minimized in 
order to maximize $R^{opt}_{P_T}(y_1,y_2,\cdots,y_N)$. 
$e^{\rho y_1}+\sum_{k=2}^{N+1}\frac{e^{\rho y_k}-e^{\rho y_{k-1}}}{1+e^{\rho y_1}+\cdots+e^{\rho y_{k-1}}×}$ is basically a scaling 
factor which captures the effect of attenuation and relaying on the maximum possible SNR $\frac{P_T}{\sigma^2}$. 

In the ``as-you-go" placement problem, the person carries a number of nodes and places them as he walks, 
under the control of a placement policy. A deployment policy $\pi$ is a 
sequence of mappings $\mu_k, k \geq 1$, where at the $k$-th decision instant 
 $\mu_k$ provides the distance at which the next relay should be placed (provided that the 
line does not end before that point).  The decisions are made based on the locations of the relays placed earlier. 
The first decision instant is the start of the line, and the subsequent decision instants are the placement points of the relays. 
Let $\Pi$ denote the set of all deployment policies. Let $\mathbb{E}_{\pi}$ denote the 
expectation under policy $\pi$. Let $\xi>0$ be the cost of placing a relay. 
We are interested in solving the following problem:
\begin{eqnarray}
 \inf_{\pi\in \Pi } \mathbb{E}_{\pi} \left(e^{\rho y_1}+
\sum_{k=2}^{N+1}\frac{e^{\rho y_k}-e^{\rho y_{k-1}}}{1+e^{\rho y_1}+
\cdots+e^{\rho y_{k-1}}×}+\xi N \right)\label{eqn:unconstrained_mdp}
\end{eqnarray}
Note that, due to the randomness of the length of the line, the $y_k, k \geq 1,$ are also random variables.

We now formulate the above ``as-you-go" relay placement problem as a total cost Markov decision process. 
Let us define $s_0:=1$, $s_k:=\frac{e^{\rho y_k}}{1+e^{\rho y_1}+\cdots+e^{\rho y_k}×}\, \forall \, k \geq 1$. Also, recall that 
$r_{k+1}=y_{k+1}-y_k$. Thus, we can rewrite (\ref{eqn:unconstrained_mdp}) as follows:
\begin{eqnarray}
 \inf_{\pi\in \Pi } \mathbb{E}_{\pi} \left(1+\sum_{k=0}^{N}s_k(e^{\rho r_{k+1}}-1)+\xi N \right) \label{eqn:unconstrained_mdp_with_state}
\end{eqnarray}

When the person starts walking from the source along the line, the state 
of the system is set to $s_0:=1$. At this instant the 
placement policy provides the location at which the first relay should be placed. The person walks towards the 
prescribed placement point. If the line ends before reaching this point, the sink 
is placed; if not, then the first relay is placed at the placement point. In general, the state after placing the 
$k$-th relay is denoted by $s_k$, for $k=1,2,\cdots$. At state $s_k$, the action is the distance $r_{k+1}$ where the next relay has 
to be placed. If the line ends before this distance, the sink node has to be placed at the end. 
{\em The randomness is coming from the random residual length of the line.} Let $l_k$ 
denote the residual length of the line at the $k$-th instant.

With this notation, the state of the system evolves as follows:

\begin{eqnarray}
 s_{k+1}=
\begin{cases}
\frac{s_k e^{\rho r_{k+1}}}{1+s_k e^{\rho r_{k+1}}×}, \text{ if $l_k > r_{k+1}$} , \\
 \mathbf{e}, \text{  else}. \label{eqn:state-evolution}
\end{cases}
\end{eqnarray}

Here $\mathbf{e}$ denotes the end of the line, i.e., the termination state. The single stage cost incurred when the state is $s$, 
the action is $a$ and the residual length of the line is $l$, is given by:

\begin{eqnarray}
 c(s,a,l)=
\begin{cases}
\xi + s(e^{\rho a}-1), \text{ if $l>a$},\\
s(e^{\rho l}-1), \text{ if $l \leq a$}. \label{eqn:single-stage-cost}
\end{cases}
\end{eqnarray}
Also, $c(\mathbf{e},a,\cdot)=0$ for all $a$. 

Now, from (\ref{eqn:state-evolution}), it is clear that the next state $s_{k+1}$ 
depends on the current state $s_k$, the current action $r_{k+1}$ 
and the residual length of the line. Since the line is exponentially distributed the residual length need 
not be retained in the state description; from any placement point, the residual line length is exponentially 
distributed, and independent of the history of the process. Also, the cost incurred at the $k$-th decision instant 
is given by (\ref{eqn:single-stage-cost}), which depends on $s_k$, $r_{k+1}$ 
and $l_k$. Hence, our problem (\ref{eqn:unconstrained_mdp}) is an MDP problem with state space $\mathcal{S}:=(0,1]\cup \{\mathbf{e}\}$ 
and action space $\mathcal{A}\cup \{\infty \}$ where $\mathcal{A}:=[0,\infty)$. Action $\infty$ means that no further relay 
will be placed. 

Solving the problem in (\ref{eqn:unconstrained_mdp}) also helps in solving the following constrained problem:
\begin{eqnarray}
& & \inf_{\pi\in \Pi } \mathbb{E}_{\pi} \left(e^{\rho y_1}+
\sum_{k=2}^{N+1}\frac{e^{\rho y_k}-e^{\rho y_{k-1}}}{1+e^{\rho y_1}+\cdots+e^{\rho y_{k-1}}×}\right)\nonumber\\
&\textit{s.t., }&\mathbb{E}_{\pi}(N) \leq M \label{eqn:constrained_mdp}
\end{eqnarray}
where $M>0$ is the constraint on the average number of relays. The following standard result tells us how to choose the optimal 
$\xi$:

\begin{lem}
 Let $\pi_{\xi}^* \in \Pi$ be an optimal policy for the unconstrained problem (\ref{eqn:unconstrained_mdp}) such that 
$\mathbb{E}_{\pi_{\xi}^*}N=M$. Then $\pi_{\xi}^*$ is also optimal for the constrained problem (\ref{eqn:constrained_mdp}).
\end{lem}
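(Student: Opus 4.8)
The plan is to give the standard Lagrangian relaxation argument, since the hypothesis has been set up precisely so that this classical reasoning applies. First I would introduce compact notation for the two expected quantities that appear in both problems. Writing
$J(\pi) := \mathbb{E}_{\pi}\left(e^{\rho y_1}+\sum_{k=2}^{N+1}\frac{e^{\rho y_k}-e^{\rho y_{k-1}}}{1+e^{\rho y_1}+\cdots+e^{\rho y_{k-1}}}\right)$
for the capacity-limiting cost and $N(\pi) := \mathbb{E}_\pi(N)$ for the expected number of relays, the unconstrained objective in (\ref{eqn:unconstrained_mdp}) is exactly $J(\pi) + \xi N(\pi)$, while the constrained problem (\ref{eqn:constrained_mdp}) is the minimization of $J(\pi)$ over the feasible set $\{\pi \in \Pi : N(\pi) \leq M\}$.

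Next I would fix an arbitrary feasible competitor $\pi$, so that $N(\pi) \leq M$, and invoke the assumed optimality of $\pi_{\xi}^*$ for the unconstrained problem, which gives $J(\pi_{\xi}^*) + \xi N(\pi_{\xi}^*) \leq J(\pi) + \xi N(\pi)$. Substituting the hypothesis $N(\pi_{\xi}^*) = M$ and rearranging yields $J(\pi_{\xi}^*) \leq J(\pi) + \xi\bigl(N(\pi) - M\bigr)$. Because $\xi > 0$ (a relay carries a strictly positive placement cost) and $N(\pi) \leq M$ by feasibility, the slack term $\xi(N(\pi)-M)$ is nonpositive, so $J(\pi_{\xi}^*) \leq J(\pi)$.

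Finally I would observe that $\pi_{\xi}^*$ is itself feasible for (\ref{eqn:constrained_mdp}), since $N(\pi_{\xi}^*) = M \leq M$. Having shown $J(\pi_{\xi}^*) \leq J(\pi)$ for every feasible $\pi$, this establishes that $\pi_{\xi}^*$ attains the infimum of $J$ over the feasible set and is therefore optimal for the constrained problem.

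The argument is elementary and I do not anticipate any genuine obstacle. The two points requiring care are the sign of $\xi$, which is exactly what forces the slack term to be nonpositive, and the use of the \emph{equality} $N(\pi_{\xi}^*) = M$ rather than mere feasibility, which is what allows the Lagrangian penalty to match the constraint budget so that $\pi_{\xi}^*$ simultaneously minimizes the penalized objective and meets the constraint tightly. Notably, the result requires nothing about the structure of the MDP (existence of optimal policies, convergence of value iteration, properties of the value function); it takes as given only that a $\pi_{\xi}^*$ with the stated properties exists, those existence questions being treated separately in the remainder of the paper.
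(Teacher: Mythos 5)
Your proof is correct and is exactly the standard Lagrangian-relaxation argument that the paper implicitly invokes: the paper states this lemma without proof, labeling it a ``standard result,'' and the canonical proof is precisely the one you give (optimality of $\pi_{\xi}^*$ for the penalized objective, substitution of the tightness hypothesis $\mathbb{E}_{\pi_{\xi}^*}N=M$, and the sign of $\xi>0$ making the slack term nonpositive). Your closing remarks correctly identify the two load-bearing hypotheses, so there is nothing to add.
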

The constraint on the mean number of relays can
be justified if we consider the relay deployment problem for
multiple source-sink pairs over several different lines of mean
length $\overline{L}$, given a large pool of relays, and we are only
interested in keeping small the total number of relays over
all these deployments.

{\em Remark:} The optimal policy for the problem (\ref{eqn:unconstrained_mdp}) will be used to place relay nodes along a line 
whose length is a sample from an exponential distribution with mean $\frac{1}{\beta×}$. After the deployment is over, the 
power $P_T$ will be shared optimally among the source and the deployed relay nodes (according to 
the formulas in Theorem~\ref{theorem:multirelay_capacity}), in order to implement the coding scheme (as described in 
Section~\ref{subsec:achievable_rate_multirelay_xie_and_kumar}) in the network.

\subsection{Analysis of the MDP}\label{subsection:analysis_of_mdp}
Suppose $s_k=s$ for some $k \geq 0$. Then, the optimal value function (cost-to-go) at state $s$ is defined by:
\begin{equation*}
 J_{\xi}(s)=\inf_{\pi \in \Pi} \mathbb{E} \left(\sum_{n=k}^{\infty} c(s_n, a_n, l_n)|s_k=s \right)
\end{equation*}
 
If we 
decide to place the next relay at a distance $a<\infty$ and follow the optimal policy thereafter, the expected cost-to-go 
at a state $s \in (0,1]$ becomes:
\begin{eqnarray}
 & & \int_{0}^{a}\beta e^{-\beta z}s(e^{\rho z}-1)dz \nonumber\\
& +&e^{-\beta a}\bigg(s(e^{\rho a}-1)+\xi+J_{\xi}\left(\frac{se^{\rho a}}{1+se^{\rho a}×}\right)\bigg)\label{eqn:cost-to-go}
\end{eqnarray}
The first term in (\ref{eqn:cost-to-go}) corresponds to the case in which the line ends 
at a distance less than $a$ and we are forced to place the sink node. 
The second term corresponds to the case where the residual length of the line is greater than $a$ and a relay is placed 
at a distance $a$. 

Note that our MDP has an uncountable state space $\mathcal{S}=(0,1] \cup \{\mathbf{e}\}$ and a non-compact action space 
$\mathcal{A}=[0,\infty) \cup \{\infty \}$. Several technical issues arise in this kind of problems, such as 
the existence of optimal or $\epsilon$-optimal policies, measurability of the policies, etc. 
We, therefore, invoke the results 
provided by Sch\"{a}l \cite{schal75conditions-optimality}, which deal with such issues. 
Our problem is one 
of minimizing total, undiscounted, non-negative costs over an infinite 
horizon. Equivalently, in the context of \cite{schal75conditions-optimality}, we have a problem of  
total reward maximization where the rewards are the negative of the costs. Thus, our problem specifically fits into 
the negative dynamic programming setting 
of \cite{schal75conditions-optimality} (i.e., the $\mathsf{N}$ case where single-stage rewards are non-positive).

Now, we observe that the state $\mathbf{e}$ is absorbing. Also no action is taken at 
this state and the cost at this state is $0$. Hence, we can think of this state as state $0$ in order to 
make our state space a Borel subset of the real line. 
We present the following theorem from \cite{schal75conditions-optimality} in our present context. 

\begin{thm}\label{thm:schal_bellman_eqn}
[\cite{schal75conditions-optimality}, Equation (3.6)]      
The optimal value function $J_{\xi}(\cdot)$ satisfies the Bellman equation.
\end{thm}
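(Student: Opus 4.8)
The plan is to verify that the as-you-go placement problem, as formulated in Section~\ref{subsection:mdp_formulation}, fits precisely into the negative ($\mathsf{N}$) dynamic programming framework of \cite{schal75conditions-optimality}, and then to invoke the general fact that in this setting the optimal value function solves the optimality equation. The work is therefore one of checking hypotheses and matching notation rather than an independent derivation, since the cited Equation~(3.6) delivers the conclusion once the model is shown to be an admissible $\mathsf{N}$-case MDP.

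First I would fix the correspondence with the abstract model of \cite{schal75conditions-optimality}: the Borel state space is $\mathcal{S}=(0,1]\cup\{0\}$, with the termination state $\mathbf{e}$ identified with $0$; the action space is $[0,\infty)\cup\{\infty\}$; and the one-stage expected cost and the transition law are read off from (\ref{eqn:single-stage-cost}) and (\ref{eqn:state-evolution}) by averaging over the exponentially distributed residual length. Because the residual length is memoryless, from any placement point it is again exponential and independent of the past, so the transition kernel depends only on the current state and action; this is what makes the system a genuine Markov decision process on $\mathcal{S}$, with continuation to $\frac{se^{\rho a}}{1+se^{\rho a}}$ with probability $e^{-\beta a}$ and absorption into $0$ otherwise.

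Next I would confirm the defining feature of the $\mathsf{N}$ case, namely that all single-stage costs are non-negative. This is immediate from (\ref{eqn:single-stage-cost}), since $\xi>0$, $s>0$, and $e^{\rho a}-1\ge 0$ for $a\ge 0$; equivalently, all single-stage rewards $-c(\cdot)$ are non-positive. In this regime $J_\xi$ is the monotone increasing limit of the finite-horizon value functions $J_n$ (with $J_0\equiv 0$), and the Bellman equation $J_\xi=TJ_\xi$ for the operator $T$ implicit in (\ref{eqn:cost-to-go}) holds \emph{without} any compactness assumption on the action sets. This is exactly the point that lets us sidestep the difficulties created by the non-compact action set $[0,\infty)\cup\{\infty\}$, and it is the reason Schäl's $\mathsf{N}$-case result, rather than his semicontinuity/compactness conditions, is the relevant tool here.

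The one genuinely technical step, which I expect to be the main obstacle, is discharging the measurability requirements so that the infimum over $a$ in the Bellman operator is a measurable function of the state and the model is admissible in the sense of \cite{schal75conditions-optimality}. Concretely I would show that the map $(s,a)\mapsto \int_0^a \beta e^{-\beta z} s(e^{\rho z}-1)\,dz + e^{-\beta a}\bigl(s(e^{\rho a}-1)+\xi+J_\xi(\tfrac{se^{\rho a}}{1+se^{\rho a}})\bigr)$ from (\ref{eqn:cost-to-go}) is jointly measurable, using continuity of the integrand in $a$ together with measurability of $J_\xi$ (inherited as a pointwise limit of the measurable $J_n$). The adjoined action $\infty$ (place no further relay) and the absorbing state $0$ are handled separately, the latter trivially since no action is taken and the cost is $0$ there. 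Once these checks are in place, the $\mathsf{N}$-case theorem of \cite{schal75conditions-optimality} yields that $J_\xi$ satisfies the Bellman equation.
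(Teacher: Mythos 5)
Your proposal is correct and follows essentially the same route as the paper: the paper offers no independent derivation, but, exactly as you do, identifies the absorbing state $\mathbf{e}$ with $0$ to make the state space a Borel subset of $\mathbb{R}$, observes that the problem is one of minimizing non-negative total undiscounted costs (equivalently, Sch\"{a}l's $\mathsf{N}$ case of non-positive rewards), and then imports Equation~(3.6) of \cite{schal75conditions-optimality} verbatim. Your additional measurability checks and the remark that no action-set compactness is needed at this stage are consistent elaborations of the same argument (the paper defers compactness issues, via its Conditions on $\Gamma_k$ and the mapping $a(s)$, to the later convergence-of-value-iteration theorem).
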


Thus, $J_\xi(\cdot)$ satisfies the following Bellman equation for each $s \in (0,1]$:
\begin{eqnarray}
 J_{\xi}(s) &=& \min \bigg \{\inf_{a \geq 0} \bigg[\int_{0}^{a}\beta e^{-\beta z}s(e^{\rho z}-1)dz \nonumber\\
& +&e^{-\beta a}\bigg(s(e^{\rho a}-1)+\xi+J_{\xi}\left(\frac{se^{\rho a}}{1+se^{\rho a}×}\right)\bigg)\bigg],\nonumber\\
& & \int_{0}^{\infty}\beta e^{-\beta z}s(e^{\rho z}-1)dz \bigg \} \label{eqn:bellman_unbroken} 
\end{eqnarray}
where the second term inside $\min\{\cdot, \cdot \}$ is the cost of not placing any relay (i.e., $a=\infty$).

Recall that the line has a length having exponential distribution with mean $\frac{1}{\beta×}$ and that the path-loss exponent 
is $\rho$ for the exponential path-loss model. We analyze the MDP problem 
for the two different cases: $\beta>\rho$ and $\beta \leq \rho$.

\subsubsection{Case I ($\beta>\rho$)}

We observe that the cost of not placing any relay (i.e., $a=\infty$) at state $s \in (0,1]$ is given by:
\begin{eqnarray*}
 \int_{0}^{\infty} \beta e^{-\beta z}s(e^{\rho z}-1)dz=\theta s
\end{eqnarray*}
where $\theta:=\frac{\rho}{\beta-\rho×}$ (using the fact that $\beta>\rho$). 
Since not placing a relay (i.e., $a = \infty$) is a possible action for every $s$, it follows that $J_{\xi}(s)\leq \theta s $.

The cost in (\ref{eqn:cost-to-go}), upon simplification, can be written as:
\begin{eqnarray}
 \theta s + e^{-\beta a}\bigg(-\theta s e^{\rho a}+\xi+
J_{\xi}\left(\frac{se^{\rho a}}{1+se^{\rho a}×}\right)\bigg)\label{eqn:cost_to_go_beta_greater_than_rho}
\end{eqnarray}
Since $J_{\xi}(s) \leq \theta$ for all $s \in (0,1]$, the expression in (\ref{eqn:cost_to_go_beta_greater_than_rho}) 
is strictly less that 
$\theta s$ for large enough $a<\infty$. Hence, according to (\ref{eqn:bellman_unbroken}), it is not optimal to not place any relay.

Thus, in this case, the Bellman equation (\ref{eqn:bellman_unbroken}) can be rewritten as:
\begin{eqnarray}
 J_{\xi}(s)= \theta s + \inf_{a \geq 0} e^{-\beta a}\bigg(-\theta s e^{\rho a}+
\xi+J_{\xi}\left(\frac{se^{\rho a}}{1+se^{\rho a}×}\right)\bigg)\label{eqn:bellman_equation_simplified_in_a}
\end{eqnarray}

Observe that $\theta s + e^{-\beta a}\bigg(-\theta s e^{\rho a}+
\xi+J_{\xi}\left(\frac{se^{\rho a}}{1+se^{\rho a}×}\right)\bigg) \rightarrow \theta s$ 
as $a \rightarrow \infty$.

\subsubsection{Case II ($\beta \leq \rho$)}
Here the cost in (\ref{eqn:cost-to-go}) is $\infty$ if we do not place a relay (i.e., if $a=\infty$). 
Let us consider a policy $\pi_1$ where we place the next relay at a fixed distance $0 <a <\infty$ from the current relay, 
irrespective of the current state. If the residual length of the line is $z$ at any state $s$, we will place less than $\frac{z}{a×}$ additional 
 relays, and 
for each relay a cost less than $(\xi+(e^{\rho a}-1))$ is incurred (since $s \leq 1$). At the last step when we place the 
sink, a cost less than $(e^{\rho a}-1)$ is incurred.  Thus, the value function of this policy is upper bounded by:
\begin{eqnarray}
& & \int_{0}^{\infty} \beta e^{-\beta z} \frac{z}{a×}(\xi+(e^{\rho a}-1)) dz+(e^{\rho a}-1)\nonumber\\
& =& \frac{1}{\beta a ×}\left(\xi+(e^{\rho a}-1)  \right)+(e^{\rho a}-1)\label{eqn:upper_bound_on_cost_beta_leq_rho}
\end{eqnarray}
Hence, $J_{\xi}(s) \leq \frac{1}{\beta a ×}\left(\xi+(e^{\rho a}-1)  \right)+(e^{\rho a}-1) < \infty$. 
Thus, by the same argument as in 
the case $\beta > \rho$, the minimizer in the Bellman equation lies in $[0,\infty)$, i.e., 
the optimal placement distance lies in $[0,\infty)$. Hence, the Bellman Equation 
(\ref{eqn:bellman_unbroken}) can be rewritten as:

\footnotesize
\begin{eqnarray}
 J_{\xi}(s) &=& \inf_{a \geq 0} \bigg\{\int_{0}^{a}\beta e^{-\beta z}s(e^{\rho z}-1)dz \nonumber\\
& +& e^{-\beta a}\bigg(s(e^{\rho a}-1)+\xi+J_{\xi}\left(\frac{se^{\rho a}}{1+se^{\rho a}×}\right)\bigg)\bigg\}
\label{eqn:bellman_beta_leq_rho} 
\end{eqnarray} 
\normalsize

\subsection{Upper Bound on the Optimal Value Function}

\begin{prop}\label{prop:upper_bound_on_cost_beta_geq_rho}
If $\beta>\rho$, then $J_{\xi}(s) < \theta s$ for all $s \in (0,1]$.
\end{prop}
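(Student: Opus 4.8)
The plan is to bootstrap from the weak bound $J_{\xi}(s) \le \theta s$ (already established in the text, since the never-place action $a=\infty$ incurs cost exactly $\theta s$) by showing that a suitable \emph{finite} placement distance yields strictly smaller cost-to-go. Concretely, I would work with the simplified Bellman equation (\ref{eqn:bellman_equation_simplified_in_a}), which in Case I reads
\[
J_{\xi}(s) - \theta s \;=\; \inf_{a \ge 0} \, e^{-\beta a}\left(-\theta s e^{\rho a} + \xi + J_{\xi}\!\left(\frac{s e^{\rho a}}{1 + s e^{\rho a}}\right)\right),
\]
so it suffices to exhibit a single finite $a$ (depending on $s$) for which the bracketed quantity is strictly negative; since the right-hand side is an infimum over $a$, this forces $J_{\xi}(s) - \theta s < 0$.

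First I would bound the successor value using the known inequality $J_{\xi}(s') \le \theta s'$ applied at $s' = \frac{s e^{\rho a}}{1 + s e^{\rho a}}$, giving
\[
-\theta s e^{\rho a} + \xi + J_{\xi}\!\left(\frac{s e^{\rho a}}{1 + s e^{\rho a}}\right) \;\le\; \xi - \theta\,\frac{(s e^{\rho a})^{2}}{1 + s e^{\rho a}},
\]
where the right-hand side comes from the elementary simplification $-\theta s e^{\rho a} + \theta \frac{s e^{\rho a}}{1 + s e^{\rho a}} = -\theta \frac{(s e^{\rho a})^{2}}{1 + s e^{\rho a}}$. Since $s > 0$ and $\theta = \frac{\rho}{\beta-\rho} > 0$, the fraction $\frac{(s e^{\rho a})^{2}}{1 + s e^{\rho a}}$ tends to $+\infty$ as $a \to \infty$, so for all sufficiently large finite $a$ the displayed upper bound is strictly negative. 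Multiplying by $e^{-\beta a} > 0$, the bracketed term in the Bellman equation is strictly negative at such an $a$, whence $J_{\xi}(s) - \theta s$ is bounded above by this negative value, i.e. $J_{\xi}(s) < \theta s$ for every $s \in (0,1]$.

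There is no serious obstacle here: the argument is essentially a one-line strictification of the inequality the text already derives informally (``strictly less than $\theta s$ for large enough $a < \infty$''). The only points requiring care are (i) invoking Theorem~\ref{thm:schal_bellman_eqn} to guarantee that $J_{\xi}$ genuinely satisfies the Bellman equation (\ref{eqn:bellman_equation_simplified_in_a}), so that the infimum representation of $J_{\xi}(s) - \theta s$ is legitimate, and (ii) observing that because we bound an \emph{infimum} over $a$, a single admissible finite $a$ yielding a negative value at the fixed state $s$ already delivers the desired strict inequality, with the choice of $a$ allowed to depend on $s$.
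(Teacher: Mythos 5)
Your proof is correct and follows essentially the same route as the paper's: both start from the weak bound $J_{\xi}(s)\leq\theta s$, use the Bellman equation (\ref{eqn:bellman_equation_simplified_in_a}), and exhibit a finite $a$ making the bracketed term strictly negative, which forces the infimum below zero. The only cosmetic difference is that you bound the successor value by $\theta s'$ and simplify to $-\theta\frac{(se^{\rho a})^{2}}{1+se^{\rho a}}$, whereas the paper uses the cruder bound $J_{\xi}(\cdot)\leq\theta$ against the diverging term $-\theta s e^{\rho a}$; both yield the same conclusion.
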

\begin{proof}
 We know that $J_{\xi}(s) \leq \theta s \leq \theta$. Now, let us consider 
the Bellman equation (\ref{eqn:bellman_equation_simplified_in_a}). 
It is easy to see that $\bigg(-\theta s e^{\rho a}+\xi+J_{\xi}\left(\frac{se^{\rho a}}{1+se^{\rho a}×}\right)\bigg)$ 
is strictly negative for sufficiently large $a$. 
Hence, for a given $s$, the R.H.S of (\ref{eqn:bellman_equation_simplified_in_a}) is strictly less than $\theta s$. 
\end{proof}

\begin{cor}
 For $\beta>\rho$, $J_{\xi}(s) \rightarrow 0$ as $s \rightarrow 0$ for any $\xi>0$.
\end{cor}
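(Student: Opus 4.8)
The plan is to obtain the conclusion by sandwiching $J_{\xi}(s)$ between two quantities that both vanish as $s \to 0$, and then apply the squeeze theorem. The upper bound is already available: Proposition~\ref{prop:upper_bound_on_cost_beta_geq_rho} asserts $J_{\xi}(s) < \theta s$ for every $s \in (0,1]$, and since $\theta = \frac{\rho}{\beta-\rho}$ is a finite positive constant (this is exactly where the hypothesis $\beta > \rho$ is used), the dominating term $\theta s$ tends to $0$ as $s \to 0$.

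For the matching lower bound I would establish that $J_{\xi}(s) \geq 0$ for all $s$. This is immediate from the structure of the MDP: inspecting the single-stage cost in (\ref{eqn:single-stage-cost}), each branch is of the form $\xi + s(e^{\rho a}-1)$ or $s(e^{\rho l}-1)$, and since $s \in (0,1]$, $\xi > 0$, and $e^{\rho x} - 1 \geq 0$ for every $x \geq 0$, each single-stage cost is non-negative. Hence $J_{\xi}(s)$, being an infimum over policies of an expectation of a sum of non-negative costs, is itself non-negative. This is also consistent with the negative-dynamic-programming ($\mathsf{N}$-case) framing already invoked for the MDP.

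Combining the two bounds gives $0 \leq J_{\xi}(s) < \theta s$ for all $s \in (0,1]$, and letting $s \to 0^{+}$ forces $J_{\xi}(s) \to 0$ by the squeeze theorem. There is essentially no obstacle here; the only point that warrants a line of justification is the non-negativity of the value function, which follows directly from the non-negativity of the single-stage costs noted above, so the entire argument reduces to invoking the already-proved upper bound together with this trivial lower bound.
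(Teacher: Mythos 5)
Your proposal is correct and is essentially the paper's own argument: the paper proves the corollary by citing Proposition~\ref{prop:upper_bound_on_cost_beta_geq_rho} ($J_{\xi}(s) < \theta s$), with the non-negativity of $J_{\xi}(s)$ left implicit since all single-stage costs in (\ref{eqn:single-stage-cost}) are non-negative. Your only addition is to spell out this trivial lower bound explicitly before applying the squeeze, which is a reasonable bit of care but not a different route.
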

\begin{proof}
 Follows from Proposition \ref{prop:upper_bound_on_cost_beta_geq_rho}.
\end{proof}

\begin{prop}\label{prop:upper_bound_on_cost}
If $\beta>0$ and $\rho>0$ and $0<a<\infty$, then $J_{\xi}(s) <\frac{1}{\beta a ×}\left(\xi+(e^{\rho a}-1)  \right)+(e^{\rho a}-1)$ for all $s \in (0,1]$.
\end{prop}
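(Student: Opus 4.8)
The plan is to reuse the stationary policy $\pi_1$ introduced in the discussion of Case II --- place the next relay at a fixed distance $a$ from the current one, irrespective of the state --- and to note that the bound obtained there in (\ref{eqn:upper_bound_on_cost_beta_leq_rho}) never used the assumption $\beta \leq \rho$ and is valid for all $\beta>0$, $\rho>0$, $0<a<\infty$. Since $J_{\xi}(\cdot)$ is the infimum of the expected total cost over $\Pi$ and $\pi_1\in\Pi$, we immediately get $J_{\xi}(s)\le V^{\pi_1}(s)$, where $V^{\pi_1}(s)$ denotes the expected total cost of $\pi_1$ started from state $s$. It therefore suffices to show that $V^{\pi_1}(s)$ is \emph{strictly} below the claimed bound, uniformly in $s$.

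First I would describe the sample paths under $\pi_1$. By memorylessness of the exponential law, the residual line length at the start is $L_{\mathrm{res}}\sim\mathrm{Exp}(\beta)$; relays are placed at cumulative distances $a,2a,\dots$, so exactly $N=\lfloor L_{\mathrm{res}}/a\rfloor$ relays are placed and the sink is placed at the terminal residual distance $z':=L_{\mathrm{res}}-Na\in[0,a)$. Reading off the single-stage costs from (\ref{eqn:single-stage-cost}) and the state recursion (\ref{eqn:state-evolution}), and using that every visited state satisfies $s_k\le 1$, each relay placement costs at most $\xi+(e^{\rho a}-1)$, while the terminal sink placement costs $s_N(e^{\rho z'}-1)$, which is \emph{strictly} less than $e^{\rho a}-1$ because $z'<a$ and $s_N\le 1$. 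Combining these with $N\le L_{\mathrm{res}}/a$ yields, on every sample path,
\[
\text{(total cost of }\pi_1)\;<\;\frac{L_{\mathrm{res}}}{a}\bigl(\xi+(e^{\rho a}-1)\bigr)+(e^{\rho a}-1).
\]

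Taking expectations and using $\mathbb{E}[L_{\mathrm{res}}]=1/\beta$ then gives the stated bound. The one point needing care --- which I regard as the only (mild) obstacle --- is that a pathwise strict inequality generally passes only to a non-strict inequality in expectation; here, however, the gap $(e^{\rho a}-1)-s_N(e^{\rho z'}-1)$ is strictly positive on \emph{every} sample path (and both sides are integrable, being dominated by the finite-mean bound above), so a strictly positive random variable has strictly positive expectation, and we conclude $V^{\pi_1}(s)<\frac{1}{\beta a}(\xi+(e^{\rho a}-1))+(e^{\rho a}-1)$. Since $J_{\xi}(s)\le V^{\pi_1}(s)$ for every $s\in(0,1]$, the claim follows; note that neither the Bellman equation nor the $\beta$-versus-$\rho$ dichotomy enters the argument.
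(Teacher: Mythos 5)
Your proof is correct and is essentially the paper's own argument: the paper's proof of Proposition~\ref{prop:upper_bound_on_cost} simply observes that the bound (\ref{eqn:upper_bound_on_cost_beta_leq_rho}) derived from the fixed-spacing policy $\pi_1$ never used the assumption $\beta \leq \rho$, which is exactly your route. Your extra care over strictness (an almost-surely positive, integrable pathwise gap --- coming from $\lfloor L_{\mathrm{res}}/a\rfloor < L_{\mathrm{res}}/a$ a.s.\ and the cheaper terminal step --- has strictly positive expectation) fills in a detail the paper leaves implicit, but does not change the approach.
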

\begin{proof}
 Follows from (\ref{eqn:upper_bound_on_cost_beta_leq_rho}), since the analysis is valid even for $\beta>\rho$.
\end{proof}

\vspace{-0.5cm}
\subsection{Convergence of the Value Iteration}

The value iteration for our sequential relay placement problem is given by:

\footnotesize
\begin{eqnarray}
 J_{\xi}^{(k+1)}(s) &=& \inf_{a \geq 0} \bigg\{\int_{0}^{a}\beta e^{-\beta z}s(e^{\rho z}-1)dz \nonumber\\
& +&e^{-\beta a}\bigg(s(e^{\rho a}-1)+\xi+J_{\xi}^{(k)}\left(\frac{se^{\rho a}}{1+se^{\rho a}×}\right)\bigg)\bigg\}\label{eqn:value_iteration}
\end{eqnarray} 
\normalsize

Here $J_{\xi}^{(k)}(s)$ is the $k$-th iterate of the value iteration. 
Let us start with $J_{\xi}^{(0)}(s):=0$ for all $s \in (0,1]$. We set $J_{\xi}^{(k)}(\mathbf{e})=0$ for all $k \geq 0$. 
$J_{\xi}^{(k)}(s)$ is the optimal value function 
for a problem with the same single-stage cost and the same transition structure, but with the horizon 
length being $k$ (instead of infinite horizon as in our original problem) and $0$ terminal cost. Here, by horizon length $k$, 
we mean that there are $k$ number of relays available for deployment.

Let $\Gamma_k(s)$ be the set of minimizers of (\ref{eqn:value_iteration}) at the 
$k$-th iteration at state $s$, if the infimum is achieved at some $a<\infty$. 
Let $\Gamma_{\infty}(s):=\{a \in \mathcal{A}:a$ be an 
accumulation point of some sequence $\{a_k\}$ where each $a_k \in \Gamma_{k}(s)\}$. Let $\Gamma^*(s)$ be the set of minimizers in 
(\ref{eqn:bellman_beta_leq_rho}). In Appendix \ref{appendix:sequential_placement_total_power}, we show 
that $\Gamma_k(s)$ for each $k \geq 1$, $\Gamma_{\infty}(s)$ and $\Gamma^*(s)$ are nonempty.

\begin{thm}\label{theorem:convergence_of_value_iteration}
 The value iteration given by (\ref{eqn:value_iteration}) has the following properties: 
\begin{enumerate}[label=(\roman{*})]
 \item $J_{\xi}^{(k)}(s)\rightarrow J_{\xi}(s)$ for all $s \in (0,1]$, i.e., the value iteration converges 
       to the optimal value function.
\item $\Gamma_{\infty}(s) \subset \Gamma^*(s)$.
\item There is a stationary optimal policy $f^{\infty}$ where $f:(0,1] \rightarrow \mathcal{A}$ and $f(s) \in \Gamma_{\infty}(s)$ 
for all $s \in (0,1]$.
\end{enumerate}
\end{thm}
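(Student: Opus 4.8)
The plan is to verify that our control model meets the hypotheses of Schäl's semicontinuous/compact framework \cite{schal75conditions-optimality}, after which all three parts follow from his general convergence and measurable-selection theorems. We are in the negative dynamic programming (N) case: single-stage costs are non-negative and value iteration is initialized at $J_\xi^{(0)}\equiv 0$. Because costs are non-negative, the iterates $J_\xi^{(k)}(s)$ are nondecreasing in $k$ and converge pointwise to some limit; the work is to show this limit is $J_\xi$, that finite-horizon minimizers accumulate on minimizers of the Bellman equation, and that a measurable stationary selection exists.

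First I would establish the continuity ingredients of the semicontinuous model. The next-state map $g(s,a)=\frac{s e^{\rho a}}{1+s e^{\rho a}}$ is jointly continuous on $(0,1]\times[0,\infty)$, and, for fixed $s$, the objective minimized in (\ref{eqn:value_iteration}),
\[
\int_0^a \beta e^{-\beta z} s(e^{\rho z}-1)\,dz + e^{-\beta a}\Big(s(e^{\rho a}-1)+\xi + J_\xi^{(k)}\big(g(s,a)\big)\Big),
\]
is continuous in $a$ whenever $J_\xi^{(k)}$ is. An induction on $k$ shows each $J_\xi^{(k)}$ is continuous (lower semicontinuity would already suffice), so in reward form the maximand is upper semicontinuous in the action, while the transition kernel, being the exponential residual-length law pushed through $g$, is weakly continuous in $(s,a)$. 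These are exactly the continuity requirements underlying the optimality equation already invoked in Theorem \ref{thm:schal_bellman_eqn}.

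The crucial step, where I expect the real work to lie, is the inf-compactness condition, and here I would exploit the value-function upper bounds of Propositions \ref{prop:upper_bound_on_cost_beta_geq_rho} and \ref{prop:upper_bound_on_cost} together with the growth of the objective in $a$. In Case II ($\beta\le\rho$) the integral term drives the objective to $+\infty$ as $a\to\infty$, so its sublevel sets are bounded, hence compact; in Case I ($\beta>\rho$) the objective (\ref{eqn:cost_to_go_beta_greater_than_rho}) tends to $\theta s$ as $a\to\infty$, whereas $J_\xi(s)<\theta s$ strictly (Proposition \ref{prop:upper_bound_on_cost_beta_geq_rho}) and every iterate obeys $J_\xi^{(k)}(s)\le J_\xi(s)<\theta s$, so near-optimal actions stay bounded away from $\infty$. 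In both cases the relevant action sets can be confined, locally uniformly in $s$ and uniformly in the horizon $k$, to a compact interval $[0,A]$. This is precisely the inf-compactness that makes $\Gamma_k(s)$, $\Gamma_\infty(s)$ and $\Gamma^*(s)$ nonempty, as shown in Appendix \ref{appendix:sequential_placement_total_power}, and that furnishes the compactness hypothesis in Schäl's theorems.

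With semicontinuity and inf-compactness in hand, the three conclusions follow. Part (i): in the N-case the monotone limit of $J_\xi^{(k)}$ coincides with $J_\xi$ under these conditions, recovering also the classical negative-DP fact that finite-horizon optimal costs increase to the infinite-horizon optimum. Part (ii): Schäl's result that accumulation points of finite-stage optimizers solve the limiting optimality equation yields $\Gamma_\infty(s)\subset\Gamma^*(s)$, with existence of such accumulation points guaranteed by compactness. Part (iii): a measurable-selection argument from \cite{schal75conditions-optimality}, applied to the compact-valued, suitably semicontinuous map $s\mapsto\Gamma_\infty(s)$, produces a measurable $f$ with $f(s)\in\Gamma_\infty(s)$, whence the stationary policy $f^\infty$ is optimal. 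The main obstacle throughout is the non-compact state and action spaces — the boundary $s\to 0$ (the absorbing state $\mathbf{e}$, re-labelled $0$) and the action $a=\infty$ — so that the inf-compactness and semicontinuity hold \emph{uniformly enough} for Schäl's hypotheses, which is exactly what the value-function bounds buy us.
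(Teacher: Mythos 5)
Your proposal is correct and follows essentially the same route as the paper's own proof: both reduce the theorem to Sch\"{a}l's negative-DP results (convergence of value iteration, Condition B/Theorem 17.1 for non-compact actions, and the measurable-selection conclusion) by verifying joint continuity of the stagewise objective together with a horizon-uniform, compact confinement of the minimizers, using the identical case split --- $\beta>\rho$ via Proposition \ref{prop:upper_bound_on_cost_beta_geq_rho} and the bound $J_{\xi}^{(k)}(s)\leq J_{\xi}(s)<\theta s$, and $\beta\leq\rho$ via the divergence of the integral term together with Proposition \ref{prop:upper_bound_on_cost}. The only minor difference is in how continuity of the iterates is secured: the paper proves it by an induction preserving concavity and monotonicity in $s$ (which also yields a \emph{continuous} cutoff $a(s)$, hence measurability of $s\mapsto[0,a(s)]$ in the Hausdorff metric), while you would get it from inf-compactness plus a maximum-theorem argument --- both are sound.
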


\begin{proof}
The proof is given in Appendix \ref{appendix:sequential_placement_total_power}, 
Section \ref{appendix_subsection-convergence-value-iteration-proof}. It uses some advanced mathematical tools, which have been 
discussed in Appendix \ref{appendix:sequential_placement_total_power}, Section \ref{appendix_subsection_schal-discussion}.
\end{proof}

\vspace{-0 cm}
\subsection{Properties of the Value Function $J_{\xi}(s)$}

\begin{prop}\label{prop:increasing_concave_in_s}
 $J_{\xi}(s)$ is increasing and concave over $s \in (0,1]$.
\end{prop}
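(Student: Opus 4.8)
The plan is to prove both properties at once by showing that the value-iteration operator in \eqref{eqn:value_iteration} maps the cone of nonnegative, nondecreasing, concave functions on $(0,1]$ into itself, and then to transfer the properties to $J_{\xi}$ by passing to the limit via Theorem~\ref{theorem:convergence_of_value_iteration}. Write the one-step operator as $(TJ)(s)=\inf_{a\ge 0}F_J(s,a)$, where
\begin{equation*}
F_J(s,a):=\int_{0}^{a}\beta e^{-\beta z}s(e^{\rho z}-1)\,dz+e^{-\beta a}\Big(s(e^{\rho a}-1)+\xi+J(\phi_a(s))\Big),
\end{equation*}
and $\phi_a(s):=\frac{s e^{\rho a}}{1+s e^{\rho a}}$ is the post-decision state map appearing in \eqref{eqn:state-evolution}. (This single form covers both the $\beta>\rho$ and $\beta\le\rho$ analyses, since the value iteration is common to the two cases.) Starting from $J_{\xi}^{(0)}\equiv 0$, which is trivially nondecreasing and concave, I would show by induction that every iterate $J_{\xi}^{(k)}$ is nondecreasing and concave.

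The inductive step reduces to proving that $T$ preserves monotonicity and concavity, and this rests on two elementary facts about $\phi_a$. A direct computation gives, for each fixed $a\ge 0$, $\phi_a'(s)=\frac{e^{\rho a}}{(1+se^{\rho a})^2}>0$ and $\phi_a''(s)=\frac{-2e^{2\rho a}}{(1+se^{\rho a})^3}<0$, so $\phi_a$ is increasing and concave and maps $(0,1]$ into $(0,1)\subset(0,1]$, where $J$ is defined. Assuming $J$ is nondecreasing and concave, the composition $J\circ\phi_a$ is then concave (a concave nondecreasing outer function composed with a concave inner function is concave) and nondecreasing. The remaining terms of $F_J(\cdot,a)$ are affine in $s$, so for each fixed $a$ the map $s\mapsto F_J(s,a)$ is nondecreasing and concave on $(0,1]$.

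It then remains to close these two properties under $\inf_{a\ge 0}$. Monotonicity is immediate, since a pointwise infimum of nondecreasing functions is nondecreasing. For concavity I would use that a pointwise infimum of concave functions is concave: the hypograph of $(TJ)(\cdot)=\inf_{a}F_J(\cdot,a)$ is the intersection $\bigcap_{a\ge 0}\operatorname{hyp}F_J(\cdot,a)$ of convex sets, hence convex. Finiteness (ruling out a degenerate $-\infty$ envelope) holds throughout, since the iterates are nonnegative and bounded above, consistent with the upper bounds in Propositions~\ref{prop:upper_bound_on_cost_beta_geq_rho} and~\ref{prop:upper_bound_on_cost}. This shows $TJ$ is nondecreasing and concave, completing the induction; and since pointwise limits preserve both monotonicity and concavity, the convergence $J_{\xi}^{(k)}(s)\to J_{\xi}(s)$ of Theorem~\ref{theorem:convergence_of_value_iteration} gives the result.

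The one point that deserves care is the concavity-under-infimum step, which superficially resembles the \emph{false} statement that minimization preserves convexity; the resolution is precisely that it is \emph{concave} functions that are stable under pointwise infima, and the structural input making each $F_J(\cdot,a)$ concave is the concavity (not merely monotonicity) of the state map $\phi_a$. I expect this, rather than any computation, to be the crux of the argument.
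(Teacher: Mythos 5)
Your proposal is correct and takes essentially the same route as the paper's proof: an induction through the value iteration (\ref{eqn:value_iteration}) showing each iterate $J_{\xi}^{(k)}$ is increasing and concave --- via the composition rule applied to the concave increasing state map $s \mapsto \frac{se^{\rho a}}{1+se^{\rho a}}$, affinity of the remaining terms in $s$, and closure of the increasing-concave class under pointwise infima --- followed by passage to the limit using $J_{\xi}^{(k)}(s) \to J_{\xi}(s)$ from Theorem~\ref{theorem:convergence_of_value_iteration}. Your explicit derivative computation for $\phi_a$, the hypograph argument, and the finiteness remark simply fill in details the paper compresses into ``the composition rule'' and ``pointwise infimum of concave and increasing functions.''
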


\begin{prop}\label{prop:increasing_concave_in_lambda}
 $J_{\xi}(s)$ is increasing and concave in $\xi$ for all $s \in (0,1]$.
\end{prop}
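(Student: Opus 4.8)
The plan is to exploit the fact that the cost parameter $\xi$ enters the objective in (\ref{eqn:unconstrained_mdp}) only through the additive term $\xi N$, where $N$ is the (random) number of relays placed. First I would fix an arbitrary deployment policy $\pi \in \Pi$ and write the total expected cost incurred starting from state $s$ as
\begin{equation*}
V_\pi(s,\xi) = A_\pi(s) + \xi\, \overline{N}_\pi(s),
\end{equation*}
where $A_\pi(s)$ is the $\pi$-expectation of the capacity-limiting terms in (\ref{eqn:unconstrained_mdp}) and $\overline{N}_\pi(s) = \mathbb{E}_\pi(N \mid s_0 = s) \geq 0$ is the expected number of relays deployed. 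The crucial observation is that, for a \emph{fixed} $\pi$, the joint law of the placement points $(y_k)$ and of $N$ is determined by $\pi$ and the exponential line length alone, and is independent of $\xi$; hence $A_\pi(s)$ and $\overline{N}_\pi(s)$ are constants in $\xi$, and $V_\pi(s,\cdot)$ is an affine function of $\xi$ with nonnegative slope.

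By definition of the optimal value function, $J_\xi(s) = \inf_{\pi \in \Pi} V_\pi(s,\xi)$. Since a pointwise infimum of a family of affine functions is concave, concavity of $\xi \mapsto J_\xi(s)$ on $(0,\infty)$ follows immediately. For monotonicity, note that for every $\pi$ and every $\xi_1 < \xi_2$ the nonnegativity of $\overline{N}_\pi(s)$ gives $V_\pi(s,\xi_2) \geq V_\pi(s,\xi_1) \geq J_{\xi_1}(s)$; taking the infimum over $\pi$ yields $J_{\xi_2}(s) \geq J_{\xi_1}(s)$, so $J_\xi(s)$ is nondecreasing. To upgrade this to strict monotonicity I would invoke the stationary optimal policy $f^{\infty}$ of Theorem \ref{theorem:convergence_of_value_iteration}(iii): because it is strictly suboptimal never to place a relay (Proposition \ref{prop:upper_bound_on_cost_beta_geq_rho}, together with its $\beta \leq \rho$ counterpart), $f$ places the first relay at some finite distance $a$, the residual line exceeds $a$ with probability $e^{-\beta a} > 0$, whence $\overline{N}_{f^{\infty}}(s) > 0$ and $J_{\xi_2}(s) = A_{f^{\infty}}(s) + \xi_2 \overline{N}_{f^{\infty}}(s) > A_{f^{\infty}}(s) + \xi_1 \overline{N}_{f^{\infty}}(s) \geq J_{\xi_1}(s)$.

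An equivalent route, paralleling the proof of Proposition \ref{prop:increasing_concave_in_s}, is induction on the value-iteration iterates (\ref{eqn:value_iteration}): $J_\xi^{(0)} \equiv 0$ is trivially concave and nondecreasing in $\xi$, and in the update the term $e^{-\beta a}\bigl(\xi + J_\xi^{(k)}(\phi(s,a))\bigr)$, with $\phi(s,a) = se^{\rho a}/(1+se^{\rho a})$, is affine-plus-concave in $\xi$ with nonnegative slope, so each bracketed expression is concave and nondecreasing in $\xi$; the infimum over $a$ preserves both properties, and passing to the limit via Theorem \ref{theorem:convergence_of_value_iteration}(i) transfers them to $J_\xi(s)$. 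The only point requiring genuine care in either approach is the bookkeeping for finiteness: I would restrict the infimum to policies of finite total cost (which exist by Propositions \ref{prop:upper_bound_on_cost_beta_geq_rho} and \ref{prop:upper_bound_on_cost}), on which both $A_\pi(s)$ and $\overline{N}_\pi(s)$ are finite, the infinite-cost policies being irrelevant to the infimum. This is where I expect the main, though modest, technical attention to be needed.
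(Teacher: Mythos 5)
Your proposal is correct, and your second route is essentially verbatim the paper's own proof: the paper argues by induction on the value iteration (\ref{eqn:value_iteration}) that each iterate $J^{(k)}_{\xi}(s)$ is concave and increasing in $\xi$ (the placement branch being affine in $\xi$ with slope $e^{-\beta a}\geq 0$, and the infimum over $a$ preserving both properties), and then transfers these properties to $J_{\xi}(s)$ via the convergence $J^{(k)}_{\xi}(s)\to J_{\xi}(s)$ from Theorem \ref{theorem:convergence_of_value_iteration}. Your first route — writing $V_{\pi}(s,\xi)=A_{\pi}(s)+\xi\,\overline{N}_{\pi}(s)$ for each fixed policy and observing that $J_{\xi}(s)$ is a pointwise infimum of affine, nondecreasing functions of $\xi$ — is genuinely different and in some ways stronger: it requires no value-iteration machinery at all (only the definition of $J_{\xi}$ as an infimum over the $\xi$-independent policy class $\Pi$), it is the standard Lagrangian-relaxation argument underpinning the paper's own constrained-problem Lemma 1, and, as you note, it can be pushed to \emph{strict} monotonicity via $\overline{N}_{f^{\infty}}(s)>0$, something the induction route cannot deliver directly, since pointwise limits of increasing functions are only nondecreasing (and indeed the paper's ``increasing'' should be read in the weak sense). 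Your handling of the finiteness bookkeeping is also sound: any policy with $A_{\pi}(s)=\infty$ or $\overline{N}_{\pi}(s)=\infty$ has $V_{\pi}(s,\xi)=\infty$ for every $\xi>0$ and so is irrelevant to the infimum, while Propositions \ref{prop:upper_bound_on_cost_beta_geq_rho} and \ref{prop:upper_bound_on_cost} guarantee finite-cost policies exist. The one thing the paper's inductive approach buys that yours does not emphasize is uniformity across the iterates $J^{(k)}_{\xi}$, which the paper reuses when verifying the hypotheses of its value-iteration convergence theorem; but for the proposition as stated, either argument suffices.
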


\begin{prop}\label{prop:continuity_of_cost}
 $J_{\xi}(s)$ is continuous in $s$ over $(0,1]$ and continuous in $\xi$ over $(0,\infty)$.
\end{prop}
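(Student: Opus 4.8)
The plan is to obtain continuity almost for free from the structural properties already in hand, namely the monotonicity and concavity from Propositions \ref{prop:increasing_concave_in_s} and \ref{prop:increasing_concave_in_lambda}, together with the finiteness of $J_{\xi}$ guaranteed by Proposition \ref{prop:upper_bound_on_cost}. The only tool needed is the standard fact that a finite, real-valued concave function on an interval is automatically continuous at every \emph{interior} point of that interval (indeed locally Lipschitz there). Thus the entire statement reduces to handling the one boundary point that is not interior, namely $s=1$.

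For continuity in $s$, I would first record that, by Proposition \ref{prop:upper_bound_on_cost}, $J_{\xi}(\cdot)$ is finite on $(0,1]$, and by Proposition \ref{prop:increasing_concave_in_s} it is concave there; hence it is continuous on the open interval $(0,1)$ with no further work. It then remains only to establish left-continuity at the right endpoint $s=1$. Since $J_{\xi}$ is increasing, the left-hand limit $J_{\xi}(1^{-}):=\lim_{s\uparrow 1}J_{\xi}(s)$ exists and satisfies $J_{\xi}(1^{-})\le J_{\xi}(1)$. On the other hand, concavity on $(0,1]$ forces the reverse inequality $J_{\xi}(1^{-})\ge J_{\xi}(1)$, because a finite concave function can only drop (never jump up) at the right endpoint of its domain; this follows from the non-increasing-slope inequality $\frac{J_{\xi}(1)-J_{\xi}(s_2)}{1-s_2}\le\frac{J_{\xi}(s_2)-J_{\xi}(s_1)}{s_2-s_1}$ for $s_1<s_2<1$, upon letting $s_2\uparrow 1$ with $s_1$ fixed. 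Combining the two inequalities yields $J_{\xi}(1^{-})=J_{\xi}(1)$, which gives continuity on all of $(0,1]$.

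For continuity in $\xi$, by Proposition \ref{prop:increasing_concave_in_lambda} the map $\xi\mapsto J_{\xi}(s)$ is concave on $(0,\infty)$ for each fixed $s$, and it is finite there by Proposition \ref{prop:upper_bound_on_cost}. Because $(0,\infty)$ is an \emph{open} interval, there is no boundary point to treat separately, and continuity on the whole of $(0,\infty)$ is immediate from the continuity of finite concave functions on open intervals.

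The only mildly delicate point is the endpoint $s=1$, where concavity by itself leaves open the possibility of a downward jump. The resolution is precisely that monotonicity supplies the missing reverse inequality, pinning the interior limit to the endpoint value; once this is observed, everything else is a direct appeal to the textbook regularity of finite concave functions. I would therefore expect the write-up to be very short, with the combination of monotonicity and concavity at $s=1$ being the one step worth spelling out explicitly.
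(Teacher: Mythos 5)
Your proposal is correct and follows essentially the same route as the paper's proof: continuity on the interior $(0,1)$ (and on $(0,\infty)$ in $\xi$) comes from the standard continuity of finite concave functions, and the endpoint $s=1$ is pinned down by combining monotonicity, which gives $\lim_{s\uparrow 1}J_{\xi}(s)\le J_{\xi}(1)$, with concavity, which gives the reverse inequality. Your write-up is in fact slightly more careful than the paper's, since you make explicit the finiteness of $J_{\xi}$ (via the upper-bound proposition) and justify the no-upward-jump claim at $s=1$ with the chord-slope inequality rather than asserting it.
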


Proofs of the Propositions \ref{prop:increasing_concave_in_s}-\ref{prop:continuity_of_cost} have been given in Appendix 
\ref{appendix:sequential_placement_total_power}, Section \ref{appendix_subsection_policy-structure}. The proofs show that the 
increasing, concave nature of the value function is preserved at every stage of the 
value iteration (\ref{eqn:value_iteration}), and use the fact that 
$J_{\xi}^{(k)}(s)\rightarrow J_{\xi}(s)$.

\subsection{A Useful Normalization}

Note that in our sequential placement problem we have assumed $L$ to be exponentially distributed with mean $\frac{1}{\beta×}$. Hence, $\beta L$ 
is exponentially distributed with mean $1$. Now, defining $\Lambda:=\frac{\rho}{\beta×}$ and $\tilde{z}_k:=\beta y_k$, $k=1,2,\cdots,(N+1)$, 
we can rewrite (\ref{eqn:unconstrained_mdp}) as follows:

\begin{eqnarray}
 \inf_{\pi\in \Pi } \mathbb{E}_{\pi} \left(e^{\Lambda \tilde{z}_1}+
\sum_{k=2}^{N+1}\frac{e^{\Lambda \tilde{z}_k}-e^{\Lambda  \tilde{z}_{k-1}}}{1+e^{\Lambda \tilde{z}_1}+
\cdots+e^{\Lambda \tilde{z}_{k-1}}×}+\xi N \right)\label{eqn:unconstrained_mdp_beta_one}
\end{eqnarray}

Note that, $\Lambda$ plays the same role as $\lambda$ played in the known $L$ case 
(see Section \ref{sec:total_power_constraint}). Since $\frac{1}{\beta×}$ is the mean length of the line, $\Lambda$ can 
be considered as a measure of attenuation in the network. 
We can think of the new problem (\ref{eqn:unconstrained_mdp_beta_one}) in the same way as (\ref{eqn:unconstrained_mdp}), but with 
the length of the line being exponentially distributed with mean $1$ ($\beta'=1$) and the path-loss exponent being changed to 
$\rho'=\Lambda=\frac{\rho}{\beta×}$. The relay locations are also normalized ($\tilde{z}_k=\beta y_k$). One can solve the new problem 
(\ref{eqn:unconstrained_mdp_beta_one}) and obtain the optimal policy. Then the solution to (\ref{eqn:unconstrained_mdp}) can be 
obtained by multiplying each control distance (from the optimal policy of (\ref{eqn:unconstrained_mdp_beta_one})) 
with the constant $\frac{1}{\beta×}$. Hence, it suffices to work with $\beta=1$.

\vspace{2mm}
\subsection{A Numerical Study of the Optimal Policy}

\begin{figure}[t!]
\centering
\includegraphics[scale=0.25]{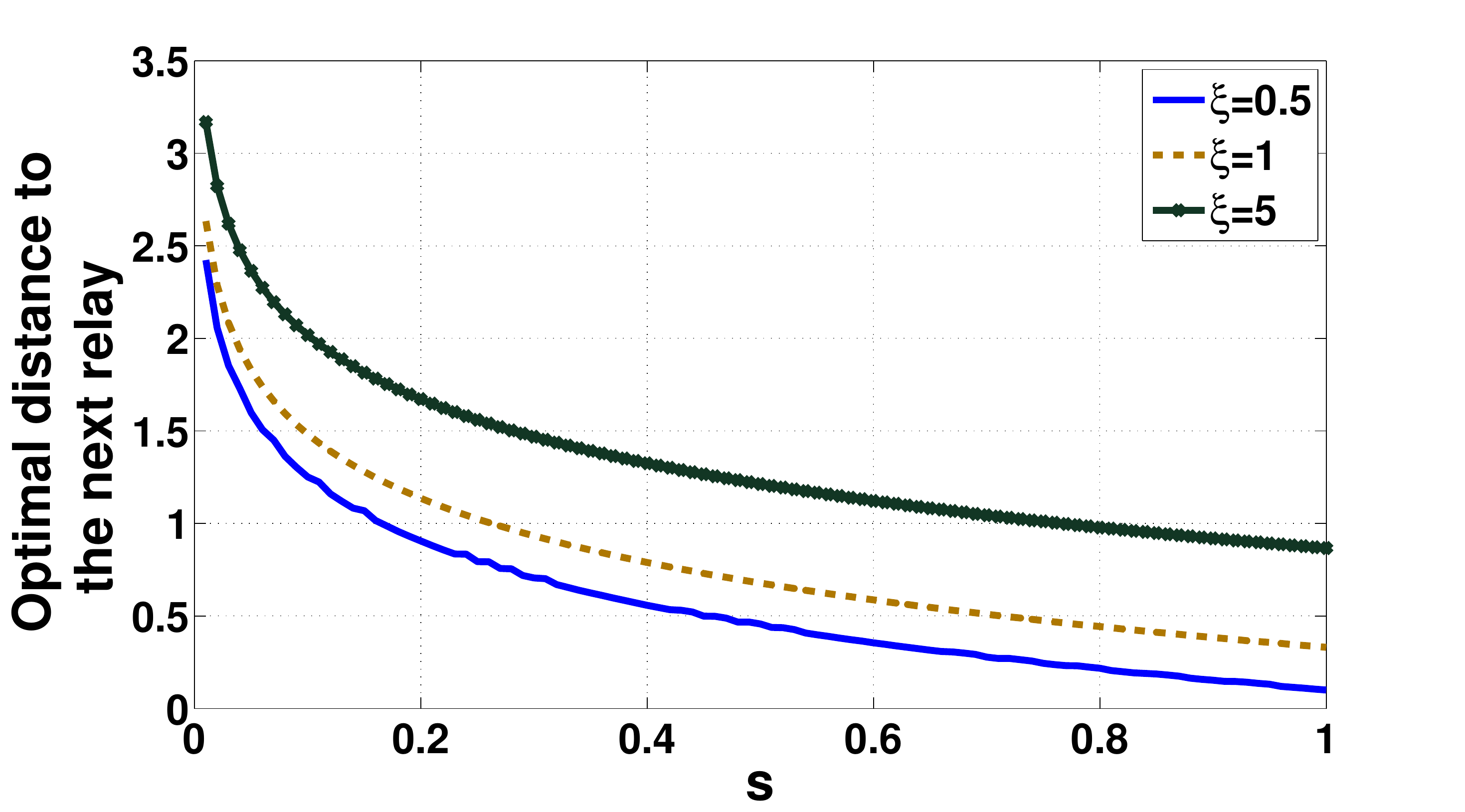}
\vspace{-0.3cm}
\caption{$\Lambda:=\frac{\rho}{\beta×}=2$; normalized optimal placement distance $a^*$ vs. $s$.}
\label{fig:control_vs_state}
\end{figure}

\begin{figure}[t!]
\centering
\includegraphics[scale=0.25]{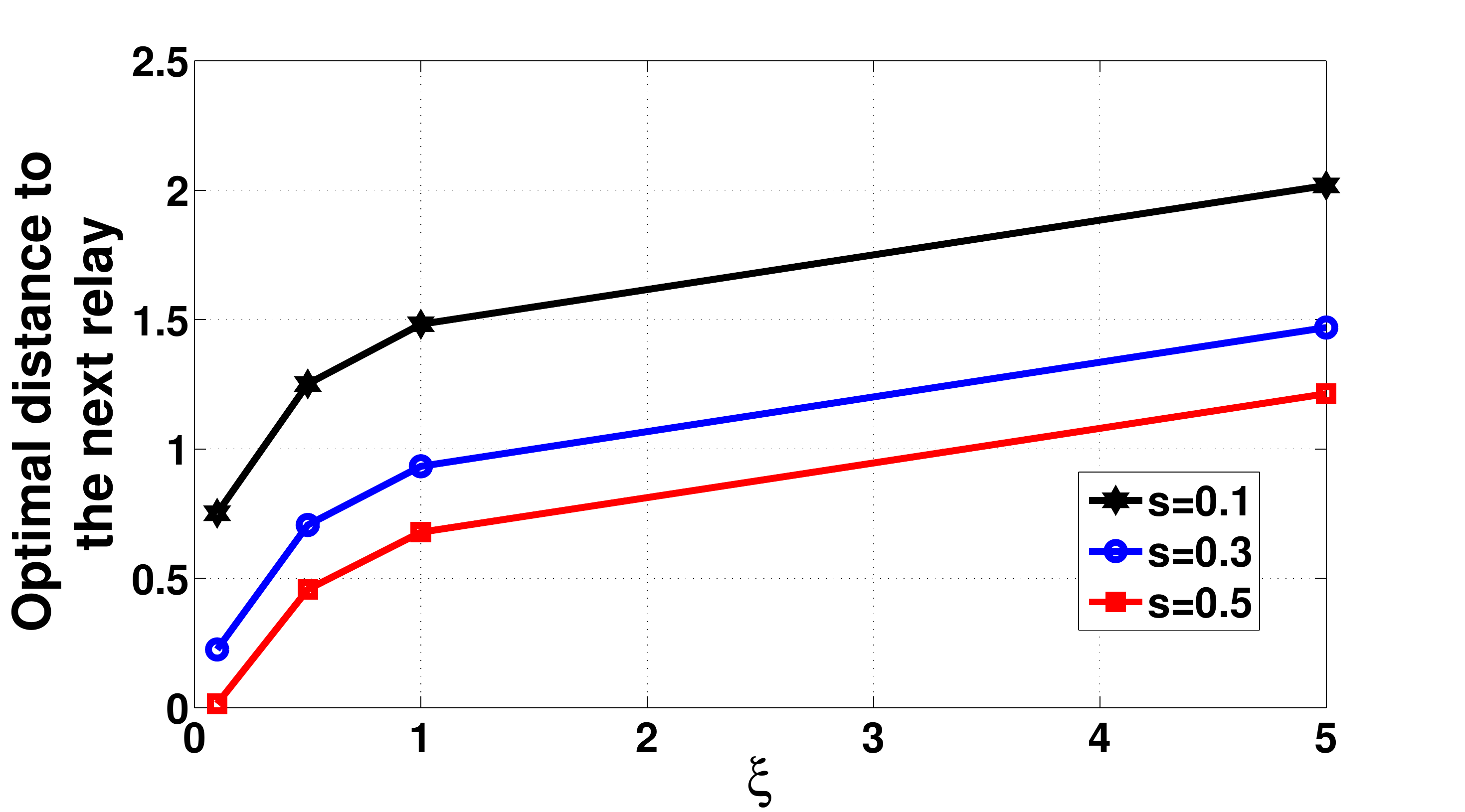}
\vspace{-0.3cm}
\caption{$\Lambda:=\frac{\rho}{\beta×}=2$; normalized optimal placement distance $a^*$ vs. $\xi$.}
\label{fig:control_vs_xi}
\end{figure}

Let us recall that the state of the system after placing the $k$-th relay is given by 
$s_k=\frac{e^{\Lambda \tilde{z}_k}}{\sum_{i=0}^{k}e^{\Lambda \tilde{z}_i}}$. The action is the normalized 
distance of the next relay to be placed from the 
current location. The single stage cost function for our total cost minimization problem is given by (\ref{eqn:single-stage-cost}). 

In our numerical work, we discretized the state space $(0,1]$ into $100$ steps, so that the state space becomes 
$\{0.01,0.02,\cdots,0.99,1\}$. We discretized the action space into steps of size $0.001$, i.e., the action space becomes 
$\{0,0.001,0.002,\cdots\}$.

We performed numerical experiments to study the structure of the optimal policy obtained through value 
iteration for $\beta'=1$ and some values of 
 $\Lambda:=\frac{\rho}{\beta×}$. The value iteration in this experiment converged and we obtained a unique stationary optimal policy, 
though Theorem~\ref{theorem:convergence_of_value_iteration} does not guarantee the uniqueness of the stationary optimal policy. 
Figure~\ref{fig:control_vs_state}   
shows that the normalized optimal placement  distance  $a^*$ 
is decreasing with the state $s \in (0,1]$. This can be understood as follows. The state $s$ (at a placement point) is small 
only if a sufficiently large number of 
relays have already been placed.\footnote{$\frac{e^{\Lambda \tilde{z}_k}}{\sum_{i=0}^{k}e^{\Lambda \tilde{z}_i}} 
\geq \frac{e^{\Lambda \tilde{z}_k}}{(k+1)e^{\Lambda \tilde{z}_k}}=\frac{1}{k+1}$; 
hence, if $s_k$ is 
small, $k$ must be large enough.} Hence, if several relays have already been placed and $\sum_{i=0}^{k}e^{\Lambda \tilde{z}_i}$ is 
sufficiently large compared to $e^{\Lambda \tilde{z}_k}$ (i.e., $s_k$ is small), the $(k+1)$-st relay will 
be able to receive sufficient amount of power 
from the previous nodes, and hence does not need to be placed close to the $k$-th relay. 
A value of $s_k$ close to $1$ indicates that there is a 
large gap between relay $k$ and relay $k-1$, the power 
received at the next relay from the previous relays is small and hence the next relay must be placed closer to the last one.

On the other hand, $a^*$ is increasing with $\xi$ (see also Figure~\ref{fig:control_vs_xi}, previous page). 
Recall that $\xi$ can be viewed as the price 
of placing a relay. This figure confirms the intuition that  if the relay price is high, then the relays should be placed less frequently.

\begin{figure}[t!]
\centering
\includegraphics[scale=0.25]{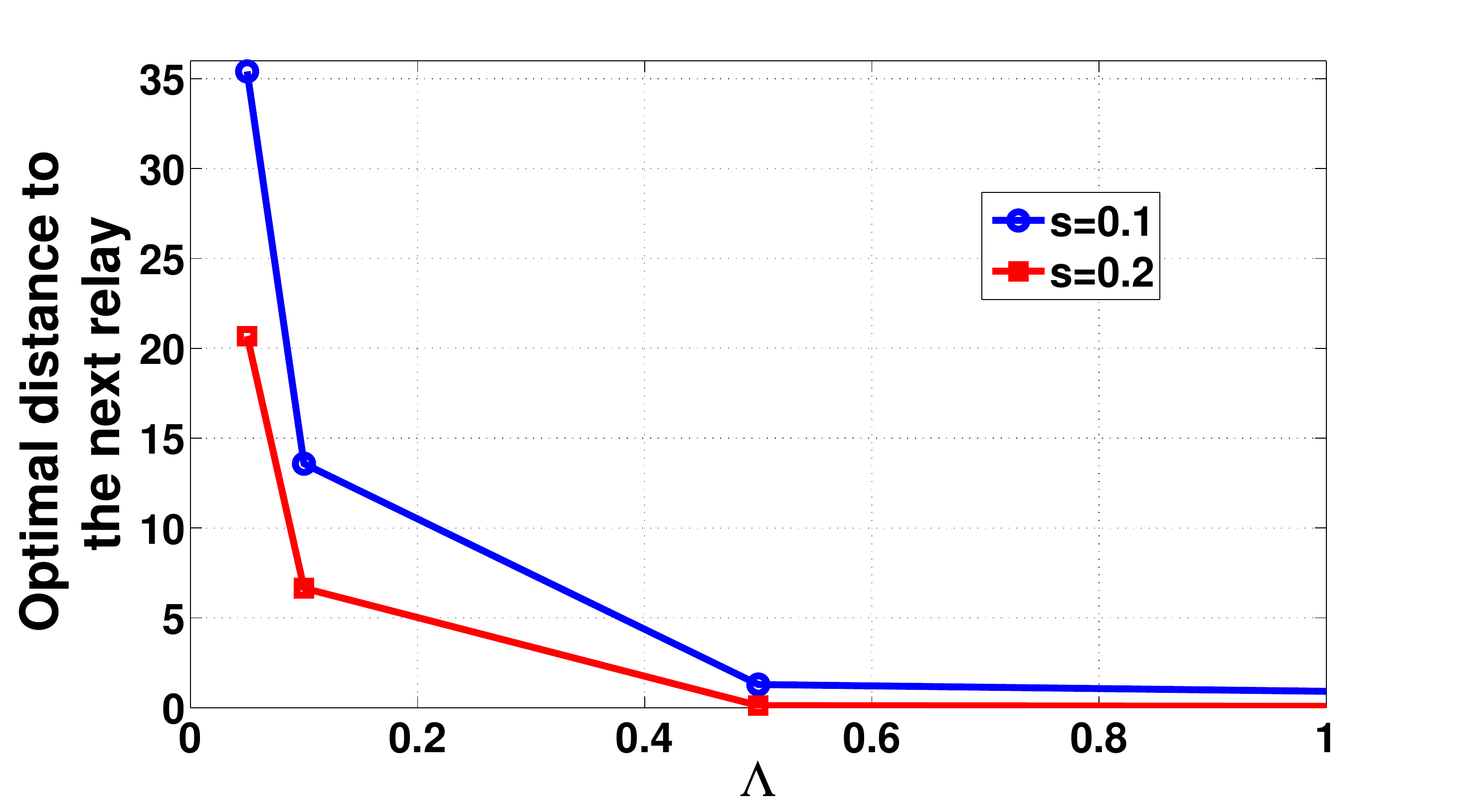}
\vspace{-0.3cm}
\caption{$\xi=0.01$; normalized optimal placement distance $a^*$ vs. $\Lambda$.}
\label{fig:control_vs_rho}
\end{figure}

\begin{table}[t!]
\footnotesize
\centering
\begin{tabular}{|c |c |}
\hline
$\mathbf{\Lambda}$  & Optimal normalized distance \\
 & of the nodes from the source \\ \hline
$0.01$ & 0,        0,   8.4180,   10.0000  \\ \hline 
$0.1$ &  0,         0,         0,         0,         0,         0,         0,    0.2950,    0.5950,    0.9810,    1.3670,         \\  
       & 1.7530,    2.1390,  $\cdots,$   9.0870,    9.4730,     9.8590,   10.0000   \\ \hline
$0.5$ &  0,  0,  0,  0,  0,  0,  0,  0,  0,  0, 0.0200,  0.1220,  0.2240, 0.3260,  0.4280,      \\ 
      &0.5300, $\cdots$, 9.8120,    9.9140,   10.0000   \\ \hline
$2$ & 0, 0,  0,  0,  0,  0,  0,  0,  0,  0, 0.0050, 0.0250,  0.0450, 0.0650, \\
 &  $\cdots$,  9.9650,  9.9850, 10.0000  \\ \hline 
$5$ &  0,  0,  0,  0,  0,  0,  0,  0,  0,  0,  0.0020, 0.0080,  0.0140,   \\ 
 &  0.0200, $\cdots$, 9.9860, 9.9920,  9.9980, 10.0000\\ \hline 
\end{tabular}
\caption{Optimal placement of nodes on a line of length $10$ for various values of $\Lambda$,
  using the corresponding optimal policies for $\xi=0.001$.}
\label{table:effect_of_rho_on_placement_1}
\normalsize
\end{table}

\begin{table}[t!]
\footnotesize
\centering
\begin{tabular}{|c |c |}
\hline
$\mathbf{\Lambda}$  & Evolution of state in the process of sequential placement \\ \hline
$0.01$ & 1,    0.5,    0.34,    0.27 \\ \hline 
$0.1$ &    1,    0.5,    0.34,    0.26, 0.21,  0.18, 0.16, 0.14, 0.13, 0.12,  0.12,   $\cdots$      \\  \hline
$0.5$ &   1,    0.5,    0.34,    0.26, 0.21,  0.18, 0.16, 0.14, 0.13,  \\
       &0.12,    0.11, 0.1,   0.1, 0.1, $\cdots$    \\ \hline
$2$ & 1,    0.5,    0.34,    0.26, 0.21,  0.18, 0.16, 0.14, 0.13,  0.12,    \\
     &0.11, 0.1,   0.1, 0.1, $\cdots$ \\  \hline
$5$ & 1,    0.5,    0.34,    0.26, 0.21,  0.18, 0.16, 0.14, 0.13,  0.12,    \\
     &0.11, 0.1,   0.1, 0.1, $\cdots$    \\ \hline
\end{tabular}
\caption{Evolution of state in the process of sequential placement on a line of length $10$ for various values of $\Lambda$,
  using the corresponding optimal policies for $\xi=0.001$.}
\label{table:state-evolution_1}
\normalsize
\end{table}

\begin{table}[t!]
\footnotesize
\centering
\begin{tabular}{|c |c |}
\hline
$\mathbf{\Lambda}$  & Optimal normalized distance \\
 & of the nodes from the source \\ \hline
$0.01$ & 10  \\ \hline 
$0.1$ &  0,         0,    1.3430,    4.6280,    7.9130,   10.0000     \\  \hline
$0.5$ &  0,  0,  0,  0.0390,  0.1270,  0.2550,  0.4820,  0.8300,  1.1780,  1.5260,     \\
   &   1.8740, 2.2220, $\cdots$, 9.5300, 9.8780, 10.0000   \\ \hline
$2$ &  0,  0,  0,  0, 0.0080, 0.0220, 0.0440, 0.0670,  0.0950, 0.1270,      \\
   &  0.1590, 0.1910,  0.2230, $\cdots$, 9.9510,  9.9830, 10.0000 \\ \hline 
$5$ & 0,  0,  0,   0,  0,  0.0030,  0.0080, 0.0150,  0.0240,  0.0350,  0.0460,  \\
    &  0.0570, $\cdots$, 9.9790,  9.9900, 10.0000 \\ \hline 
\end{tabular}
\caption{Optimal placement of nodes on a line of length $10$ for various values of $\Lambda$,
  using the corresponding optimal policies for $\xi=0.01$.}
\label{table:effect_of_rho_on_placement_2}
\normalsize
\end{table}

\begin{table}[t!]
\footnotesize
\centering
\begin{tabular}{|c |c |}
\hline
$\mathbf{\Lambda}$  & Evolution of state in the process of sequential placement \\ \hline
$0.01$ & 1\\ \hline 
$0.1$ & 1,    0.5,    0.34,   0.28,    0.28,    0.28 \\  \hline
$0.5$ & 1,    0.5,    0.34,    0.26, 0.21, 0.18,  0.17,  0.16,  0.16,  0.16, $\cdots$  \\ \hline
$2$ & 1,    0.5,    0.34,    0.26, 0.21, 0.18,    0.16, 0.15, 0.14,  0.13,  0.13,  $\cdots$  \\ \hline
$5$ &  1,    0.5,    0.34,    0.26, 0.21, 0.18,    0.16, 0.15, 0.14,  0.13,  0.13,  $\cdots$ \\  \hline

\end{tabular}
\caption{Evolution of state in the process of sequential placement on a line of length $10$ for various values of $\Lambda$,
  using the corresponding optimal policies for $\xi=0.01$.}
\label{table:state-evolution_2}
\normalsize
\end{table}

%

\begin{table}[t!]
\footnotesize
\centering
\begin{tabular}{|c |c |}
\hline
$\mathbf{\Lambda}$  & Optimal normalized distance \\
 & of the nodes from the source \\ \hline
$0.01$ &  10 \\ \hline 
$0.1$ &  5.3060,   10.0000     \\  \hline
$0.5$ &  0,  0.6580,  1.7130, 2.7680, 3.8230,  4.8780,  5.9330,  6.9880,    \\ 
       & 8.0430,  9.0980, 10.0000 \\ \hline
$2$ & 0, 0.0150, 0.1540, 0.3230,  0.5080, 0.6930, 0.8780,  1.0630,   \\ 
    &  1.2480,  1.4330, $\cdots$, 9.7580,  9.9430, 10.0000  \\ \hline 
$5$ & 0,  0.0050, 0.0510,  0.1220,  0.1930, 0.2640,$\cdots$, 9.9910, 10.0000  \\ \hline 
\end{tabular}
\caption{Optimal placement of nodes on a line of length $10$ for various values of $\Lambda$,
  using the corresponding optimal policies for $\xi=0.1$.}
\label{table:effect_of_rho_on_placement_3}
\normalsize
\end{table}

\begin{table}[t!]
\footnotesize
\centering
\begin{tabular}{|c |c |}
\hline
$\mathbf{\Lambda}$  & Evolution of state in the process of sequential placement \\ \hline
$0.01$ & 1\\ \hline 
$0.1$ & 1,    0.63\\  \hline
$0.5$ & 1,  0.5, 0.41,  0.41,  0.41,  $\cdots$ \\ \hline
$2$ &  1, 0.5, 0.35,  0.32,  0.31,  0.31,  0.31,   $\cdots$ \\ \hline
$5$ &  1, 0.5,  0.34, 0.3, 0.3, $\cdots$  \\  \hline

\end{tabular}
\caption{Evolution of state in the process of sequential placement on a line of length $10$ for various values of $\Lambda$,
  using the corresponding optimal policies for $\xi=0.1$.}
\label{table:state-evolution_3}
\normalsize
\end{table}

Figure \ref{fig:control_vs_rho} shows that $a^*$ is decreasing with $\Lambda$, 
for fixed values of $\xi$ and $s$. This is intuitive since increased attenuation in the network will require the relays 
to be placed closer to each other.

Tables \ref{table:effect_of_rho_on_placement_1}, \ref{table:effect_of_rho_on_placement_2} 
and \ref{table:effect_of_rho_on_placement_3} 
illustrate the impromptu placement of relay nodes using our policy, 
for different values of $\Lambda$. In this numerical example we 
applied the policy for $\beta'=1$  and three different values of 
$\xi$, to a line of length $10$. Thus, the policy that we use corresponds to a line having 
exponentially distributed length with mean $1$. If the 
line actually ends at some point before 10, the process would end there with the corresponding number 
of relays (as can be obtained from tables \ref{table:effect_of_rho_on_placement_1}, \ref{table:effect_of_rho_on_placement_2} 
and \ref{table:effect_of_rho_on_placement_3} ) before the end-point having been placed. 

We observe that as $\Lambda$ increases, more relays need to be placed since 
the optimal control decreases with $\Lambda$ for each $s$ 
(see Figure~\ref{fig:control_vs_rho}). On the other hand, the number of relays 
decreases with $\xi$; this can be explained from Figure~\ref{fig:control_vs_xi}. Note that, initially 
one or more relays are placed at or near the source if $a^*(s=1)$ is $0$ or small. But, 
after some relays have been placed, the relays are placed 
equally spaced apart. We see that this happens because,  after a few relays have been placed, the state, $s$, 
does not change, hence, resulting in the relays being subsequently placed equally spaced apart. 
This phenomenon is evident in  
Tables~\ref{table:state-evolution_1}, \ref{table:state-evolution_2} and \ref{table:state-evolution_3}. 
In our present numerical example, the state $s$ will remain unchanged after a relay placement if 
$s=\lceil{\frac{se^{\Lambda a^*(s)}}{0.01(1+se^{\Lambda a^*(s)})×}}\rceil \times 0.01$, 
since we have discretized the state space. 
After some relays are placed, the state becomes equal to one fixed point $s'$ of the function 
$\lceil{\frac{se^{\Lambda a^*(s)}}{0.01(1+se^{\Lambda a^*(s)})×}}\rceil \times 0.01$. Note that 
the deployment starts from $s_{0}:=1$, but for any value of $s_{0}$ (even with $s_{0}$ smaller than $s'$), 
we numerically observe the same phenomenon. Hence, 
$s'$ is an absorbing state. 


\begin{table}[t!]
\footnotesize
\centering
\begin{tabular}{|c |c |c |c |c |c |}
\hline
 & & Average  & Mean& Number of  & Maximum  \\
$\xi$ &   $\Lambda$ & percentage  & number of & cases where & percentage \\ 
 & & error & relays used& no relay & error \\
 & & & &was used & \\ \hline

 0.001   & 0.01     & 0.0068       & 2.0002       & 0       & 0.7698         \\ \hline
 0.001   & 0.1      & 0.3996      &  9.4849      &  0      & 6.8947         \\ \hline
0.001   & 0.5       & 2.0497       & 20.0407       & 0       &  19.5255        \\ \hline
0.001   & 2         &  10.1959      &  60.3740      &  0      &  45.9528        \\ \hline
0.01 & 0.01         &  0      &    0    &   10000     &    0      \\ \hline
 0.01  & 0.1        &  0.3517      & 2.2723       &   0     &   4.6618        \\ \hline
0.01 & 0.5          &  1.5661      & 7.7572       &   0     &   4.7789       \\ \hline
0.01   & 2          &  10.7631      & 36.8230       &   0     &   49.9450        \\ \hline
0.1 & 0.01          &  0      &  0     &   10000     &   0         \\ \hline
0.1 & 0.1           &   0.1259     &   0.0056     &   9944     &    25.9098        \\ \hline
0.1   & 0.5         &   2.9869     &    1.8252    &    0    &    12.5907        \\ \hline
0.1   & 2           &  4.7023      &   7.1530     &    0    &    9.0211        \\ \hline
\end{tabular}
\caption{Comparison of the performance (in terms of $H$; see text) 
of optimal sequential placement over a line of random length, with the optimal placement if the length was known.}
\vspace{-0.5cm}
\label{table:comparison_optimal_mdp}
\normalsize
\end{table}

The numerical experiments reported in Table~\ref{table:comparison_optimal_mdp} 
are a result of asking the following question: how does the cost of an impromptu deployment over a 
line of exponentially distributed length compare with the cost of placing the same number of relays optimally 
over the line, once the length of the line has been revealed? For several combinations of $\Lambda$ and $\xi$, we generated $10000$ 
random numbers independently from an exponential distribution with parameter $\beta'=1$. Each of these numbers was considered as 
a possible realization of the length of the line. Then we computed the placement of the relay nodes 
for each realization by optimal sequential placement policy, which gave us 
$H=\frac{1}{g_{0,1}×}+\sum_{k=2}^{N+1}\frac{(g_{0,k-1}-g_{0,k})}{g_{0,k}g_{0,k-1}\sum_{l=0}^{k-1}\frac{1}{g_{0,l}×}×}$, 
a quantity that we use to evaluate the quality of the relay placement. The significance of $H$ can be 
recalled from (\ref{eqn:capacity_multirelay}) where we found that the capacity $C(1+\frac{P_T/\sigma^2}{H×})$ 
can be achieved if total 
power $P_T$ is available to distribute among the source and the relays; i.e., $H$ can be interpreted as the 
net effective attenuation after power has been allocated optimally over the nodes. 
Also, for each realization, we computed $H$ 
for one-shot optimal relay placement, assuming that the length of the line is known and the number of relays available 
is the {\em same} as the number of relays used by the corresponding sequential placement policy. For a given combination of $\Lambda$ 
and $\xi$, for the $k$-th realization of the length of the line, let us denote the two $H$ values by $H_{\mathsf{sequential}}^{(k)}$ 
and 
$H_{\mathsf{optimal}}^{(k)}$. Then the percentage error for the $k$-th realization is given by:
\begin{equation*}
 e_k:= \frac{|H_{\mathsf{optimal}}^{(k)}-H_{\mathsf{sequential}}^{(k)}|}{H_{\mathsf{optimal}}^{(k)}×} \times 100
\end{equation*}
The average percentage error in Table~\ref{table:comparison_optimal_mdp} is  
the quantity $\frac{\sum_{k=1}^{10000}e_k}{10000×}$. 
The maximum percentage error is the quantity $\max_{k \in \{1,2,\cdots,10000\}}e_k$. 
The following is a discussion of the numerical values in Table~\ref{table:comparison_optimal_mdp}:

\begin{enumerate}[label=(\roman{*})]
\item We observe that, for small enough 
$\xi$, some relays will be placed at the source itself. For example, for $\Lambda=0.01$ and $\xi=0.001$, we will place two relays 
at the source itself (See Table~\ref{table:effect_of_rho_on_placement_1}. 
After placing the first relay, the next state will become $s=0.5$, and 
$a^*(s=0.5)=0$). The state after placing the second relay becomes $s=0.34$, for which $a^*(s=0.34)=8.41$ 
(see the placement in Table~\ref{table:effect_of_rho_on_placement_1}). Now, 
the line having an exponentially distributed length with mean $1$ will end before $a^*(s=0.34)=8.41$ distance with high probability, 
and the probability of placing the third relay will be very small. As a result, the mean number of relays will be $2.0002$. 
In case only $2$ relays are placed by the sequential deployment policy and we seek to place 
$2$ relays optimally 
for the same length of the line (with the length known), the optimal locations for both relays are close to 
the source location 
if the length of the line is small (i.e., if the attenuation $\lambda$ is small, recall the definition of $\lambda$ from 
Section~\ref{subsec:network_propagation_model}). If the line is long (which has a very small probability), 
the optimal placement 
will be significantly different from the sequential placement. Altogether, the error will be small in this case.

\item For $\Lambda=0.01$ and $\xi=0.01$, $a^*(s=1)$ is so large that with probability 
close to $1$ the line will end in a distance less than 
$a^*(s=1)$ and no relay will be placed. In this case the optimal placement and the sequential placement match exactly, and the 
error is $0$. 

\item On an average, the error is large for higher attenuation. At low attenuation, with high probability 
the number of deployed relays is small, and they will be placed at or near the 
source, in which case the optimal placement is close to the sequential placement. 
On the other hand, for high attenuation sequential 
placement uses a considerable number of relays, and the optimal placement may differ significantly from sequential placement. 
Also, we observe (see the last two rows of Table~\ref{table:comparison_optimal_mdp}) 
that even for high attenuation, if $\xi$ is too large, then we will not place any relay, 
which will result in $0$ error.

\end{enumerate}


\vspace{-2mm}
\section{Conclusion}
\label{conclusion}
\vspace{-0.5mm}

We have studied the problem of placing relay nodes along a line, in order to connect a sink at the end of the line to a source 
at the start of the line, in order to  maximize the end-to-end achievable data 
rate. We obtained explicit formulas for the optimal location of a single relay node with individual power constraints on the 
source and the relay under power-law and 
exponential path-loss models. For the multi-relay channel with exponential 
path-loss and sum power constraint, we derived an expression 
for the achievable rate and formulated the optimization problem to 
maximize the end-to-end data rate. We solved explicitly the single
 relay placement problem with exponential path-loss and sum power constraint. 
The asymptotic performance of uniform relay placement 
was also studied for the sum power constraint. Finally, 
a sequential relay placement problem along a line having unknown random length 
was formulated as an MDP whose value function was characterized analytically and the policy structure was investigated 
numerically. 

Our results in this paper are based on information theoretic capacity results, which make several assumptions: 
(i) all the nodes should have global knowledge of the codebooks, channel gains, 
and transmit powers, (ii) relay nodes need to be full-duplex, 
(iii) global synchronization at block level is required. In our ongoing work, 
we are also exploring non-information theoretic, packet forwarding models 
for optimal relay placement, with the aim of obtaining placement algorithms that can be more easily reduced to practice.

\renewcommand{\thesubsection}{\Alph{subsection}}

\appendices

\section{System Model and Notation}\label{appendix:system_model_and_notation}
\subsection{Proof of Theorem \ref{theorem:why_on_a_line}}
Let the source and the sink be located at $(0,0)$ and $(L,0)$. There are $N$ number of relays $1,2,...,N$, where the $i$-th 
relay is placed at $(a_{i},b_{i})$. Now let us consider a new placement of the relays where for each $i \in \{1,2,...,N\}$, the $i$-th 
relay is placed at $(a_{i},0)$. Let us define $a_{0}:=0$ for the source and $a_{N+1}:=L$ for the sink.
 This new placement results in a reduced (not necessarily strictly) distance between any two nodes in the multi-relay channel 
since now there is no variation along $y$-direction. Let us also assume that the relays are enumerated in the 
increasing order of their $x$-coordinate values. Now let $\mathcal{A}:=\{i:1 \leq i \leq N, a_{i}>L\}$, 
$\mathcal{B}:=\{i:1 \leq i \leq N, a_{i}<0\}$ and $\mathcal{C}:=\{i:1 \leq i \leq N, a_{i} \in [0,L]\}$. Let 
$a:=\min_{i,j \in \{0,1,2,...,N,N+1\}, i \neq j} |a_{i}-a_{j}| $. Clearly, $a \leq L $ since 
source and sink are placed $L$ distance apart from each other. If $\mathcal{A}$ is nonempty, place all relays belonging to $\mathcal{A}$ 
uniformly (i.e.,with equal distance between two successive nodes) in the interval $\left[ L-\frac{a}{2×},L \right]$. Similarly, if $\mathcal{B}$ is nonempty, place all relay nodes belonging to 
$\mathcal{B}$ uniformly in the interval $\left[0,\frac{a}{2×}\right]$. Thus we obtain a new set of relay locations where each relay 
is placed on the line segment joining the source and the sink, and the distance between any two relay nodes has been reduced. 
From (\ref{eqn:achievable_rate_multirelay}), it is clear that $R$ increases in the channel gain between each pair of nodes,
for any given power allocation scheme. 
Hence, the optimum relay locations will be such that no relay will be placed outside the line segment joining the source and the 
sink.

\section{Single Relay Node Placement: Node Power Constraints}\label{appendix:exponential_individual}

\subsection{Proof of Theorem \ref{theorem:exponential_individual_power_constraint}}
Observe that in (\ref{eqn:optimize_fd_relay_df}) $g_{0,2}+g_{1,2}+2 \sqrt{(1-\alpha)g_{0,2}g_{1,2}}$ is 
decreasing in $\alpha$
and $\alpha g_{0,1}$ is increasing in $\alpha$. Also,
 $g_{0,2}+g_{1,2}+2 \sqrt{(1-\alpha)g_{0,2}g_{1,2}}|_{\alpha=1}=g_{0,2}+g_{1,2}$,
 which is greater than 
or equal to $\alpha g_{0,1}|_{\alpha=1}$ if $g_{0,1} \leq g_{0,2}+g_{1,2}$.
This motivates us to consider two different cases: $g_{0,1} \leq g_{0,2}+g_{1,2}$ and $g_{0,1} \geq g_{0,2}+g_{1,2}$. We find that 
since $g_{0,1}$ is decreasing in $r$ and 
$g_{1,2}$ is increasing in $r$,
$g_{0,1} \leq g_{0,2}+g_{1,2}$ occurs if and only if $r \geq r_{\mathsf{th}}$, where 
$r_{\mathsf{th}}:=-\frac{1}{\rho×} \log \left(\frac{e^{-\rho L}+\sqrt{e^{-2 \rho L}+4e^{-\rho L}}}{2×}\right)$.

\textbf{Case I: $g_{0,1} \leq g_{0,2}+g_{1,2}$}.
 In this case $\alpha=1$ is the maximizer of $\min \{g_{0,2}+g_{1,2}+2 \sqrt{(1-\alpha)g_{0,2}g_{1,2}} ,\, \alpha g_{0,1} \}$.
So the optimum $R$ in this case becomes $C \bigg(\frac{g_{0,1}P}{\sigma^{2}×}\bigg)$. Further, since $g_{0,1}$ is decreasing in 
$r$, in the $r \geq r_{\mathsf{th}}$ region 
optimum relay location is $\max \{r_{\mathsf{th}},0\}$. 

\textbf{Case II: $g_{0,1} \geq g_{0,2}+g_{1,2}$}.
This holds for $r \leq r_{\mathsf{th}}$. Now as $\alpha$ increases from $0$ to $1$, $\alpha g_{0,1}$ increases from 
$0$ to $g_{0,1}$
and $(g_{0,2}+g_{1,2}+2 \sqrt{(1-\alpha)g_{0,2}g_{1,2}})$ decreases from 
$(g_{0,2}+g_{1,2}+2 \sqrt{g_{0,2}g_{1,2}})$ to $(g_{0,2}+g_{1,2})$. 
Moreover, both $\alpha g_{0,1}$
and $(g_{0,2}+g_{1,2}+2 \sqrt{(1-\alpha)g_{0,2}g_{1,2}})$ are continuous functions of 
$\alpha$. Hence, by 
Intermediate Value Theorem, there exists $\alpha_{\mathsf{th}} \in [0,1]$ such that 
$g_{0,2}+g_{1,2}+2 \sqrt{(1-\alpha)g_{0,2}g_{1,2}}=\alpha g_{0,1}$ at
 $\alpha=\alpha_{\mathsf{th}}$. 
Moreover, $\min \{g_{0,2}+g_{1,2}+2 \sqrt{(1-\alpha)g_{0,2}g_{1,2}} ,\, \alpha g_{0,1} \}$ is maximized 
at $\alpha =\alpha_{\mathsf{th}} \in [0,1]$.

From the condition $g_{0,2}+g_{1,2}+2 \sqrt{(1-\alpha)g_{0,2}g_{1,2}}=\alpha g_{0,1}$ we obtain
\footnotesize
{
\begin{equation}
\alpha_{\mathsf{th}}=\frac{1}{g_{0,1}×}\bigg(\sqrt{g_{0,2}\left(1-\frac{g_{1,2}}{g_{0,1}×}\right)}+\sqrt{g_{1,2}\left(1-\frac{g_{0,2}}{g_{0,1}×}\right)}\bigg)^{2}\label{eqn:alpha_th_expression}
\end{equation}
}
\normalsize

So for $r \leq r_{\mathsf{th}}$ we have 
\begin{equation}
\max_{\alpha \in [0,1]} \min \bigg\{g_{0,2}+g_{1,2}+2 \sqrt{(1-\alpha)
g_{0,2}g_{1,2}} ,\, \alpha g_{0,1} \bigg\} = \alpha_{\mathsf{th}}g_{0,1} \nonumber
\end{equation}

Thus our optimization problem becomes 
\begin{equation}
 \max_{r \in [0,r_{\mathsf{th}}]} \, \sqrt{g_{0,2}\bigg(1-\frac{g_{1,2}}{g_{0,1}×}\bigg)}+\sqrt{g_{1,2}\bigg(1-\frac{g_{0,2}}{g_{0,1}×}\bigg)} \label{eqn:problem_below_threshold}
\end{equation}

Let $k:=e^{-\rho L}$ and $u:=e^{-\rho r}$. Define $u_{\mathsf{th}}=e^{-\rho r_{\mathsf{th}}}$.
Since $u$ strictly decreases with $r$, our problem becomes

\begin{equation}
 \max_{u \in [u_{\mathsf{th}},1]} f(u)\nonumber
\end{equation}
  where
\begin{equation}
 f(u)=\sqrt{k\bigg(1-\frac{k}{u^{2}×}\bigg)}+\sqrt{\frac{k}{u×}\bigg(1-\frac{k}{u×}\bigg)}
\end{equation}
Clearly, $f(u)$ is differentiable w.r.t $u$ over $[u_{\mathsf{th}},1]$. $f'(u)=\frac{2k^{2}/u^{3}}{2 \sqrt{k-k^{2}/u^{2}}×}+
\frac{-k/u^{2}+2k^{2}/u^{3}}{2 \sqrt{k/u-k^{2}/u^{2}}×}$.
 Hence,
$f(\cdot)$ is increasing in $u$ over $[u_{\mathsf{th}},1]$ if and only if its derivative is nonnegative.
 Setting $f'(u)$ to be nonnegative yields the inequality:
\begin{equation}
 \sqrt{\bigg(\frac{1/u-k/u^{2}}{1-k/u^{2}×}  \bigg)} \geq -1+\frac{u}{2k×} \label{eqn:decreasing_condition}
\end{equation}

Now let us consider the case $-1+\frac{u}{2k×} \leq 0$, i.e $u \leq 2k$. In this case $f(\cdot)$ is increasing in $u$. So for 
$r \geq r_{0}$, where $r_{0}=-\frac{1}{\rho ×} \log(u_{0})=-\frac{1}{\rho ×} \log(2k)$, $f(\cdot)$ is decreasing in $r$. 
Of course this 
condition matters only if $r_{0} \leq r_{\mathsf{th}}$,
since, for $r \geq r_{\mathsf{th}}$, $R$ always decreases with $r$. Thus if $r_{0} \leq r_{\mathsf{th}}$, then 
 $R$ decreases with $r$ for $r \in [r_{0}, r_{\mathsf{th}}]$.

Now for $u \geq 2k$, i.e, $r \leq r_{0}$, from (\ref{eqn:decreasing_condition}) we will have $f'(u) \geq 0$ if and only if 
\begin{equation}
 \frac{u^{2}}{4k^{2}×}-\frac{u}{k×}+(1-\frac{1}{4k×}) \leq 0 \nonumber\\
\end{equation}

 So $f(\cdot)$ is increasing in $u$ (decreasing in $r$) if and only if $u \in [u_{-},u_{+}]$ where $u_{-}$ and $u_{+}$ are the two roots 
of the equality version of the above inequality and are given by $u_{-}=2k-\sqrt{k}, u_{+}=2k+\sqrt{k}$. Now let us define
$r_{+}=-\frac{1}{\rho×} \log (u_{+})$.

$u_{-} \leq u_{\mathsf{th}}$ holds if and
only if $3 \sqrt{k}-\sqrt{k+4} \leq 2$. It is easy to check that this always holds since $k \in [0,1]$ in our case. But we are 
maximizing $f(\cdot)$ in the $[u_{\mathsf{th}},1]$ interval. 
So $f(\cdot)$ is increasing in $u$ (decreasing in $r$) if and only if $u \in [u_{\mathsf{th}},u_{+}]$.

Again, $u_{+} \geq u_{\mathsf{th}}$ holds if and only if  
$\sqrt{k+4}-3\sqrt{k}\leq 2$, which holds for any $k \in [0,1]$. So $u_{+} \geq u_{\mathsf{th}}$. Hence, 
$r_{+} \leq r_{\mathsf{th}}$. Thus $f(\cdot)$ decreases with $r$ for $r \in [r_{+},r_{\mathsf{th}}]$.

Also $u_{+}=2k+\sqrt{k}\geq 2k=u_{0}$. Hence, $r_{+} \leq r_{0}$. 

Thus $f(\cdot)$ increases with $r$ for $r \in [0,r_{+}]$ and decreases for $r \geq r_{+}$. If $r_{+} \leq 0$, 
$f(\cdot)$ decreases with $r$ in $[0,L]$.

\textbf{Summary:} 
\begin{enumerate}
 \item We found that $R$ is increasing in $r$ for $r \leq r_{+}$ and decreasing for $r \geq r_{+}$.
Let $r^{*}$ be the optimal position
of the relay node, $x^{*}:=\frac{r^{*}}{L×}$ and $x_{+}:=\frac{r_{+}}{L×}=-\frac{1}{\rho L×} \log(u_{+})$. Then $x^{*}=\max\{x_{+},0\}$, 
where $x_{+}=-\frac{1}{\lambda×} \log \bigg(2e^{-\lambda}+e^{-\frac{\lambda}{2}}\bigg)$. 
 
\item $r_{th} \geq 0$ if and only if $\frac{e^{-\lambda}+\sqrt{e^{-2\lambda}+4e^{-\lambda}}}{2×} \leq 1$, i.e, 
 $\lambda \geq \log 2$. Also $r_{+} \geq 0$ if and only if $2e^{-\lambda}+e^{-\frac{\lambda}{2×}} \leq 1$, i.e., $\lambda \geq \log 4$. 

\item For $\lambda \leq \log 2$, we have $r_{+} \leq 0$ and $r_{th} \leq 0$. So in this case we will place the relay at 
$r^{*}=0 \geq r_{th}$ and hence $\alpha^{*}=1$ and $R^{*}=C(\frac{P}{\sigma^{2}×})$.

\item For $\log 2 \leq \lambda \leq \log 4$, $r_{+} \leq 0$ and $r_{th} \geq 0$. So 
$r^{*}=0$ and hence we place the node at $r^{*}=0 \leq r_{th}$. So $\alpha^{*}=\alpha_{th}|_{r=0}=4e^{-\lambda}(1-e^{-\lambda})$ and hence 
$R^{*}=C \left(\frac{P}{\sigma^{2}×}4e^{-\lambda}(1-e^{-\lambda})\right)$.

\item For $\lambda \geq \log 4$,  
$r^{*}=r_{+} \geq 0$. So substituting the values of $r_{+}$ in (\ref{eqn:alpha_th_expression}), we obtain the expression of 
$\alpha^{*}$ (given in (\ref{eqn:optimum_alpha})) and thus of $R^{*}$ too 
(given in (\ref{eqn:theorem_capacity_high_attenuation})).  
\end{enumerate}

\qed

\begin{rem}
 Let $x_{\mathsf{th}}=\frac{r_{\mathsf{th}}}{L}$. Then $\lim_{\lambda \to \infty}{x_{\mathsf{th}}}=\frac{1}{2×}$ and $\lim_{\lambda \to \infty}{x_{+}}=\frac{1}{2×}$.
\end{rem}

\begin{rem}
 $\alpha_{\mathsf{th}}$ increases with $r$ in the region $r \in [0,r_{\mathsf{th}}]$.
\end{rem}
\begin{proof}
 In the region $r \leq r_{\mathsf{th}}$, for any fixed 
$r$, $\max_{\alpha \in [0,1]} \min \bigg\{g_{0,2}+g_{1,2}+2 \sqrt{(1-\alpha)g_{0,2}g_{1,2}} ,\, \alpha g_{0,1} \bigg\}$
is maximized when $g_{0,2}+g_{1,2}+2 \sqrt{(1-\alpha)g_{0,2}g_{1,2}}=\alpha g_{0,1}$.
 Let $q(\alpha,r)=g_{0,2}+g_{1,2}+2 \sqrt{(1-\alpha)g_{0,2}g_{1,2}}-\alpha g_{0,1}$.
Clearly, $q(\alpha,r)$ is strictly decreasing in $\alpha$ and strictly increasing in $r$, and also continuous in both arguments. 
Also in the region $r \leq r_{\mathsf{th}}$, $q(\alpha_{\mathsf{th}}(r),r)=0$.
Let us assume that the operating point is $\{r, \alpha_{\mathsf{th}}(r)\}$ for some $r < r_{\mathsf{th}}$. Now consider some $r'$ 
such that $r' \leq r_{\mathsf{th}},\, r' >r$.
Then $q(r', \alpha_{\mathsf{th}}(r))>0$. So we must increase $\alpha$ from $\alpha_{\mathsf{th}}(r)$ in order to bring 
$q(\alpha, r)$ back to $0$. Hence,
 $\alpha_{\mathsf{th}}(r')>\alpha_{\mathsf{th}}(r)$.  
\end{proof}

\subsection{Proof of Theorem \ref{theorem:power_law} }
As we observed in the exponential path-loss model (see proof of Theorem \ref{theorem:exponential_individual_power_constraint}), 
$\min \left\{g_{0,2}+g_{1,2}+2 \sqrt{(1-\alpha)g_{0,2}g_{1,2}} ,\, \alpha g_{0,1} \right\}$ 
is maximized for 
$\alpha=1$ if $g_{0,1} \leq g_{0,2}+g_{1,2}$. This occurs, assuming the power law path-loss model, if 

\begin{equation}
r^{-\eta} \leq L^{-\eta}+(L-r)^{-\eta}\label{eqn:single_relay_node_power_r_th}
\end{equation}
 i.e, if $r \geq r_{\mathsf{th}}$. Note that $r_{\mathsf{th}}<\frac{L}{2×}$. 

In the region $r \geq r_{\mathsf{th}}$, $g_{0,1}$
decreases with $r$, and hence the optimum placement point in the interval $[r_{\mathsf{th}},L]$ is $r_{\mathsf{th}}$.

In the region $r \leq r_{\mathsf{th}}$, the problem is similar to the one of 
case II in the proof of Theorem \ref{theorem:exponential_individual_power_constraint}. 
Thus we arrive at the following problem:

\begin{equation}
 \max_{x \in (0, x_{\mathsf{th}}]} f(x)\nonumber\\
\end{equation}

where $x :=\frac{x}{L×}$, $x_{\mathsf{th}}:=\frac{r_{\mathsf{th}}}{L×}$ and 
\begin{equation}
 f(x)=\sqrt{1-\frac{(1-x)^{-\eta}}{x^{-\eta}×}}+\sqrt{(1-x)^{-\eta}(1-\frac{1}{x^{-\eta}×})}
\end{equation}

Now $f'(x)\geq 0$ if and only if:
\begin{equation}
 x^{-\eta+1}-1 \geq \sqrt{\frac{(1-x)^{-\eta}-(\frac{1}{x×}-1)^{-\eta}}{1-(\frac{1}{x×}-1)^{-\eta}×}}
\end{equation}

Since $x < \frac{1}{2×}$, $\eta >1$ and the square roots are well-defined, the condition above can be written as $f_{1}(x)\geq f_{2}(x)$ where 
$f_{1}(x)=(x^{-\eta+1}-1)^{2}(1-(\frac{1}{x×}-1)^{-\eta})$ and $f_{2}(x)=(1-x)^{-\eta}-(\frac{1}{x×}-1)^{-\eta}$. Now 
 it is easy to check that $f_{1}(x)$ 
decreases from $+ \infty$ to $0$ as $x$ increases from $0$ to $\frac{1}{2×}$. $f_{2}(x)$ increases from $0$ to $(2^{\eta}-1)$
as $x$ increases from $0$ to $\frac{1}{2×}$. So there is a unique $p \in [0,\frac{1}{2×}]$, such that $f (x)$ is increasing 
in $x$ for $x \in [0, p]$ and decreasing for $x \geq p$. 

Let us now compare $p$ and $x_{\mathsf{th}}$. Recall that $x_{\mathsf{th}} \leq \frac{1}{2×}$, 
and observe from (\ref{eqn:single_relay_node_power_r_th}) that $x_{\mathsf{th}}$ verifies:
$x_{\mathsf{th}}^{-\eta}=1+(1-x_{\mathsf{th}}^{-\eta})$,
i.e., $(\frac{1}{x_{\mathsf{th}}×})^{-\eta}+(\frac{1}{x_{\mathsf{th}}×}-1)^{-\eta}=1$. Hence, we have:
\begin{eqnarray}
 f_{1}(x_{\mathsf{th}})=\frac{(x_{\mathsf{th}}^{-\eta+1}-1)^{2}}{x_{\mathsf{th}}^{-\eta}×}\nonumber\\
f_{2}(x_{\mathsf{th}})=\frac{(x_{\mathsf{th}}^{-\eta}-1)^{2}}{x_{\mathsf{th}}^{-\eta}×}
\end{eqnarray}
Clearly, since $x_{\mathsf{th}}<1$ and $\eta>1$ we will always have $f_{1}(x_{\mathsf{th}})<f_{2}(x_{\mathsf{th}})$. Thus we must have 
$p<x_{\mathsf{th}}$.
This implies that $R$ increases with $r$ in $[0,pL]$, and decreases with $r$ in $[pL,L]$. So
$R$ is maximized by placing the relay node at $r^{*}=pL$, where $p$ is obtained by solving $f_{1}(x)=f_{2}(x)$.

For the ``modified power law path-loss'' model, if $2b <L$, then $g_{0,1}$ is constant for $r \in [0,b]$ but $g_{1,2}$ increases with $r$ for 
$r \in [0,b]$. So if $pL<b$, we can improve $R$ by placing the relay at $b$. Hence, $x^{*}=\max \{p,\frac{b}{L×}\}$ in this case.

\qed

 \vspace{-0.5cm}
 \section{Multiple Relay Placement : Sum Power Constraint}
 \label{appendix:total_power_constraint}

\subsection{Proof of Theorem \ref{theorem:multirelay_capacity}}
We want to maximize $R$ given in (\ref{eqn:achievable_rate_multirelay}) subject to the total power constraint, assuming fixed 
relay locations. 
Let us consider $C \bigg(\frac{1}{\sigma^{2}×} \sum_{j=1}^{k} ( \sum_{i=0}^{j-1} h_{i,k} \sqrt{P_{i,j}}  )^{2}\bigg)$, i.e., the $k$-th 
term in the argument of $\min \{\cdots\}$ in (\ref{eqn:achievable_rate_multirelay}). By the monotonicity of $C(\cdot)$, 
it is sufficient to consider $\sum_{j=1}^{k} ( \sum_{i=0}^{j-1} h_{i,k} \sqrt{P_{i,j}}  )^{2}$. Now since the channel gains are multiplicative, 
we have:
\begin{equation}
 \sum_{j=1}^{k} ( \sum_{i=0}^{j-1} h_{i,k} \sqrt{P_{i,j}}  )^{2}=g_{0,k}\sum_{j=1}^{k} \bigg( \sum_{i=0}^{j-1}\frac{\sqrt{P_{i,j}}}{h_{0,i}×}\bigg)^{2}\nonumber
\end{equation}
Thus our optimization problem becomes:

\begin{eqnarray}
 & & \max \, \min_{k \in \{1,2,\cdots,N+1\}} g_{0,k} \sum_{j=1}^{k} \bigg( \sum_{i=0}^{j-1}\frac{\sqrt{P_{i,j}}}{h_{0,i}×}\bigg)^{2}\nonumber\\
& \textit{s.t} & \, \sum_{j=1}^{N+1}\gamma_{j} \leq P_{T}\nonumber\\
& \textit{and}& \,\sum_{i=0}^{j-1}P_{i,j}=\gamma_{j} \, \forall \, j \in \{1,2,\cdots,N+1\}\label{eqn:optimization_problem}
\end{eqnarray}

Let us fix $\gamma_{1}, \gamma_{2},\cdots,\gamma_{N+1}$ such that their sum is equal to $P_{T}$. We observe that $P_{i,N+1}$ for $i \in \{0,1,\cdots,N\}$ appear 
in the objective function
only once: for $k=N+1$ through the term $\bigg( \sum_{i=0}^{N}\frac{\sqrt{P_{i,N+1}}}{h_{0,i}×}\bigg)^{2}$.
Since we have fixed $\gamma_{N+1}$, we need to maximize this term over $P_{i,N+1},\, i \in \{0,1,\cdots,N\}$. So we have the following optimization problem:
\begin{eqnarray}
& & \max \sum_{i=0}^{N} \frac{\sqrt{P_{i,N+1}}}{h_{0,i}×}\nonumber\\
& s.t & \, \sum_{i=0}^{N} P_{i,N+1}=\gamma_{N+1}\label{eqn:problem}
\end{eqnarray}
By Cauchy-Schwartz inequality, the objective function in this optimization problem is upper bounded by 
$\sqrt{(\sum_{i=0}^{N} P_{i,N+1})(\sum_{i=0}^{N}\frac{1}{g_{0,i}×})}=\sqrt{\gamma_{N+1}\sum_{i=0}^{N}\frac{1}{g_{0,i}}}$ (using the fact
 that $g_{0,i}=h_{0,i}^2$ $\forall i \in \{0,1,\cdots,N\}$). The upper bound is achieved if 
there exists some $c>0$ such that $\frac{\sqrt{P_{i,N+1}}}{\frac{1}{h_{0,i}×}×}=c$ $\forall i \in \{0,1,\cdots,N\}$. So we have
\begin{equation}
 P_{i,N+1}=\frac{c^2}{g_{0,i}×} \,\, \forall i \in \{0,1,\cdots,N\}\nonumber\\
\end{equation}
Using the fact that $\sum_{i=0}^{N}P_{i,N+1}=\gamma_{N+1}$, we obtain $c^2=\frac{\gamma_{N+1}}{\sum_{l=0}^{N}\frac{1}{g_{0,l}×}×}$.
 Thus, 

\begin{equation}
 P_{i,N+1}=\frac{\frac{1}{g_{0,i}×}}{\sum_{l=0}^{N}\frac{1}{g_{0,l}×}×}\gamma_{N+1}
\end{equation}
Here we have used the fact that $h_{0,0}=1$. Now $\{P_{i,N}: i=0,1,\cdots,(N-1)\}$ appear only through the sum $\sum_{i=0}^{N-1}\frac{\sqrt{P_{i,N}}}{h_{0,i}×}$,
 and it appears twice: for $k=N$ and $k=N+1$. We need to maximize this sum subject to the constraint $\sum_{i=0}^{N-1}P_{i,N}=\gamma_{N}$. This optimization 
can be solved in a similar way as before. Thus by repeatedly using this argument and solving optimization problems similar 
in nature to (\ref{eqn:problem}), we obtain:
\begin{equation}
 P_{i,j}=\frac{\frac{1}{g_{0,i}×}}{\sum_{l=0}^{j-1}\frac{1}{g_{0,l}×}×}\gamma_{j} \,\, \forall 0 \leq i < j \leq (N+1)
\end{equation}
Substituting for $P_{i,j},0 \leq i < j \leq (N+1)$ in (\ref{eqn:optimization_problem}), we obtain the following optimization problem:
\begin{eqnarray}
& & \max \min_{k \in \{1,2,\cdots,N+1\}} g_{0,k} \sum_{j=1}^{k}\bigg(\gamma_{j} \sum_{i=0}^{j-1}\frac{1}{g_{0,i}×} \bigg)\nonumber\\
& \textit{s.t.} & \,\, \sum_{j=1}^{N+1}\gamma_{j} \leq P_{T}
\end{eqnarray}

Let us define $b_{k}:=g_{0,k}$ and $a_{j}:=\sum_{i=0}^{j-1}\frac{1}{g_{0,i}×}$. Observe that $b_{k}$ is decreasing and $a_{k}$ is 
increasing with $k$. Let us define:
\begin{equation}
 \tilde{s}_{k}(\gamma_{1},\gamma_{2},\cdots,\gamma_{N+1}) := b_{k} \sum_{j=1}^{k} a_{j} \gamma_{j}
\end{equation}
With this notation, our optimization problem becomes: 
\begin{eqnarray}
& & \max \min_{k \in \{1,2,\cdots,N+1\}} \tilde{s}_{k}(\gamma_{1},\gamma_{2},\cdots,\gamma_{N+1})\nonumber\\
& \textit{s.t.} & \,\, \sum_{j=1}^{N+1}\gamma_{j} \leq P_{T} \label{eqn:modified_optimization_problem}
\end{eqnarray}

\begin{claim}
 Under optimal allocation of $\gamma_{1}, \gamma_{2},\cdots,\gamma_{N+1}$ for the optimization problem 
(\ref{eqn:modified_optimization_problem}),
  $\tilde{s}_{1}=\tilde{s}_{2}=\cdots=\tilde{s}_{N+1}$.
\end{claim}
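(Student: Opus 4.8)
The plan is to reduce the max--min problem (\ref{eqn:modified_optimization_problem}) to a problem that is linear in well-chosen variables, in which the equalization at the optimum becomes transparent from a single tightness argument. The key change of variables is to work with the partial sums $S_k := \sum_{j=1}^{k} a_j \gamma_j$ (with $S_0 := 0$) instead of the $\gamma_j$ themselves. Then $\tilde{s}_k = b_k S_k$, the constraints $\gamma_j \geq 0$ become the monotonicity $0 = S_0 \leq S_1 \leq \cdots \leq S_{N+1}$ (since $a_j \gamma_j = S_j - S_{j-1}$ and $a_j > 0$), and the power budget $\sum_{j=1}^{N+1}\gamma_j \leq P_T$ becomes $\sum_{j=1}^{N+1} (S_j - S_{j-1})/a_j \leq P_T$.

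First I would rewrite this power functional by summation by parts. Using $S_0 = 0$ and the fact that $a_k = \sum_{i=0}^{k-1} 1/g_{0,i}$ is strictly increasing in $k$, one obtains
\[
\sum_{j=1}^{N+1} \frac{S_j - S_{j-1}}{a_j} = \frac{S_{N+1}}{a_{N+1}} + \sum_{j=1}^{N} S_j\Big(\frac{1}{a_j} - \frac{1}{a_{j+1}}\Big) =: \sum_{k=1}^{N+1} c_k S_k,
\]
where \emph{every} coefficient $c_k$ is strictly positive. This is the crucial step: the ``difference'' form of the constraint has been turned into a positive linear combination of the objective quantities $S_k$, which makes the trade-off between the different $\tilde{s}_k = b_k S_k$ explicit.

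Then I would argue as follows. Set $v := P_T / \sum_{k=1}^{N+1}(c_k/b_k)$. The equalizing sequence $\hat{S}_k := v/b_k$ is nondecreasing (because $b_k$ is decreasing), uses exactly power $P_T$, and yields $\tilde{s}_k = v$ for every $k$; hence the optimal value $v^*$ satisfies $v^* \geq v$. Conversely, for any optimal allocation every term satisfies $b_k S_k \geq v^* \geq v$, so $S_k \geq v/b_k$, and feasibility then gives $P_T \geq \sum_k c_k S_k \geq v \sum_k (c_k/b_k) = P_T$. Thus all the inequalities are equalities; since $\sum_k c_k (S_k - v/b_k) = 0$ with each summand nonnegative and each $c_k > 0$, termwise equality forces $S_k = v/b_k$, i.e. $\tilde{s}_1 = \cdots = \tilde{s}_{N+1} = v$, which is the claim.

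I expect the only real obstacle to be finding the right reformulation, namely the partial-sum substitution together with the summation-by-parts identity that exposes the positivity of the coefficients $c_k$; once positivity is in hand the equalization follows from a one-line sandwiching of the power budget, with no genuine exchange or perturbation argument needed. (An alternative, following the LP-duality route indicated in the theorem's proof sketch, would write (\ref{eqn:modified_optimization_problem}) as $\max\{t : t \leq \tilde{s}_k(\gamma)\ \forall k,\ \sum_j \gamma_j \leq P_T,\ \gamma \geq 0\}$ and read the equalization off complementary slackness; I find the direct sandwiching above cleaner for establishing just this claim.)
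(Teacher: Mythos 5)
Your proof is correct, but it takes a genuinely different route from the paper's. The paper rewrites (\ref{eqn:modified_optimization_problem}) as an explicit linear program in $(\gamma,\zeta)$ (see (\ref{eqn:equality_problem_primal})), writes down its dual (\ref{eqn:equality_problem_dual}), constructs a primal feasible point that equalizes all the $\tilde{s}_k$ together with a dual feasible point, and verifies $\zeta^*=P_T\theta^*$ (zero duality gap), concluding that the equalizing allocation is primal optimal. You instead pass to the partial sums $S_k=\sum_{j\leq k}a_j\gamma_j$, use Abel summation to turn the power budget into $\sum_k c_k S_k\leq P_T$ with strictly positive weights $c_k$, exhibit the equalizing sequence $\hat{S}_k=v/b_k$ (feasible precisely because $b_k$ is nonincreasing --- the same monotonicity the paper flags as essential), and close with the sandwich $P_T\geq\sum_k c_k S_k\geq v\sum_k c_k/b_k=P_T$, which forces termwise equality since every $c_k>0$. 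In substance your weights $c_k/b_k$ are, up to normalization, exactly the paper's optimal dual variables $\mu_j^*$ in (\ref{eqn:dual_optimal}), and your $v$ equals $\zeta^*=P_T\theta^*$; so your argument is a hand-rolled weak-duality certificate rather than an appeal to LP duality theory. What your packaging buys: it is self-contained and elementary, and it proves the claim in its literal, stronger form --- \emph{every} optimal allocation equalizes, so the optimizer is unique --- whereas the paper's proof exhibits one equalizing optimal solution and has to argue uniqueness of the equalizing allocation separately afterwards. What the paper's route buys is the explicit formulas (\ref{eqn:primal_optimal}) feeding into Theorem \ref{theorem:multirelay_capacity}, though your solution recovers them just as explicitly via $\gamma_j=(S_j-S_{j-1})/a_j=\frac{v}{a_j}\left(\frac{1}{b_j}-\frac{1}{b_{j-1}}\right)$.
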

\begin{proof}
 The optimization problem (\ref{eqn:modified_optimization_problem}) can be rewritten as:
\begin{eqnarray}
&& \max \zeta \nonumber\\
\textit{s.t} && \zeta \leq b_{k}\sum_{j=1}^{k}a_{j}\gamma_{j} \, \forall \, k \in \{1,2,\cdots,N+1\}, \nonumber\\
&& \sum_{j=1}^{N+1}\gamma_{j} \leq P_{T}, \nonumber\\
&&\gamma_{j} \geq 0 \,\, \forall \, j \in \{1,2,\cdots,N+1\}\label{eqn:equality_problem_primal}
\end{eqnarray}
The dual of this linear program is given by:
\begin{eqnarray}
 &&\min P_{T}\theta \nonumber\\
\textit{s.t} && \sum_{k=1}^{N+1}\mu_{k}=1,\nonumber\\
&& a_{l}\sum_{k=l}^{N+1}b_{k}\mu_{k}+\nu_{l}=\theta \, \forall \, l \in \{1,2,\cdots,N+1\},\nonumber\\
&& \theta \geq 0, \nonumber\\
&& \mu_{l} \geq 0, \nu_{l} \geq 0 \, \forall \, l \in \{1,2,\cdots,N+1\} \label{eqn:equality_problem_dual}
\end{eqnarray}
Now let us consider a primal feasible solution $(\{\gamma_j^*\}_{1 \leq j \leq N+1}, \zeta^*)$ which satisfies:
\begin{eqnarray}
&&  b_{k}\sum_{j=1}^{k}a_{j}\gamma_{j}^*=\zeta^* \, \forall \, k \in \{1,2,\cdots,N+1\}, \nonumber\\
&& \sum_{j=1}^{N+1}\gamma_{j}^* = P_{T}
\end{eqnarray}
Thus we have, $b_1 a_1 \gamma_1^*=\zeta^*$, i.e., $\gamma_1^*=\frac{\zeta^*}{b_1 a_1×}$. Again, $b_2(a_1 \gamma_1^*+a_2 \gamma_2^*)=\zeta^*$, 
which implies:
\begin{equation}
 \frac{b_2}{b_1×}\zeta^*+b_2 a_2 \gamma_2^*=\zeta^*\nonumber\\
\end{equation}
Thus we obtain $\gamma_2^*=\frac{\zeta^*}{a_2×}(\frac{1}{b_2×}-\frac{1}{b_1×})$. In general, we can write:
\begin{equation}
 \gamma_k^*=\frac{\zeta^*}{a_k×}\left(\frac{1}{b_k×}-\frac{1}{b_{k-1}×}\right) \, \forall k \in \{1,2,\cdots,N+1\}\nonumber\\
\end{equation}
with $\frac{1}{b_{0}×}:=0$. Now from the condition $\sum_{k=1}^{N+1}\gamma_k^*=P_T$, we obtain:
\footnotesize
\begin{eqnarray}
  \zeta^*&=&\frac{P_{T}}{\sum_{k=1}^{N+1}\frac{1}{a_{k}×}(\frac{1}{b_k×}-\frac{1}{b_{k-1}×})×}\nonumber\\
 \gamma_j^*&=&\frac{ \frac{1}{a_j×} \left(\frac{1}{b_j×}-\frac{1}{b_{j-1}×}\right) }
{\sum_{k=1}^{N+1}\frac{1}{a_{k}×}(\frac{1}{b_k×}-\frac{1}{b_{k-1}×})×}P_{T}, \, j \in \{1,2,\cdots,N+1\}
\label{eqn:primal_optimal}
\end{eqnarray}
\normalsize
It should be noted that since $b_{k}$ is nonincreasing in $k$, the primal variables above are nonnegative and satisfies feasibility 
conditions.
Again, let us consider a dual feasible solution $(\{\mu_j^*,\nu_j^*\}_{1 \leq j \leq N+1}, \theta^*)$ which satisfies:
\begin{eqnarray}
&& \nu_l^*=0 \, \forall \, l \in \{1,2,\cdots,N+1\}\nonumber\\
&& \sum_{k=1}^{N+1}\mu_{k}^*=1\nonumber\\
&& a_{l}\sum_{k=l}^{N+1}b_{k}\mu_{k}^*+\nu_{l}^*=\theta^* \, \forall \, l \in \{1,2,\cdots,N+1\}
\end{eqnarray}
Solving these equations, we obtain:
\footnotesize
\begin{eqnarray}
 \theta^*&=&\frac{1}{\sum_{k=1}^{N+1}\frac{1}{b_k×}\left(\frac{1}{a_k×}-\frac{1}{a_{k+1}×}\right)×}\nonumber\\
 \mu_j^*&=&\frac{\frac{1}{b_j×}(\frac{1}{a_j×}-\frac{1}{a_{j+1}×})}
{\sum_{k=1}^{N+1}\frac{1}{b_k×}\left(\frac{1}{a_k×}-\frac{1}{a_{k+1}×}\right)×},\, j \in \{1,2,\cdots,N+1\}
\label{eqn:dual_optimal}
\end{eqnarray}
\normalsize
where $\frac{1}{a_{N+2}×}:=0$. Since $a_k$ is increasing in $k$, 
all dual variables are feasible. It is easy to check that $\zeta^*=P_T \theta^*$, which means that there is no duality gap for the chosen 
primal and dual variables. Since the primal is a linear program, the solution $(\gamma_1^*, \gamma_2^*,\cdots,\gamma_{N+1}^*, \zeta^*)$ 
is primal optimal. Thus we have established the claim, since the primal optimal solution satisfies it. 
\end{proof}

So let us obtain $\gamma_{1},\gamma_{2},\cdots,\gamma_{N+1}$ for which $\tilde{s}_{1}=\tilde{s}_{2}=\cdots=\tilde{s}_{N+1}$. 
Putting $\tilde{s}_{k}=\tilde{s}_{k-1}$, we obtain : 

\begin{equation}
 b_{k} \sum_{j=1}^{k} a_{j} \gamma_{j}=b_{k-1} \sum_{j=1}^{k-1} a_{j} \gamma_{j}
\end{equation}

Thus we obtain:

\begin{equation}
 \gamma_{k}=\frac{(b_{k-1}-b_{k})}{b_{k}×} \frac{1}{a_{k}×} \sum_{j=1}^{k-1}a_{j}\gamma_{j}
\end{equation}

Let $d_{k}:=\frac{(b_{k-1}-b_{k})}{b_{k}×} \frac{1}{a_{k}×}$. Hence,

\begin{equation}
 \gamma_{k}=d_{k} \sum_{j=1}^{k-1}a_{j}\gamma_{j}
\end{equation}

From this recursive equation we have:

\begin{equation}
 \gamma_{2}=d_{2}a_{1}\gamma_{1}
\end{equation}
\begin{equation}
 \gamma_{3}=d_{3}(a_{1}\gamma_{1}+a_{2}\gamma_{2})=d_{3}a_{1}(1+a_{2}d_{2})\gamma_{1}
\end{equation}
and in general for $k \geq 3$,
\begin{equation}
 \gamma_{k}=d_{k}a_{1}\Pi_{j=2}^{k-1}(1+a_{j}d_{j})\gamma_{1}\label{eqn:gamma_k_gamma_1}
\end{equation}

Using the fact that $\gamma_{1}+\gamma_{2}+\cdots+\gamma_{N+1}=P_{T}$, we obtain: 

\begin{equation}
 \gamma_{1}=\frac{P_{T}}{1+d_{2}a_{1}+ \sum_{k=3}^{N+1}d_{k}a_{1} \Pi_{j=2}^{k-1}(1+a_{j}d_{j}) ×} \label{eqn:gamma_1}
\end{equation}

\footnotesize
\begin{figure*}[!t]
 \begin{eqnarray}\label{eqn:capacity_increasing_with_N}
 & & z_{1}^{*}+\frac{z_{2}^{*}-z_{1}^{*}}{1+z_{1}^{*}×}+\cdots+\frac{e^{y}-z_{i}^{*}}{1+z_{1}^{*}+\cdots+z_{i}^{*}×}+\frac{z_{i+1}^{*}-e^{y}}{1+z_{1}^{*}+\cdots+z_{i}^{*}+e^{y}×}+\cdots.+\frac{e^{\rho L}-z_{N}^{*}}{1+z_{1}^{*}+\cdots+z_{i}^{*}+e^{y}+z_{i+1}^{*}+\cdots+z_{N}^{*}×}\nonumber\\
&  & < z_{1}^{*}+\frac{z_{2}^{*}-z_{1}^{*}}{1+z_{1}^{*}×}+\cdots+\frac{e^{\rho L}-z_{N}^{*}}{1+z_{1}^{*}+\cdots+z_{i}^{*}+z_{i+1}^{*}+\cdots+z_{N}^{*}×}
\end{eqnarray}
\hrule
\end{figure*}
\normalsize

Thus if $\tilde{s}_{1}=\tilde{s}_{2}=\cdots=\tilde{s}_{N+1}$, there is a unique 
allocation $\gamma_{1},\gamma_{2},\cdots,\gamma_{N+1}$. So this must be the one 
maximizing $R$. Hence, optimum $\gamma_{1}$ is obtained by (\ref{eqn:gamma_1}). Then, 
substituting the values of $\{a_{k}:k=0,1,\cdots,N\}$ 
and $d_{k}:k=1,2,\cdots,N+1$ in (\ref{eqn:gamma_k_gamma_1}) and (\ref{eqn:gamma_1}), 
we obtain the values of $\gamma_{1},\gamma_{2},\cdots,\gamma_{N+1}$ 
as shown in Theorem~\ref{theorem:multirelay_capacity}.

Now under these optimal values of $\gamma_{1},\gamma_{2},\cdots,\gamma_{N+1}$, all terms in the argument of $\min \{\cdots\}$ 
in (\ref{eqn:achievable_rate_multirelay}) are equal. So we can consider the first term alone.
 Thus we obtain the expression for $R$ optimized over power allocation among all the nodes 
for fixed relay locations as : $R_{P_T}^{opt}(y_1,y_2,\cdots,y_N)=C \left(\frac{g_{0,1}P_{0,1}}{\sigma^{2}×}\right)=C \left(\frac{g_{0,1}\gamma_{1}}{\sigma^{2}×}\right)$. 
Substituting the expression for $\gamma_{1}$ from (\ref{eqn:gamma_one}), we obtain the achievable rate 
formula (\ref{eqn:capacity_multirelay}).

\qed

\subsection{Proof of Theorem \ref{theorem:single_relay_total_power}}

Here we want to place the relay node at a distance $r_{1}$ from the source 
 to minimize $\bigg\{\frac{1}{g_{0,1}×}+\frac{g_{0,1}-g_{0,2}}{g_{0,2}(1+g_{0,1})×}\bigg\}$ 
(see \ref{eqn:capacity_multirelay}). Hence, our optimization problem becomes :
\begin{equation}
\min_{r_{1} \in [0,L]} \bigg\{e^{\rho r_{1}}+\frac{e^{-\rho r_{1}}-e^{-\rho L}}{e^{-\rho L}(1+e^{-\rho r_{1}})×}\bigg\}\nonumber\\
\end{equation}
Writing $z_{1}=e^{\rho r_{1}}$, the problem becomes :
\begin{equation}
 \min_{z_{1} \in [1,e^{\rho L}]}  \bigg\{z_{1}-1+\frac{e^{\rho L}+1}{z_{1}+1×}\bigg\}\nonumber\\
\end{equation}
This is a convex optimization problem. Equating the derivative of the objective function to zero, we obtain
 $1-\frac{e^{\rho L}+1}{(z_{1}+1)^{2}×}=0$. Thus the derivative becomes zero at $z_{1}'=\sqrt{1+e^{\rho L}}-1>0$. 
Hence, the objective function is decreasing in $z_{1}$ for $z_{1} \leq z_{1}'$ and increasing in $z_{1} \geq z_{1}'$. 
So the minimizer is $z_{1}^{*}=\max \{z_{1}',1 \}$. 
So the optimum distance of 
the relay node from the source is $y_{1}^{*}=r_{1}^{*}=\max \{0,r_{1}' \}$, where $r_{1}'=\frac{1}{\rho×} \log (\sqrt{1+e^{\rho L}}-1)$. 
Hence, $\frac{y_{1}^{*}}{L×}=\max \{\frac{1}{\lambda×} \log \left(\sqrt{e^{\lambda}+1}-1 \right),0\}$. Now 
$r_{1}' \geq 0$ if and only if $\lambda \geq \log 3$. Hence, $\frac{y_{1}^{*}}{L×}=0$ for $\lambda \leq \log 3$ and 
$\frac{y_{1}^{*}}{L×}=\frac{1}{\lambda×} \log \left(\sqrt{e^{\lambda}+1}-1 \right)$ for $\lambda \geq \log 3$.

{\em For $\lambda \leq \log 3$}, the relay is placed at the source. Then $g_{0,1}=1$ and $g_{0,2}=g_{1,2}=e^{-\lambda}$. 
Then $P_{0,1}=\gamma_{1}=\frac{2P_{T}}{e^{\lambda}+1×}$ (by Theorem \ref{theorem:multirelay_capacity}) 
and $R^{*}=C \left(\frac{2P_{T}}{(e^{\lambda}+1)\sigma^{2}×}\right)$. Also $\gamma_{2}=\frac{e^{\lambda}-1}{e^{\lambda}+1×}P_{T}$. 
Hence, $P_{0,2}=P_{1,2}=\frac{e^{\lambda}-1}{e^{\lambda}+1×}\frac{P_{T}}{2×}$.

{\em for $\lambda \geq \log 3$}, the relay is placed at $r_{1}'$. Substituting the value of $r_{1}'$ into (\ref{eqn:power_gamma_relation}), 
we obtain $P_{0,1}=\gamma_{1}=\frac{P_{T}}{2×}$, $P_{0,2}=\frac{1}{\sqrt{e^{\lambda}+1}×}\frac{P_{T}}{2×}$, 
$P_{1,2}=\frac{\sqrt{e^{\lambda}+1}-1}{\sqrt{e^{\lambda}+1}×}\frac{P_{T}}{2×}$. So in this case
$R^{*}=C \left(\frac{g_{0,1}P_{0,1}}{\sigma^{2}×} \right)$. Since $P_{0,1}=\frac{P_{T}}{2×}$, we have 
$R^{*}=C \left( \frac{1}{\sqrt{e^{\lambda}+1}-1×}\frac{P_{T}}{2 \sigma^{2}×} \right)$.
\qed

\subsection{Proof of Theorem \ref{theorem:capacity_increasing_with_N}}
\vspace{-2 mm}

For the $N$-relay problem, let the minimizer in (\ref{eqn:multirelay_optimization}) be $z_{1}^{*}, z_{2}^{*},\cdots,z_{N}^{*}$ and let 
$y_{k}^{*}=\frac{1}{\rho×} \log z_{k}^{*}$. Clearly, 
there exists $i \in \{0,1,\cdots,N\}$ such that $y_{i+1}^{*}>y_{i}^{*}$. Let us insert a new relay at a distance $y$ from the source such 
that $y_{i}^{*}<y<y_{i+1}^{*}$. Now we find that we can easily reach (\ref{eqn:capacity_increasing_with_N}) just by simple comparison. For example, 
\begin{eqnarray*}
& & \frac{e^{y}-z_{i}^{*}}{1+z_{1}^{*}+\cdots+z_{i}^{*}×}+\frac{z_{i+1}^{*}-e^{y}}{1+z_{1}^{*}+\cdots+z_{i}^{*}+e^{y}×} \\
& < & \frac{e^{y}-z_{i}^{*}}{1+z_{1}^{*}+\cdots+z_{i}^{*}×}+\frac{z_{i+1}^{*}-e^{y}}{1+z_{1}^{*}+\cdots+z_{i}^{*}×}\\
& =& \frac{z_{i+1}^{*}-z_{i}^{*}}{1+z_{1}^{*}+\cdots+z_{i}^{*}×}
\end{eqnarray*}
First $i$ terms in the summations of L.H.S (left hand side) and R.H.S (right hand side) 
of (\ref{eqn:capacity_increasing_with_N}) are identical. Also sum of the remaining terms in L.H.S 
is smaller than that of the R.H.S since there is an additional $e^{y}$ in the denominator of each fraction for the L.H.S. Hence, 
we can justify (\ref{eqn:capacity_increasing_with_N}). Now 
R.H.S is precisely the optimum objective function for the $N$-relay placement problem (\ref{eqn:multirelay_optimization}). On the 
other hand, L.H.S is a particular value of the objective in (\ref{eqn:multirelay_optimization}), for $(N+1)$-relay placement problem.
This clearly implies that by adding one additional relay we can strictly 
improve from $R^{*}$ of the $N$ relay channel. Hence, $R^{*}(N+1)>R^{*}(N)$.
\qed

\subsection{Proof of Theorem \ref{theorem:G_increasing_in_lambda}}
\vspace{-2 mm}
Consider the optimization problem (\ref{eqn:multirelay_optimization}). Let us consider $\lambda_1$, $\lambda_2$, with 
$\lambda_1<\lambda_2$, the respective 
minimizers being $(z_{1}^{*},\cdots,z_{N}^{*})$ and $(z_{1}',\cdots,z_{N}')$. Clearly,
\begin{eqnarray}
 G(N,\lambda_1)=\frac{e^{\lambda_1}}{ z_{1}^{*}+\sum_{k=2}^{N+1} \frac{z_{k}^{*}-z_{k-1}^{*}}{\sum_{l=0}^{k-1} z_{l}^{*}}×}
\end{eqnarray}
with $z_{N+1}^{*}=e^{\lambda_1}$ and $z_{0}^{*}=1$. 
With $N \geq 1 $, note that 
$z_{1}^{*}-\frac{z_{N}^{*}}{1+z_{1}^{*}+\cdots+z_{N}^{*}×} \geq 0$, since $z_{1}^{*} \geq 1$ and  
$\frac{z_{N}^{*}}{1+z_{1}^{*}+\cdots+z_{N}^{*}×} \leq 1$. Hence, it is easy to see that 
$\frac{e^{\lambda}}{ z_{1}^{*}-\frac{z_{N}^{*}}{1+z_{1}^{*}+\cdots+z_{N}^{*}×}+\sum_{k=2}^{N} \frac{z_{k}^{*}-z_{k-1}^{*}}{\sum_{l=0}^{k-1} z_{l}^{*}}+\frac{e^{\lambda}}{1+z_{1}^{*}+\cdots+z_{N}^{*}×}×}$ is increasing in 
$\lambda$ where $(z_{1}^{*},\cdots,z_{N}^{*})$ is the optimal solution of (\ref{eqn:multirelay_optimization}) with 
$\lambda=\lambda_1$. Hence,
\begin{eqnarray}
 G(N,\lambda_1)&=&\frac{e^{\lambda_1}}{ z_{1}^{*}+\sum_{k=2}^{N} \frac{z_{k}^{*}-z_{k-1}^{*}}{\sum_{l=0}^{k-1} z_{l}^{*}}+\frac{e^{\lambda_1}-z_{N}^{*}}{\sum_{l=0}^{N} z_{l}^{*}}×}\nonumber\\
&\leq & \frac{e^{\lambda_2}}{ z_{1}^{*}+\sum_{k=2}^{N} \frac{z_{k}^{*}-z_{k-1}^{*}}{\sum_{l=0}^{k-1} z_{l}^{*}}+\frac{e^{\lambda_2}-z_{N}^{*}}{1+z_{1}^{*}+\cdots+z_{N}^{*}×}×}\nonumber\\
&\leq & \frac{e^{\lambda_2}}{ z_{1}^{'}+\sum_{k=2}^{N} \frac{z_{k}^{'}-z_{k-1}^{'}}{\sum_{l=0}^{k-1} z_{l}^{'}}+\frac{e^{\lambda_2}-z_{N}^{'}}{1+z_{1}^{'}+\cdots+z_{N}^{'}×}×}\nonumber\\
&=& G(N,\lambda_2)
\end{eqnarray}
The second inequality follows from the fact that $(z_{1}',\cdots,z_{N}')$ minimizes 
$z_{1}+\sum_{k=2}^{N+1} \frac{z_{k}-z_{k-1}}{\sum_{l=0}^{k-1} z_{l}}$ subject to the constraint 
$1 \leq z_1 \leq z_2 \leq \cdots \leq z_N \leq z_{N+1} =e^{\lambda_2}$.

Hence, $G(N,\lambda)$ is increasing in $\lambda$ for fixed $N$.
\qed

\subsection{Proof of Theorem \ref{theorem:large_nodes_uniform}}

When $N$ relay nodes are uniformly placed along a line, we will have $y_{k}=\frac{kL}{N+1×}$. Then our formula for achievable rate $R_{P_T}^{opt}(y_1,y_2,\cdots,y_N)$
for total power constraint becomes :
$R_{N}=C(\frac{P_{T}}{\sigma^{2}×}\frac{1}{f(N)×})$ where $f(N)=a+\sum_{k=2}^{N+1}\frac{a^{k}-a^{k-1}}{1+a+\cdots+a^{k-1}×}$ with 
$a=e^{ \frac{\rho L}{N+1×}}$.
\\ Since $a>1$, we have : 

\begin{eqnarray*}
f(N)&=& a+\sum_{k=1}^{N}\frac{a^{k+1}-a^{k}}{1+a+\cdots+a^{k}×}\\
&=& a+ (a-1)^{2}\sum_{k=1}^{N} \frac{a^{k}}{a^{k+1}-1×}\\
& \geq & a+ (a-1)^{2} \sum_{k=1}^{N} \frac{a^{k}}{a^{k+1}×}\\
& =& a+ (a-1)^{2} \frac{N}{a×}
\end{eqnarray*}

Thus we obtain:

\begin{eqnarray*}
& & \liminf_N f(N)\\
& \geq & \lim_{N \rightarrow \infty} \left(e^{ \frac{\rho L}{N+1×}}+\frac{N}{e^{ \frac{\rho L}{N+1×}}×}(e^{ \frac{\rho L}{N+1×}}-1)^{2}\right)\\
&= & 1 
\end{eqnarray*}

On the other hand, noting that $a>1$, we can write:

\begin{eqnarray*}
f(N)&=& a+ (a-1)^{2}\sum_{k=1}^{N} \frac{a^{k}}{a^{k+1}-1×}\\
& \leq & a+ (a-1)^{2} \sum_{k=1}^{N} \frac{a^{k+1}}{a^{k+1}-1×}\\
& = & a+ (a-1)^{2} \sum_{k=1}^{N} \left(1+\frac{1}{a^{k+1}-1×}\right)\\
& \leq & a+ (a-1)^{2} \sum_{k=1}^{N} \left(1+\frac{1}{a^{k+1}-a^{k}×}\right)\\
& =& a+N (a-1)^{2}+(a-1)\sum_{k=1}^{N} a^{-k}\\
&=& a+N (a-1)^{2}+(a-1)a^{-1}\frac{1-a^{-N}}{1-a^{-1}×}
\end{eqnarray*}

So 

\begin{eqnarray*}
 & & \limsup_N f(N)\\
& \leq & \lim_{N \rightarrow \infty} \left(e^{ \frac{\rho L}{N+1×}}+N(e^{ \frac{\rho L}{N+1×}}-1)^{2}+ (1-e^{ \frac{-\rho NL}{N+1×}})\right)\\
&=& 2-e^{-\rho L}
\end{eqnarray*}

Thus we have $\limsup_N R_{N} \leq C \left(\frac{P_{T}}{\sigma^{2}×}\right)$ and 
$\liminf_N R_{N} \geq C\left(\frac{P_{T}/\sigma^{2}}{2-e^{-\rho L}×}\right)$. Hence, the theorem is proved.
\qed

\section{Optimal Sequential Placement of Relay Nodes on a Line of Random Length}\label{appendix:sequential_placement_total_power}

As we have seen in Section \ref{sec:mdp_total_power}, our problem is a negative dynamic programming problem 
(i.e., the $\mathsf{N}$ case of \cite{schal75conditions-optimality}, where single-stage rewards are non-positive). It is to be noted 
that Sch\"{a}l \cite{schal75conditions-optimality} discusses two other kind of problems as well: 
the $\mathsf{P}$ case (single-stage rewards are positive) 
and the $\mathsf{D}$ case (the reward at stage $k$ is discounted by a factor $\alpha^k$, where $0<\alpha<1$). In this appendix, 
we first state a general-purpose theorem for the value iteration, prove it by some results of \cite{schal75conditions-optimality}, 
and then we use this theorem to prove Theorem \ref{theorem:convergence_of_value_iteration}.

%

\subsection{A General Result (Derived from \cite{schal75conditions-optimality})}\label{appendix_subsection_schal-discussion}

Consider an infinite horizon total cost MDP whose state 
space $\mathcal{S}$ is an interval in $\mathbb{R}$ and the action space 
$\mathcal{A}$ is $[0,\infty)$. Let the set of possible actions at state $s$ be denoted by $\mathcal{A}(s)$. 
Let the single-stage cost be $c(s,a,w) \geq 0$ where $s$, $a$ and $w$ are the state, 
the action and the disturbance, respectively. Let us denote the optimal expected cost-to-go at state $s$ by $V^*(s)$. Let the state of 
the system evolve as $s_{k+1}=h(s_k,a_k,w_k)$, where $s_k$, $a_k$ and $w_k$ are the state, the action and 
the disturbance at the $k$-th 
instant, respectively. Let $s^{*} \in \mathcal{S}$ be an absorbing state with $c(s^{*},a, w)=0$ for all $a$, $w$. 
Let us consider the value iteration for all $s \in \mathcal{S}$:

\footnotesize
\begin{eqnarray}
V^{(0)}(s)&=& 0 \nonumber\\
 V^{(k+1)}(s)&=&\inf_{a \in [0,\infty)} \mathbb{E}_{w} \bigg( c(s,a,w)+V^{(k)}(h(s,a,w)) \bigg), s \neq s^{*}\nonumber\\
V^{(k+1)}(s^{*})&=& 0\label{eqn:value_iteration_general}
\end{eqnarray}
\normalsize  

We provide some results and concepts from \cite{schal75conditions-optimality}, which will be used later to prove 
Theorem \ref{theorem:convergence_of_value_iteration}.

\begin{thm}\label{thm:schal_convergence_value_iteration}
       [{\em Theorem 4.2 (\cite{schal75conditions-optimality})}] $V^{(k)}(s)\rightarrow V^{(\infty)}(s)$ for all $s \in \mathcal{S}$, 
i.e., the value iteration (\ref{eqn:value_iteration_general}) converges.
\end{thm}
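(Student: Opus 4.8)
The plan is to recognize this statement as a direct instance of Sch\"{a}l's negative-cost ($\mathsf{N}$) framework and to invoke his Theorem~4.2, after checking that the general MDP described above meets his standing hypotheses. First I would set up the correspondence: the single-stage cost $c(s,a,w)\ge 0$ corresponds to non-positive rewards $-c$, placing us squarely in the $\mathsf{N}$ case; the absorbing zero-cost state $s^{*}$ plays the role of the terminal state; and the operator $(TV)(s)=\inf_{a}\mathbb{E}_w[c(s,a,w)+V(h(s,a,w))]$ generates the iteration via $V^{(k+1)}=TV^{(k)}$ with $V^{(0)}\equiv 0$. Let $V^{*}$ denote the pre-defined infinite-horizon optimal cost-to-go; the goal is to show $V^{(k)}\to V^{(\infty)}$ where $V^{(\infty)}=V^{*}$.

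The first substantive step is monotonicity: since $c\ge 0$ we have $V^{(1)}=TV^{(0)}\ge 0=V^{(0)}$, and because $T$ is order-preserving this gives $V^{(k)}\uparrow$ pointwise to a limit $\overline V(s):=\sup_k V^{(k)}(s)\in[0,\infty]$. The \emph{easy} half of the identification is $\overline V\le V^{*}$: for every policy $\pi$ the $k$-stage cost $J^{\pi}_k$ is non-decreasing in $k$ (costs are non-negative) and bounded above by the infinite-horizon cost $J^{\pi}_{\infty}$, so $V^{(k)}=\inf_\pi J^{\pi}_k\le \inf_\pi J^{\pi}_{\infty}=V^{*}$, and letting $k\to\infty$ yields $\overline V\le V^{*}$.

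The heart of the proof --- and precisely what distinguishes genuine convergence to the pre-defined optimal value from mere convergence to the supremum of the iterates --- is the reverse inequality $\overline V\ge V^{*}$. This is the content of Sch\"{a}l's Theorem~4.2, and its mechanism is a minimax interchange: by monotone convergence $V^{*}=\inf_\pi\sup_k J^{\pi}_k$ while $\overline V=\sup_k\inf_\pi J^{\pi}_k$, so equality amounts to swapping $\sup_k$ and $\inf_\pi$ (the inequality $\sup_k\inf_\pi\le\inf_\pi\sup_k$ always holds and recovers the easy half). I would obtain the interchange by first showing that $\overline V$ solves the Bellman optimality equation $\overline V=T\overline V$: pushing the monotone limit inside $\mathbb{E}_w$ is immediate, but pushing it through the infimum over the \emph{non-compact} action set $\mathcal{A}=[0,\infty)$ is the delicate point, justified by Sch\"{a}l's lower-semicontinuity of $a\mapsto \mathbb{E}_w[c(s,a,w)+V^{(k)}(h(s,a,w))]$ together with the inf-compactness (tightness) condition that confines near-optimal actions to a compact subset of $\mathcal{A}$. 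Once $\overline V$ is a fixed point of $T$ lying below $V^{*}$, Sch\"{a}l's argument then identifies it with the optimal value, so $\overline V=V^{*}=V^{(\infty)}$, and hence $V^{(k)}\to V^{(\infty)}$; the measurability of the resulting minimizing selectors is also supplied by his results.

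The main obstacle will be exactly this interchange over the non-compact action space: establishing that the infimum in the value iteration is attained and that the limit can legitimately be pushed through $\inf_a$, rather than only obtaining $\overline V\le T\overline V$. The general statement therefore rests on verifying Sch\"{a}l's semicontinuity-plus-inf-compactness hypotheses, which for the concrete relay-placement MDP are furnished by the Case~I/Case~II analysis confining the minimizer to a bounded interval $[0,A]$; I would flag this compactification of the effective action set as carrying the real technical weight, with the monotonicity and the easy inequality $\overline V\le V^{*}$ being routine.
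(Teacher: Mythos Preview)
You have misread what the theorem asserts. In the paper, $V^{(\infty)}$ is not a pre-defined object; it is simply a name for the pointwise limit of the iterates $V^{(k)}$. The statement says only that the sequence $\{V^{(k)}(s)\}_{k\ge 0}$ converges for each $s$, not that its limit equals the optimal value $V^{*}$. Your ``first substantive step'' already proves exactly this: since $c\ge 0$ and $T$ is monotone, $V^{(k)}\uparrow$ pointwise in $[0,\infty]$, and one calls that limit $V^{(\infty)}$. That is the entire content of Sch\"{a}l's Theorem~4.2 as quoted here, and the paper does not prove it at all---it merely cites it.

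Everything from your ``easy half'' onward is aimed at the identification $V^{(\infty)}=V^{*}$, which in the paper is a \emph{separate} result: Theorem~\ref{thm:schal_main_theorem} (Sch\"{a}l's Theorem~13.3), and it requires the additional Condition~\ref{condition_A} (compact action sets, measurability) or, via Theorem~\ref{thm:schal_compact_action} and Condition~\ref{condition_B}, the compactification of the effective action set that you correctly identify as the real technical weight. So your outline is essentially sound as a sketch of Theorems~\ref{thm:schal_main_theorem} and~\ref{thm:schal_compact_action} combined, but it over-proves the present statement. For Theorem~\ref{thm:schal_convergence_value_iteration} alone, the monotonicity argument suffices and no inf-compactness or interchange of $\sup_k$ and $\inf_\pi$ is needed.
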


Let us recall that $\Gamma_k(s)$ is the set of minimizers of (\ref{eqn:value_iteration_general}) at the 
$k$-th iteration at state $s$, if the infimum is achieved at some $a<\infty$. 
$\Gamma_{\infty}(s):=\{a \in \mathcal{A}:a$ is an 
accumulation point of some sequence $\{a_k\}$ where each $a_k \in \Gamma_{k}(s)\}$. $\Gamma^*(s)$ is the set of minimizers in 
the Bellman Equation.

\begin{thm}\label{thm:value_iteration_general}

 If the value iteration (\ref{eqn:value_iteration_general}) satisfies the following conditions:
\begin{enumerate}[label=(\alph*)]
\item For each $k$, $\mathbb{E}_{w}\bigg(c(s,a,w)+V^{(k)}(h(s,a,w))\bigg)$ is jointly continuous in $a$ and $s$ for $s \neq s^{*}$.
\item The infimum in (\ref{eqn:value_iteration_general}) is achieved in $[0,\infty)$ for all $s \neq s^{*}$.
\item For each $s \in \mathcal{S}$, there exists $a(s)<\infty$ such that $a(s)$ is continuous in $s$ for $s \neq s^{*}$, 
and no minimizer of (\ref{eqn:value_iteration_general}) lies in $(a(s), \infty)$ for each $k \geq 0$.

 \end{enumerate}

Then the following hold:
\begin{enumerate}[label=(\roman{*})]
 \item The value iteration converges, i.e., $V^{(k)}(s) \rightarrow V^{(\infty)}(s)$ for all $s \neq s^{*}$.
\item $V^{(\infty)}(s)=V^*(s)$ for all $s \neq s^{*}$.
\item $\Gamma_{\infty}(s) \subset \Gamma^*(s)$ for all $s \neq s^{*}$.
\item There is a stationary optimal policy $f^{\infty}$ where the mapping $f:\mathcal{S} \setminus \{s^{*}\} \rightarrow \mathcal{A}$ and $f(s) \in \Gamma_{\infty}(s)$ 
for all $s \neq s^{*}$.
\end{enumerate}

\end{thm}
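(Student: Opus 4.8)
The plan is to read Theorem~\ref{thm:value_iteration_general} as the specialization of Sch\"al's general theory \cite{schal75conditions-optimality} to a \emph{semicontinuous model with effectively compact action sets}, and to verify that hypotheses (a)--(c) supply exactly the structure his results require. Two preliminary observations organize everything. Since $V^{(0)}\equiv 0$ and $c(s,a,w)\ge 0$, the operator on the right of (\ref{eqn:value_iteration_general}) is monotone and preserves nonnegativity, so $V^{(k)}(s)$ is nondecreasing in $k$ and converges pointwise to some $V^{(\infty)}(s)$; conclusion~(i) is then precisely Theorem~\ref{thm:schal_convergence_value_iteration}. Moreover, a truncation argument---every policy's infinite-horizon cost dominates its $k$-stage cost because the omitted stages contribute nonnegatively, and taking infima preserves this inequality---gives $V^{(k)}\le V^*$ for all $k$, hence $V^{(\infty)}\le V^*$ automatically. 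The substance of the theorem is therefore the reverse inequality in~(ii) together with (iii) and (iv).

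The first substantive step is to \textbf{compactify the action space} using hypothesis~(c). For every $s\ne s^*$ no minimizer of any stage of (\ref{eqn:value_iteration_general}) exceeds the finite bound $a(s)$, so the infimum over $[0,\infty)$ coincides, at every iterate $k$, with the infimum over the compact interval $K(s):=[0,a(s)]$. I would accordingly replace the non-compact action set by the compact-valued correspondence $s\mapsto K(s)$, which is continuous because $a(\cdot)$ is; this is the device that brings the model into Sch\"al's compact-action framework without altering a single iterate of the value iteration.

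Next I would exploit the continuity hypothesis~(a). Writing $G^{(k)}(s,a):=\mathbb{E}_w\left(c(s,a,w)+V^{(k)}(h(s,a,w))\right)$, each $G^{(k)}$ is jointly continuous by~(a), and monotonicity gives $G^{(k)}\uparrow G^{(\infty)}$ with $G^{(\infty)}(s,a)=\mathbb{E}_w\left(c(s,a,w)+V^{(\infty)}(h(s,a,w))\right)$. Because the continuous functions $G^{(k)}$ increase to $G^{(\infty)}$ on the compact $K(s)$, a standard lower-semicontinuity (Dini-type, subsequence) argument interchanges $\lim_k$ with $\inf_{K(s)}$, yielding $V^{(\infty)}(s)=\inf_{a\in K(s)}G^{(\infty)}(s,a)$, so that $V^{(\infty)}$ solves the model's Bellman equation and, by~(b), the infimum is attained. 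Selecting a measurable $f(s)$ from these minimizers and verifying, by a standard $\mathsf{N}$-case verification argument using the finiteness of $V^{(\infty)}$, that the stationary policy $f^{\infty}$ attains $V^{(\infty)}$ gives $V^*\le V^{(\infty)}$; combined with the automatic bound this establishes~(ii) and the optimality of $f^{\infty}$, which is~(iv). Since~(ii) now identifies $G^{(\infty)}$ with the Bellman objective, $\Gamma^*(s)$ is exactly the set of minimizers of $G^{(\infty)}(s,\cdot)$; applying Berge's upper-semicontinuity theorem to the continuous correspondence $K(\cdot)$ and objective $G^{(\infty)}$ shows that every accumulation point of a sequence $a_k\in\Gamma_k(s)$ minimizes $G^{(\infty)}(s,\cdot)$ and hence lies in $\Gamma^*(s)$, giving $\Gamma_{\infty}(s)\subset\Gamma^*(s)$, which is~(iii).

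The \textbf{main obstacle} is the reverse inequality $V^{(\infty)}\ge V^*$: in the negative ($\mathsf{N}$) case the limit of value iteration can in general strictly underestimate the optimal value, and it is only the compactness and continuity furnished by (a)--(c) that close this gap. Concretely, one must justify both the interchange of $\inf$ and $\lim$ that makes $V^{(\infty)}$ a genuine solution of the Bellman equation, and the verification that the limiting stationary selector actually achieves this value---the latter requiring control of the cost tail through the finiteness of $V^{(\infty)}$ (in the application this finiteness is supplied by Propositions~\ref{prop:upper_bound_on_cost_beta_geq_rho} and~\ref{prop:upper_bound_on_cost}). A secondary technical point is the measurable selection of $f$ from $\Gamma_{\infty}$, but here the joint continuity~(a) and the compactness of $K(s)$ render $\Gamma_{\infty}$ closed-valued and well enough behaved that Sch\"al's selection results apply directly.
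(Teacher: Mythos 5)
Your proposal is correct, and it reaches the conclusions by a genuinely more self-contained route than the paper's. The paper's own proof is a verification exercise in Sch\"al's framework: conclusion (i) comes from Theorem \ref{thm:schal_convergence_value_iteration} together with monotonicity of the iterates (exactly as you argue); hypothesis (c) is used to define $\underline{\mathcal{A}}(s)=[0,a(s)]$, whose continuity in the Hausdorff metric makes it a measurable map into $\mathcal{C}(\mathcal{A})$, so Condition \ref{condition_B} is declared to hold; hypothesis (a) plus the truncations $v_{n,k}(s,a)=\min\{\mathbb{E}_w(c(s,a,w)+V^{(k)}(h(s,a,w))),\,n\}$ places the stage objectives in $\hat{\mathcal{F}}(\mathcal{S}\times\mathcal{A})$, so Condition \ref{condition_A} holds for the compactified problem; then Theorems \ref{thm:schal_main_theorem} and \ref{thm:schal_compact_action} are cited as black boxes for (ii)--(iv). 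You keep the same compactification device but re-derive the content of Theorem \ref{thm:schal_main_theorem} in this special setting: the subsequence argument on the compact $K(s)$ interchanging $\lim_k$ and $\inf$ for the increasing continuous functions $G^{(k)}$, the resulting Bellman fixed-point property of $V^{(\infty)}$, and --- correctly identified as the crux, since a Bellman fixed point alone does not identify $V^*$ from below in the $\mathsf{N}$ case --- the verification step: iterating the fixed-point relation along a measurable selector $f$ and comparing with the $k$-stage costs of $f^{\infty}$, which increase to its infinite-horizon cost, gives $J_{f^{\infty}}\le V^{(\infty)}$, hence $V^*\le V^{(\infty)}$ and optimality of $f^{\infty}$; the same subsequence argument yields (iii). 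Your route buys transparency and actually sidesteps a delicate point in the paper's: Condition \ref{condition_B} literally requires the infimum over $\mathcal{A}(s)\setminus\underline{\mathcal{A}}(s)$ to be \emph{strictly} larger than the global infimum, which hypotheses (b)--(c) (``the infimum is attained and no minimizer lies beyond $a(s)$'') do not quite assert, whereas your argument needs only equality of the infima over $[0,\infty)$ and $[0,a(s)]$, which attainment in $[0,a(s)]$ does give. What the paper's route buys is that all measurability is discharged wholesale by Sch\"al --- and indeed you too must delegate to him the measurable selection of $f$ with $f(s)\in\Gamma_{\infty}(s)$ (note that conclusion (iv) demands selection from $\Gamma_{\infty}$, not merely from $\Gamma^*$), as you do at the end. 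Two cosmetic corrections that do not affect correctness: ``Berge's theorem'' does not literally apply in (iii), since $G^{(\infty)}$ is only lower semicontinuous --- the subsequence argument you sketch is the right tool --- and ``Dini-type'' is loose, since the limit function need not be continuous.
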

\qed

Let $\mathcal{C}(\mathcal{A})$ be the set of nonempty compact subsets of $\mathcal{A}$. 
 The Hausdorff metric $d$ on $\mathcal{C}(\mathcal{A})$ is defined as follows:
\begin{equation*}
 d(C_1,C_2)=\max \{ \sup_{c \in C_1}\rho(c,C_2), \, \sup_{c \in C_2}\rho(c,C_1) \}
\end{equation*}
where $\rho(c,C)$ is the minimum distance between the point $c$ and the compact set $C$. 

\begin{prop}\label{prop:separable_hausdorff}
     [Proposition 9.1(\cite{schal75conditions-optimality})]
$(\mathcal{C}(\mathcal{A}),d)$ is a separable metric space. 
 \end{prop}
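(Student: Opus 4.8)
The plan is to exhibit an explicit countable dense subset of $(\mathcal{C}(\mathcal{A}),d)$, exploiting the separability of $\mathcal{A}=[0,\infty)$ itself. First I would record that $d$ is genuinely a metric on the nonempty compact subsets: since any two such sets are bounded, the suprema defining $d$ are finite (indeed attained, by compactness and continuity of $c\mapsto\rho(c,C)$); symmetry is immediate; the identity $d(C_1,C_2)=0$ forces every point of $C_1$ to lie at zero distance from $C_2$ and conversely, and since compact sets are closed this gives $C_1\subseteq C_2\subseteq C_1$, i.e.\ $C_1=C_2$; and the triangle inequality follows from the pointwise bound $\rho(c,C_3)\le\rho(c,C_2)+\sup_{c'\in C_2}\rho(c',C_3)$ applied in each directed distance.

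For separability, I would take $D:=\mathbb{Q}\cap[0,\infty)$, a countable dense subset of $\mathcal{A}$, and let $\mathcal{F}$ be the collection of all \emph{finite} subsets of $D$; as the finite subsets of a countable set form a countable family, $\mathcal{F}$ is countable. The heart of the argument is to show $\mathcal{F}$ is dense. Given a nonempty compact $C\subseteq\mathcal{A}$ and $\epsilon>0$, I would use total boundedness of $C$ to pick a finite $\tfrac{\epsilon}{2}$-net $c_1,\dots,c_n\in C$, then choose $q_i\in D$ with $\rho(c_i,q_i)<\tfrac{\epsilon}{2}$, and set $F:=\{q_1,\dots,q_n\}\in\mathcal{F}$.

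The remaining step is the two-sided estimate $d(C,F)<\epsilon$. In one direction, any $c\in C$ lies within $\tfrac{\epsilon}{2}$ of some $c_i$, so $\rho(c,F)\le\rho(c,c_i)+\rho(c_i,q_i)<\epsilon$; in the other, each $q_i$ satisfies $\rho(q_i,C)\le\rho(q_i,c_i)<\tfrac{\epsilon}{2}$. Together these bound both directed distances by $\epsilon$, yielding $d(C,F)<\epsilon$ and hence density. I expect this two-sided control to be the only genuinely delicate point: a rational set close to $C$ in one direction of the Hausdorff metric need not be close in the other, which is exactly why the approximant is built as a \emph{net of $C$} (to control $\sup_{c\in C}\rho(c,F)$) consisting of \emph{rational points} (to control $\sup_{q\in F}\rho(q,C)$). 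Everything else reduces to the standard metric-space facts above.
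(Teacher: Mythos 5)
The paper does not prove this statement at all: it is quoted verbatim as Proposition~9.1 of Sch\"{a}l's paper \cite{schal75conditions-optimality}, so there is no internal proof to compare against. Your argument is the standard proof of this fact and is correct: the finite subsets of $\mathbb{Q}\cap[0,\infty)$ form a countable family, and the two-sided construction (an $\tfrac{\epsilon}{2}$-net \emph{drawn from $C$ itself}, then perturbed to rational points) is exactly what is needed to control both directed distances simultaneously --- you are right that this is the one delicate point, since a set of rationals uniformly close to $C$ in one direction can be far in the other. Two small remarks. First, in the bound $\sup_{c\in C}\rho(c,F)<\epsilon$, a supremum of quantities each strictly below $\epsilon$ need not itself be strictly below $\epsilon$ in general; here it is harmless, either because the supremum is attained (as you note, $c\mapsto\rho(c,F)$ is continuous on the compact set $C$) or, more simply, because each $c$ satisfies $\rho(c,F)<\tfrac{\epsilon}{2}+\tfrac{\epsilon}{2}$ with the first summand coming from a net, so $d(C,F)\le\epsilon$ suffices for density anyway. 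Second, your verification that $d$ is a metric (finiteness and attainment of the suprema, the separation axiom via closedness of compact sets, and the triangle inequality from the pointwise estimate $\rho(c,C_3)\le\rho(c,C_2)+\sup_{c'\in C_2}\rho(c',C_3)$) is complete and is a worthwhile addition, since the paper takes the metric structure for granted. In short: your proof buys self-containedness where the paper relies on a citation.
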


A mapping $\phi:\mathcal{S}\rightarrow \mathcal{C}(\mathcal{A})$ is called measurable if it is measurable with respect to
the $\sigma$-algebra of Borel subsets of $(\mathcal{C}(\mathcal{A}),d)$.

 $\hat{\mathcal{F}}(\mathcal{S} \times \mathcal{A})$ is the set of all measurable functions 
$v:\mathcal{S} \times \mathcal{A} \rightarrow \mathbb{R}$ which are bounded below and where every such $v(\cdot)$ is the limit 
of a non-decreasing sequence of measurable, bounded functions $v_k:\mathcal{S} \times \mathcal{A} \rightarrow \mathbb{R}$.

We will next present a condition followed by a theorem. The condition, if satisfied, implies 
the convergence of value iteration (\ref{eqn:value_iteration_general}) 
to the optimal value function (according to the theorem).

\begin{condition} \label{condition_A}
[{\em Derived from Condition A in \cite{schal75conditions-optimality}}]
\begin{enumerate}[label=(\roman{*})]
 \item $\mathcal{A}(s)\in \mathcal{C}(\mathcal{A})$ for all $s \in \mathcal{S}$ and 
$\mathcal{A}:\mathcal{S}\rightarrow \mathcal{C}(\mathcal{A})$ is measurable.
\item $\mathbb{E}_{w}(c(s,a,w)+V^{(k)}(h(s,a,w))) $ is in $\hat{\mathcal{F}}(\mathcal{S}\times \mathcal{A})$ for all $k \geq 0$.
\end{enumerate}    
\end{condition}
\qed

\begin{thm}\label{thm:schal_main_theorem}
 [{\em Theorem $13.3$, \cite{schal75conditions-optimality}}] If $c(s,a,w) \geq 0$ for all $s,a,w$ and Condition \ref{condition_A} 
holds:
\begin{enumerate}[label=(\roman{*})]
 \item $V^{(\infty)}(s)=V^*(s)$, $s \in \mathcal{S}$.
\item $\Gamma_{\infty}(s) \subset \Gamma^*(s)$.
\item There is a stationary optimal policy $f^{\infty}$ where $f:\mathcal{S} \rightarrow \mathcal{A}$ and $f(s) \in \Gamma_{\infty}(s)$ 
for all $s \in \mathcal{S}$.
\end{enumerate}
 \end{thm}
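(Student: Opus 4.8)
The plan is to reproduce Sch\"{a}l's argument for the $\mathsf{N}$ case, treating the bare convergence of value iteration (Theorem~\ref{thm:schal_convergence_value_iteration}) as already established and using Condition~\ref{condition_A} to upgrade the limit $V^{(\infty)}$ to the true optimal value function and to extract a measurable optimal selector. Throughout I would abbreviate $g_k(s,a):=\mathbb{E}_w(c(s,a,w)+V^{(k)}(h(s,a,w)))$ and $g_\infty(s,a):=\mathbb{E}_w(c(s,a,w)+V^{(\infty)}(h(s,a,w)))$, so that one iterate of (\ref{eqn:value_iteration_general}) reads $V^{(k+1)}(s)=\inf_{a\in\mathcal{A}(s)}g_k(s,a)$ and the Bellman equation reads $V^*=\inf_a g_{V^*}$.

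First I would record the monotonicity peculiar to non-negative costs: since $c\ge0$ and $V^{(0)}=0$ we have $V^{(1)}\ge V^{(0)}$, and as the dynamic programming operator is monotone the iterates increase, so the convergence in Theorem~\ref{thm:schal_convergence_value_iteration} is in fact monotone, $V^{(k)}\uparrow V^{(\infty)}$. Next I would dispatch the easy half of conclusion (i): each $V^{(k)}$ is the optimal cost of the $k$-stage truncation with zero terminal cost, and because every stage cost is non-negative this is dominated by the infinite-horizon optimum, giving $V^{(k)}\le V^*$ and hence $V^{(\infty)}\le V^*$. Only the reverse inequality and the selection statements remain, and it is there that Condition~\ref{condition_A} is essential.

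The technical heart, and the step I expect to be the main obstacle, is proving that $V^{(\infty)}$ is a fixed point of the Bellman operator, which amounts to the monotone minimax interchange $\inf_a\lim_k g_k=\lim_k\inf_a g_k$. Condition~\ref{condition_A} supplies exactly the two ingredients this needs: part (ii) places every $g_k$ in the class $\hat{\mathcal{F}}(\mathcal{S}\times\mathcal{A})$, which is precisely the regularity guaranteeing that each stagewise infimum is attained, depends measurably on $s$, and that the limit $g_\infty$ (obtained from $g_k\uparrow g_\infty$ by monotone convergence inside the expectation) inherits lower semicontinuity in $a$; part (i) supplies compactness of $\mathcal{A}(s)$. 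The direction $\lim_k\inf_a g_k\le\inf_a g_\infty$ is immediate from $g_k\le g_\infty$. For the reverse I would pick stagewise minimizers $a_k\in\Gamma_k(s)\subset\mathcal{A}(s)$, extract a convergent subsequence $a_{k_j}\to a^\sharp$ by compactness, and for each fixed $m$ combine the lower semicontinuity of $g_m$ with $g_m(s,a_{k_j})\le g_{k_j}(s,a_{k_j})=\inf_a g_{k_j}$, valid once $k_j\ge m$, to obtain $g_m(s,a^\sharp)\le\lim_k\inf_a g_k$; letting $m\to\infty$ yields $g_\infty(s,a^\sharp)\le\lim_k\inf_a g_k$. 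This closes the interchange, shows $V^{(\infty)}=\inf_a g_\infty=(TV^{(\infty)})(s)$, and simultaneously exhibits $a^\sharp$ as a minimizer, so $\Gamma^*(s)\ne\emptyset$.

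To finish (i) I would verify that a stationary policy selecting a Bellman minimizer attains $V^{(\infty)}$: iterating the identity $V^{(\infty)}(s)=g_\infty(s,f(s))$ along the induced chain and using non-negativity of the costs to bound the tail shows its total expected cost equals $V^{(\infty)}$, so $V^*\le V^{(\infty)}$, and with the earlier bound $V^{(\infty)}=V^*$. For the measurability of such an $f$, and for conclusion (iii), I would invoke Proposition~\ref{prop:separable_hausdorff}: the compact-valued maps $s\mapsto\Gamma_k(s)$ and their accumulation map $s\mapsto\Gamma_\infty(s)$ take values in the separable space $(\mathcal{C}(\mathcal{A}),d)$, so a measurable selector $f(s)\in\Gamma_\infty(s)$ exists by a Kuratowski--Ryll-Nardzewski type theorem, and $\Gamma_\infty(s)$ is nonempty because the $a_k$ lie in the compact $\mathcal{A}(s)$. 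Conclusion (ii) would then follow from the same lower-semicontinuity estimate applied to an arbitrary accumulation point $a\in\Gamma_\infty(s)$: it gives $g_\infty(s,a)\le\lim_k\inf_a g_k=V^*(s)$, while trivially $g_\infty(s,a)\ge\inf_{a'}g_\infty(s,a')=V^*(s)$, forcing equality and hence $a\in\Gamma^*(s)$. Selecting $f(s)\in\Gamma_\infty(s)\subset\Gamma^*(s)$ then produces the stationary optimal policy $f^\infty$, completing all three parts.
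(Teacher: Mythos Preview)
The paper does not prove this theorem at all: it is quoted verbatim as Theorem~13.3 from Sch\"{a}l \cite{schal75conditions-optimality} and immediately closed with a \qed, with no argument supplied. So there is no ``paper's own proof'' against which to compare; you have gone well beyond the paper by attempting to reconstruct Sch\"{a}l's argument.

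Your sketch follows the standard architecture for the $\mathsf{N}$ case and the main moves are correct: the monotone increase $V^{(k)}\uparrow V^{(\infty)}$, the easy bound $V^{(\infty)}\le V^*$, the compactness-plus-lower-semicontinuity interchange to get $V^{(\infty)}$ as a Bellman fixed point, and the verification that a stationary selector of minimizers achieves $V^{(\infty)}$ (via $V^{(\infty)}(s_0)\ge \mathbb{E}\sum_{k<n}c+ \mathbb{E}V^{(\infty)}(s_n)\ge \mathbb{E}\sum_{k<n}c$, hence $J_f\le V^{(\infty)}$). One caution: you lean on lower semicontinuity of $g_k$ and $g_\infty$ in $a$, but the class $\hat{\mathcal{F}}(\mathcal{S}\times\mathcal{A})$ as \emph{defined in this paper} only asserts measurability, boundedness below, and approximability from below by bounded measurable functions --- it says nothing about continuity in $a$. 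In Sch\"{a}l's original framework the regularity needed for the minimax interchange and for measurable selection is built into his topological assumptions on the model, not into the bare membership in $\hat{\mathcal{F}}$ as paraphrased here. So while your outline is faithful to how this result is actually proved, the specific justification ``$g_k\in\hat{\mathcal{F}}$ guarantees lower semicontinuity'' would need to be replaced by an appeal to Sch\"{a}l's actual structural hypotheses if you were writing this out in full.
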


\qed

But the requirement of compact action sets in Theorem \ref{thm:schal_main_theorem} is often too restrictive. The next condition 
and theorem deal with the situation where the action space is noncompact.

\begin{condition} \label{condition_B}
[{\em Condition B (\cite{schal75conditions-optimality})}] There is a measurable mapping 
$\underline{\mathcal{A}}:\mathcal{S} \rightarrow \mathcal{C}(\mathcal{A})$ 
 such that:
\begin{enumerate}[label=(\roman{*})]
 \item $\underline{{\mathcal{A}}}(s)\subset {\mathcal{A}}(s)$ for all $s \in \mathcal{S}$.
\item \small{$\inf_{a \in \mathcal{A}(s)-\underline{\mathcal{A}}(s)} \mathbb{E}_{w}\bigg(c(s,a,w)+V^{(k)}(h(s,a,w))\bigg)> \inf_{a \in \mathcal{A}(s)}\mathbb{E}_{w}\bigg(c(s,a,w)+V^{(k)}(h(s,a,w))\bigg)$} 
for all $k \geq 0$.
\end{enumerate}
 \end{condition}
\qed

This condition requires that for each state $s$, there is a compact set $\underline{\mathcal{A}}(s)$ of actions such that no optimizer 
 of the value iteration lies outside the set $\underline{\mathcal{A}}(s)$ at any stage $k \geq 0$.

\begin{thm}\label{thm:schal_compact_action}
       [{\em Theorem $17.1$, \cite{schal75conditions-optimality}}] If Condition \ref{condition_B} is satisfied 
and if the three statements in Theorem \ref{thm:schal_main_theorem} are valid for the modified problem having 
admissible set of actions $\underline{\mathcal{A}}(s)$ for each state $s \in \mathcal{S}$, 
then those statements are valid for the original problem as well.
      \end{thm}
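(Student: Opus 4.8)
The plan is to exploit Condition~\ref{condition_B} to reduce the noncompact problem to the compact one. The key observation is that Condition~\ref{condition_B}(ii) forces every minimizer arising in the value iteration (\ref{eqn:value_iteration_general}) to lie inside the compact set $\underline{\mathcal{A}}(s)$, so that the value iteration for the original problem (with action space $\mathcal{A}(s)$) and for the modified problem (with action space $\underline{\mathcal{A}}(s)$) produce \emph{identical} iterates and share the \emph{same} minimizer sets at every stage. Once this is shown, the three conclusions, which are assumed to hold for the modified problem, transfer to the original one.

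First I would prove by induction on $k$ that $V^{(k)}(s)$ is unchanged whether the infimum in (\ref{eqn:value_iteration_general}) is taken over $\mathcal{A}(s)$ or over $\underline{\mathcal{A}}(s)$. The base case $V^{(0)}\equiv 0$ is immediate. For the inductive step, assuming the $k$-th iterates agree, Condition~\ref{condition_B}(ii) gives
\[
\inf_{a \in \mathcal{A}(s)\setminus\underline{\mathcal{A}}(s)} \mathbb{E}_{w}\big(c(s,a,w)+V^{(k)}(h(s,a,w))\big) \;>\; \inf_{a \in \mathcal{A}(s)} \mathbb{E}_{w}\big(c(s,a,w)+V^{(k)}(h(s,a,w))\big),
\]
so the overall infimum cannot be approached by actions outside $\underline{\mathcal{A}}(s)$ and must equal the infimum over $\underline{\mathcal{A}}(s)$. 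This both closes the induction (giving $V^{(k+1)}$ unchanged) and shows $\Gamma_k(s)\subset\underline{\mathcal{A}}(s)$, whence $\Gamma_k(s)$ and its accumulation-point set $\Gamma_{\infty}(s)$ coincide for the two problems.

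Next I would transfer the three statements. Since the iterates coincide, $V^{(\infty)}(s)$ is common to both problems. To obtain statement (i) for the original problem, I would pinch its optimal value $V^*(s)$ between two bounds: restricting the admissible actions can only raise the optimal cost, so $V^*_{\mathrm{orig}}(s)\le V^*_{\mathrm{mod}}(s)=V^{(\infty)}(s)$ by the hypothesis on the modified problem; while in the negative ($\mathsf{N}$) setting of \cite{schal75conditions-optimality} the limit of value iteration is a lower bound, $V^{(\infty)}(s)\le V^*_{\mathrm{orig}}(s)$. The two inequalities force $V^*_{\mathrm{orig}}(s)=V^{(\infty)}(s)$. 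Statement (ii) is immediate because $\Gamma_{\infty}(s)$ and $\Gamma^*(s)$ are unchanged by the restriction, so the inclusion $\Gamma_{\infty}(s)\subset\Gamma^*(s)$ for the modified problem \emph{is} the inclusion for the original one. For statement (iii), the stationary optimal policy $f^{\infty}$ supplied for the modified problem takes values $f(s)\in\Gamma_{\infty}(s)\subset\underline{\mathcal{A}}(s)\subset\mathcal{A}(s)$, hence is admissible, and therefore optimal, for the original problem as well.

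I expect the main obstacle to be justifying the pinching of $V^*$ in the undiscounted, unbounded, negative-cost setting — specifically the lower bound $V^{(\infty)}\le V^*_{\mathrm{orig}}$, which must be drawn from the general convergence theory of \cite{schal75conditions-optimality} rather than from a contraction argument (none is available since the cost is unbounded). The subtle point is that the equality $V^*_{\mathrm{orig}}=V^*_{\mathrm{mod}}$ hinges on the fact that Condition~\ref{condition_B} confines the minimizers of the Bellman equation itself, and not merely of the finite-horizon iterates, to $\underline{\mathcal{A}}(s)$, so that no policy employing actions outside $\underline{\mathcal{A}}(s)$ can strictly outperform a policy confined to it.
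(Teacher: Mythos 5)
The paper contains no proof of this statement to compare yours against: it is imported verbatim as Theorem 17.1 of \cite{schal75conditions-optimality}, stated with a \qed and never proved internally. Judged on its own terms, your reconstruction is essentially sound. The skeleton — an induction showing that Condition \ref{condition_B}(ii) forces the original and modified value iterations to produce identical iterates $V^{(k)}$ and identical minimizer sets $\Gamma_k(s)$ (any minimizer over $\mathcal{A}(s)$ lying outside $\underline{\mathcal{A}}(s)$ would have value at least $\inf_{a \in \mathcal{A}(s)\setminus\underline{\mathcal{A}}(s)}(\cdot)$, strictly above the infimum), followed by pinching $V^*_{\mathrm{orig}}$ between $V^*_{\mathrm{mod}}=V^{(\infty)}$ from above and $V^{(\infty)}\le V^*_{\mathrm{orig}}$ from below — is correct, and the two external facts you lean on are exactly the ones the paper itself uses: the bound $V^{(k)}\le V^*_{\mathrm{orig}}$ (the $k$-stage optimal cost with zero terminal cost is dominated by the infinite-horizon cost when stage costs are nonnegative) appears verbatim in the paper's proof of Theorem \ref{theorem:convergence_of_value_iteration}, and the Bellman equation for the original problem's optimal value is supplied by Theorem \ref{thm:schal_bellman_eqn}.

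One claim is stated too strongly: that $\Gamma^*(s)$ is \emph{unchanged} by the restriction to $\underline{\mathcal{A}}(s)$. Condition \ref{condition_B}(ii) is a strict inequality for each finite iterate $V^{(k)}$; since $V^{(k)}\uparrow V^{(\infty)}$, passing to the limit can degrade the strictness to equality, so a Bellman minimizer for the original problem may in principle lie outside $\underline{\mathcal{A}}(s)$, i.e., $\Gamma^*_{\mathrm{orig}}(s)$ can be strictly larger than $\Gamma^*_{\mathrm{mod}}(s)$. What does survive — and is all you need — is the one inclusion $\Gamma^*_{\mathrm{mod}}(s)\subset\Gamma^*_{\mathrm{orig}}(s)$: since $V^*_{\mathrm{mod}}=V^*_{\mathrm{orig}}$ and the latter satisfies the Bellman equation over the full set $\mathcal{A}(s)$, any minimizer over $\underline{\mathcal{A}}(s)$ attains the infimum over $\mathcal{A}(s)$ as well. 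With that correction, the chain $\Gamma_{\infty}(s)\subset\Gamma^*_{\mathrm{mod}}(s)\subset\Gamma^*_{\mathrm{orig}}(s)$ (the first inclusion being the hypothesis for the modified problem, with $\Gamma_{\infty}$ common to both problems by your induction) delivers statement (ii), and your arguments for (i) and (iii) — in particular that the stationary policy from the modified problem is admissible and incurs the same cost in the original problem — stand as written.
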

\qed

\textbf{\em Proof of Theorem \ref{thm:value_iteration_general}:}

By Theorem \ref{thm:schal_convergence_value_iteration}, the value iteration converges, i.e., 
$V^{(k)}(s)\rightarrow V^{(\infty)}(s)$. Moreover, $V^{(k)}(s)$ 
is the optimal cost for a $k$-stage problem with zero terminal cost, 
and the cost at each stage is positive. Hence, $V^{(k)}(s)$ 
increases in $k$ for every $s \in \mathcal{S}$. Thus, for all $s \in \mathcal{S}$, 
\begin{equation}
 V^{(k)}(s) \uparrow V^{(\infty)}(s)
\end{equation}


Now, Condition \ref{condition_B} and Theorem \ref{thm:schal_compact_action} 
say that if no optimizer of the value iteration in each stage $k$ lies outside a compact subset $\underline{\mathcal{A}}(s)$ 
of $\mathcal{A}(s) \subset \mathcal{A}$, 
then we can deal with the modified problem having a new action space $\underline{\mathcal{A}}(s)$. 
If the value iteration converges to the optimal 
value in this modified problem, then it will converge to the optimal 
value in the original problem as well, provided that the mapping $\underline{\mathcal{A}}:\mathcal{S}\rightarrow \mathcal{C}(\mathcal{A})$ 
is measurable. Let us choose $\underline{\mathcal{A}}(s):=[0,a(s)]$, where $a(s)$ satisfies hypothesis (c) of 
Theorem \ref{thm:value_iteration_general}. Since $a(s)$ is continuous at $s \neq s^{*}$, 
for any $\epsilon>0$ 
we can find a $\delta_{s,\epsilon}>0$ such that $|a(s)-a(s')|<\epsilon$ whenever $|s-s'|<\delta_{s,\epsilon}$, $s \neq s^{*}$, 
$s' \neq s^{*}$. Now, when  
$|a(s)-a(s')|<\epsilon$, we have $d([0,a(s)],[0,a(s')])<\epsilon$. Hence, the mapping 
$\underline{\mathcal{A}}:\mathcal{S}\rightarrow \mathcal{C}(\mathcal{A})$ is 
continuous at all $s \neq s^{*}$, and thereby measurable in this case. Hence, 
the value iteration (\ref{eqn:value_iteration_general}) satisfies Condition \ref{condition_B}.

Thus, the value iteration for $s \neq s^{*}$ can be modified as follows:
\begin{equation}
  V^{(k+1)}(s)=\inf_{a \in [0,a(s)]} \mathbb{E}_{w} \bigg( c(s,a,w)+V^{(k)}(h(s,a,w)) \bigg)\label{eqn:modified_value_iteration_general}
\end{equation}

Now, $\mathbb{E}_{w} \bigg( c(s,a,w)+V^{(k)}(h(s,a,w)) \bigg)$ is continuous (can be discontinuous at $s=s^{*}$, since this quantity 
is $0$ at $s=s^*$) 
on $\mathcal{S} \times \mathcal{A}$ (by our hypothesis). Hence, $\mathbb{E}_{w} \bigg( c(s,a,w)+V^{(k)}(h(s,a,w)) \bigg)$ 
is measurable on $\mathcal{S} \times \mathcal{A}$. Also, it is bounded below by $0$. 
Hence, it can be approximated by an increasing sequence of bounded measurable functions $\{v_{n,k}\}_{n \geq 1}$ given by 
$v_{n,k}(s,a)=\min \bigg \{\mathbb{E}_{w} \bigg( c(s,a,w)+V^{(k)}(h(s,a,w)) \bigg),\,n \bigg \}$. Hence, 
$\mathbb{E}_{w} \bigg( c(s,a,w)+V^{(k)}(h(s,a,w)) \bigg)$ is in $\hat{\mathcal{F}}(\mathcal{S} \times \mathcal{A})$.


Thus, Condition \ref{condition_A} is satisfied for the modified problem and therefore, by Theorem \ref{thm:schal_main_theorem}, 
the modified value iteration in (\ref{eqn:modified_value_iteration_general}) converges to the optimal value function. 
Now, by Theorem 
\ref{thm:schal_compact_action}, we can argue that the value iteration (\ref{eqn:value_iteration_general}) converges 
to the optimal value function in the original problem and hence $V^{(\infty)}(s)=V^*(s)$ for all $s \in \mathcal{S} \setminus s^*$. 
Also, $\Gamma_{\infty}(s) \subset \Gamma^*(s)$ for all $s \in \mathcal{S} \setminus s^*$ and there exists a stationary 
optimal policy $f^{\infty}$ where $f(s) \in \Gamma_{\infty}(s)$ for all $s \in \mathcal{S} \setminus s^*$ 
(by Theorem \ref{thm:schal_main_theorem}).

\qed

\subsection{Proof of Theorem \ref{theorem:convergence_of_value_iteration}}\label{appendix_subsection-convergence-value-iteration-proof}

This proof uses the results of Theorem \ref{thm:value_iteration_general} provided in this appendix. Remember that the state 
$\mathbf{e}$ is absorbing and $c(\mathbf{e}, a, w)=0$ for all $a$, $w$. We can think of it as state $0$ so that our state space 
becomes $[0,1]$ which is a Borel set. We will see that the state $0$ plays the role of the state $s^*$ 
as mentioned in Theorem \ref{thm:value_iteration_general}.

We need to check whether the conditions (a), (b), and (c) in Theorem \ref{thm:value_iteration_general} 
are satisfied for the value iteration (\ref{eqn:value_iteration}). Of course, 
$J_{\xi}^{(0)}(s)=0$ is concave, increasing in $s \in (0,1]$. Suppose that 
$J_{\xi}^{(k)}(s)$ is concave, increasing in $s$ for some $k \geq 0$. 
Also, for any fixed $a \geq 0$, $\frac{se^{\rho a}}{1+se^{\rho a}×}$ 
is concave and increasing in $s$. Thus, by the composition rule for the composition of a concave increasing 
function $J_{\xi}^{(k)}(\cdot)$ and a concave increasing function $\frac{se^{\rho a}}{1+se^{\rho a}×}$, 
for any $a \geq 0$ the term $J_{\xi}^{(k)}\left(\frac{se^{\rho a}}{1+se^{\rho a}×}\right)$ 
is concave, increasing over $s \in (0,1]$. Hence, 
$\int_{0}^{a}\beta e^{-\beta z}s(e^{\rho z}-1)dz + e^{-\beta a}\bigg(s(e^{\rho a}-1)+\xi+J_{\xi}^{(k)}\left(\frac{se^{\rho a}}{1+se^{\rho a}×}\right)\bigg)$ 
 (in (\ref{eqn:value_iteration})) is concave increasing over $s \in (0,1]$. Since the 
infimization over $a$ preserves concavity, we conclude that 
$J_{\xi}^{(k+1)}(s)$ is concave, increasing over $s \in (0,1]$. 
Hence, for each $k$, $J_{\xi}^{(k)}(s)$ is continuous in $s$ over $(0,1)$, since otherwise concavity w.r.t. $s$ will be violated. 
Now, we must have $J_{\xi}^{(k)}(1) \leq \lim_{s \uparrow 1} J_{\xi}^{(k)}(s)$, since otherwise 
the concavity of $J_{\xi}^{(k)}(s)$ will be violated. 
But since $J_{\xi}^{(k)}(s)$ 
is increasing in $s$, $J_{\xi}^{(k)}(1) \geq \lim_{s \uparrow 1} J_{\xi}^{(k)}(s)$. Hence, $J_{\xi}^{(k)}(1) = \lim_{s \uparrow 1}  J_{\xi}^{(k)}(s)$. 
Thus, $J_{\xi}^{(k)}(s)$ is continuous in $s$ over $(0,1]$ for each $k$.

Hence, $\int_{0}^{a}\beta e^{-\beta z}s(e^{\rho z}-1)dz + e^{-\beta a}\bigg(s(e^{\rho a}-1)+\xi+J_{\xi}^{(k)}(\frac{se^{\rho a}}{1+se^{\rho a}×}) \bigg)$ 
is continuous in $s,a$ for $s \neq 0$. Hence, condition (a) in Theorem \ref{thm:value_iteration_general} 
is satisfied. 

Now, we will check condition (c) in Theorem \ref{thm:value_iteration_general}.

By Theorem \ref{thm:schal_convergence_value_iteration}, the value iteration converges, i.e., 
$J_{\xi}^{(k)}(s)\rightarrow J_{\xi}^{(\infty)}(s)$. Also, $J_{\xi}^{(\infty)}(s)$ is concave, increasing 
in $s \in (0,1]$ and hence continuous. Moreover, $J_{\xi}^{(k)}(s)$ 
is the optimal cost for a $k$-stage problem with zero terminal cost, 
and the cost at each stage is positive. Hence, $J_{\xi}^{(k)}(s)$ 
increases in $k$ for every $s \in (0,1]$. Thus, for all $s \in (0,1]$, 
\begin{equation}
 J_{\xi}^{(k)}(s) \uparrow J_{\xi}^{(\infty)}(s)
\end{equation}
Again, $J_{\xi}^{(k)}(s)$ is the optimal cost for a $k$-stage problem with zero terminal cost. Hence, it is less than or equal to 
the optimal cost for the infinite horizon problem with the same transition law and cost structure. Hence, 
$J_{\xi}^{(k)}(s)\leq J_{\xi}(s)$ for all $k \geq 1$. Since $J_{\xi}^{(k)}(s)\uparrow J_{\xi}^{(\infty)}(s)$, 
we have $J_{\xi}^{(\infty)}(s) \leq J_{\xi}(s)$.

Now, consider the following two cases:

\subsubsection{$\beta>\rho$}

Let us define a function $\psi:(0,1]\rightarrow \mathbb{R}$ by 
$\psi(s)=\frac{J_{\xi}^{(\infty)}(s)+\theta s}{2×}$. By Proposition \ref{prop:upper_bound_on_cost_beta_geq_rho}, 
$J_{\xi}(s)<\theta s$ for all $s \in (0,1]$. Hence, $J_{\xi}^{(\infty)}(s)<\psi(s)<\theta s$ and $\psi(s)$ 
is continuous over $s \in (0,1]$.
Since $\beta>\rho$ and $J_{\xi}^{(k)}(s) \in [0, \theta]$ for any 
$s$ in $(0,1]$, the expression $\theta s + e^{-\beta a}\bigg(-\theta s e^{\rho a}+\xi+J_{\xi}^{(k)}\left(\frac{se^{\rho a}}{1+se^{\rho a}×}\right)\bigg)$ 
obtained from the R.H.S of (\ref{eqn:value_iteration}) converges to $\theta s$ as $a \rightarrow \infty$. 
A lower bound to this expression is $\theta s + e^{-\beta a}(-\theta s e^{\rho a})$.
With $\beta > \rho$, for each $s$, there exists 
$a(s)<\infty$ such that $\theta s + e^{-\beta a}(-\theta s e^{\rho a})> \psi(s)$ for 
all $a > a(s)$. 
But $\theta s + \inf_{a \geq 0} e^{-\beta a}\bigg(-\theta s e^{\rho a}+\xi+J_{\xi}^{(k)}\left(\frac{se^{\rho a}}{1+se^{\rho a}×}\right)\bigg)$ 
is equal to $J_{\xi}^{(k+1)}(s)<\psi(s)$. 
Hence, for any $s \in (0,1]$, the minimizers for (\ref{eqn:bellman_equation_simplified_in_a}) always 
lie in the compact interval $[0,a(s)]$ for all $k \geq 1$.
 Since $\psi(s)$ is continuous in $s$, 
we can choose $a(s)$ as a continuous function of $s$ on $(0,1]$.

\subsubsection{$\beta \leq \rho$}
Fix $A$, $0 <A <\infty$. Let $K:=\frac{1}{\beta A ×}\left(\xi+(e^{\rho A}-1)  \right)+(e^{\rho A}-1)$. 
Then, by Proposition \ref{prop:upper_bound_on_cost}, $J_{\xi}(s) \leq K$ for all $s \in (0,1]$. Now, we observe that the objective function 
(for minimization over $a$) in the R.H.S of (\ref{eqn:value_iteration}) is lower bounded by 
$\int_{0}^{a}\beta e^{-\beta z}s(e^{\rho z}-1)dz$, which is continuous in $s,a$ and goes to $\infty$ as $a \rightarrow \infty$ for 
each $s \in (0,1]$. Hence, 
for each $s \in (0,1]$, there exists $0<a(s)<\infty$ such that 
$\int_{0}^{a}\beta s e^{-\beta z}(e^{\rho z}-1)dz >2K$ 
for all $a>a(s)$ and $a(s)$ is continuous over $s \in (0,1]$. But $J_{\xi}^{(k+1)}(s) \leq J_{\xi}(s) \leq K$ for 
all $k$. Hence, the minimizers in (\ref{eqn:value_iteration}) always lie in $[0,a(s)]$ 
where $a(s)$ is independent of $k$ and continuous over $s \in (0,1]$.

Let us set $a(0)=a(1)$.\footnote{Remember that at state $0$ (i.e., state $\mathbf{e}$), 
the single stage cost is $0$ irrespective of the action, 
and that this state is absorbing. Hence, any action at state $0$ can be optimal.} Then, the chosen function $a(s)$ is 
continuous over $s \in (0,1]$ and can be discontinuous only at $s=0$.
Thus, condition (c) of Theorem \ref{thm:value_iteration_general} has 
been verified for the value iteration (\ref{eqn:value_iteration}). Condition (b) of Theorem \ref{thm:value_iteration_general} is 
obviously satisfied since a continuous function over a compact set always has a minimizer.
\qed

{\em Remark:} Observe that in our value iteration (\ref{eqn:value_iteration}) it is always sufficient 
to deal with compact action spaces, and the objective 
functions to be minimized at each stage of the value iteration are continuous in $s$, $a$. Hence, $\Gamma_{k}(s)$ is nonempty 
for each $s \in (0,1]$, $k \geq 0$.
Also, since there exists $K>0$ such that $J_{\xi}^{(k)}(s) \leq K$ 
for all $k \geq 0$, $s \in (0,1]$, it is sufficient to 
restrict the action space in (\ref{eqn:value_iteration}) 
to a set $[0, a(s)]$ for any $s \in (0,1]$, $k \geq 0$. Hence, $\Gamma_k (s) \subset [0,a(s)]$ 
for all $s \in (0,1]$, $k \geq 0$. 
Now, for a fixed $s \in (0,1]$, any sequence $\{a_k\}_{k \geq 0}$ with $a_k \in \Gamma_k (s)$, in bounded. Hence, the sequence 
must have a limit point. Hence, $\Gamma_{\infty}(s)$ is nonempty for each $s \in (0,1]$. Since $\Gamma_{\infty}(s) \subset \Gamma^*(s)$,
 $\Gamma^*(s)$ is nonempty for each $s \in (0,1]$.

\subsection{Proofs of Propositions \ref{prop:increasing_concave_in_s}, \ref{prop:increasing_concave_in_lambda} and \ref{prop:continuity_of_cost}}\label{appendix_subsection_policy-structure}

\subsubsection{Proof of Proposition \ref{prop:increasing_concave_in_s}}
Fix $\xi$. Consider the value iteration (\ref{eqn:value_iteration}).
Let us start with $J_{\xi}^{(0)}(s):=0$ for all $s \in (0,1]$. Clearly, $J_{\xi}^{(1)}(s)$ is concave and increasing in $s$, 
since pointwise infimum of linear functions is concave. Now let us assume that $J_{\xi}^{(k)}(s)$ 
is concave and increasing in $s$. Then, by the 
composition rule, it is easy to show that $J_{\xi}^{(k)}(\frac{se^{\rho a}}{1+se^{\rho a}×})$ is concave and 
increasing in $s$ for any fixed $a\geq 0$. Hence, 
$J_{\xi}^{(k+1)}(s)$ is concave and increasing, since pointwise infimum of a set of concave and increasing functions is 
concave and increasing. 
By Theorem \ref{theorem:convergence_of_value_iteration}, $J_{\xi}^{(k)}(s)\rightarrow J_{\xi}(s)$. Hence, $J_{\xi}(s)$ is concave and increasing in $s$.
\qed

\subsubsection{Proof of Proposition \ref{prop:increasing_concave_in_lambda}}
Consider the value iteration (\ref{eqn:value_iteration}). Since $J_{\xi}^{(0)}(s):=0$ for all $s \in (0,1]$,  
$J_{\xi}^{(1)}(s)$ is obtained by taking infimum (over $a$) of a linear, increasing function of $\xi$. 
Hence, $J_{\xi}^{(1)}(s)$ is concave, increasing over $\xi \in (0, \infty)$. 
If we assume that $J_{\xi}^{(k)}(s)$ is concave and increasing 
in $\xi$, then $J_{\xi}^{(k)}(\frac{se^{\rho a}}{1+se^{\rho a}×})$ is also concave and increasing 
in $\xi$ for fixed $s$ and $a$. Thus, $J_{\xi}^{(k+1)}(s)$ 
is also concave and increasing in $\xi$. Now, $J_{\xi}^{(k)}(s) \rightarrow J_{\xi}(s)$ for all $s \in \mathcal{S}$, 
 and $J_{\xi}^{(k)}(s)$ is concave, increasing in $\xi$ for all $k \geq 0$, $s \in \mathcal{S}$. 
Hence, $J_{\xi}(s)$ is concave and increasing in $\xi$.
\qed

\subsubsection{Proof of Proposition \ref{prop:continuity_of_cost}}
Clearly, $J_{\xi}(s)$ is continuous in $s$ over $(0,1)$, since otherwise concavity w.r.t. $s$ will be violated. 
Now, since $J_{\xi}(s)$ is concave in $s$ over $(0,1]$, we must have $J_{\xi}(1) \leq \lim_{s \uparrow 1}J_{\xi}(s)$. 
But since $J_{\xi}(s)$ 
is increasing in $s$, $J_{\xi}(1) \geq \lim_{s \uparrow 1}J_{\xi}(s)$. Hence, 
$J_{\xi}(1) = \lim_{s \uparrow 1}J_{\xi}(s)$. 
Thus, $J_{\xi}(s)$ is continuous in $s$ over $(0,1]$.

Again, for a fixed $s \in (0,1]$, $J_{\xi}(s)$ is concave and increasing in $\xi$. Hence, $J_{\xi}(s)$ is continuous in 
$\xi$ over $\xi \in (0,c),\, \forall \, c>0$. Hence, $J_{\xi}(s)$ is continuous in $\xi$ over $(0,\infty)$.
\qed

%
%
%

\bibliographystyle{IEEEtran}
\bibliography{IEEEabrv,chattopadhyay-etal12capacity-max-relay-placement-line}

\remove{

\vspace{-3.5mm}
{\footnotesize

}

}

\end{document}